\newcommand{\ve}[1]{{\bf #1}}
\newcommand{\pp}[1]{\ensuremath{\mathbb P\left\{#1\right\}}}
\newtheorem{lemma}{Lemma}
\newtheorem{theorem}{Theorem}
\newtheorem{definition}{Definition}
\newtheorem{corollary}{Corollary}
\newtheorem{remark}{Remark}
\newcommand{\E}[1]{{\rm I\kern-.3em E}\left[#1\right]}
\newcommand{\var}[1]{\mbox{Var}\left[#1\right]}
\begin{document}

\sloppy
\IEEEoverridecommandlockouts
\title{Typical sumsets of linear codes}

\author{Jingge Zhu and Michael Gastpar\IEEEmembership{, Member, IEEE}
\thanks{This work was supported in part by the European ERC Starting Grant 259530-ComCom. }
\thanks{J. Zhu and M. Gastpar are with the School of Computer and Communication Sciences, Ecole Polytechnique F{\'e}d{\'e}rale de Lausanne (EPFL), Lausanne,
Switzerland (e-mail: jingge.zhu@epfl.ch, michael.gastpar@epfl.ch).}
%
%
}




\maketitle

\begin{abstract}
Given two identical linear codes $\mathcal C$ over $\mathbb F_q$ of length $n$,  we independently pick one codeword from each codebook uniformly at random. A \textit{sumset} is formed by adding these two codewords entry-wise as integer vectors and a sumset is called \textit{typical}, if the sum  falls inside this set with high probability.  We ask the question: how large is the typical sumset for most codes? In this paper we characterize the asymptotic size of such typical sumset. We show that when the rate $R$ of the linear code is below a certain threshold $D$, the typical sumset size is roughly $|\mathcal C|^2=2^{2nR}$ for most codes while when $R$ is above this threshold, most codes have a typical sumset whose size is roughly $|\mathcal C|\cdot 2^{nD}=2^{n(R+D)}$ due to the linear structure of the codes.  The threshold $D$ depends solely on the alphabet size $q$ and takes value in $[1/2, \log \sqrt{e})$.  More generally, we completely characterize the  asymptotic size of  typical sumsets of two nested linear codes $\mathcal C_1, \mathcal C_2$ with different rates.   As an application of the result, we study the communication problem where the integer sum of two codewords is to be decoded through a general two-user multiple-access channel.  
\end{abstract}

\section{Introduction}
\label{sec:intro}

Structured codes (linear codes for example) not only permits  simple encoding and decoding algorithms, but also provides good interference mitigation properties which are crucial for multi-user communication networks.  Specialized to Gaussian wireless networks, lattice codes, which can be seen as linear codes (which are the most well-understood structured codes) lifted to Euclidean space \cite{Erez_etal_2005}, have been studied extensively.  Early results on lattice codes including \cite{loeliger_averaging_1997} \cite{urbanke_lattice_1998} \cite{ErezZamir_2004} have shown that good (nested) lattice codes are able to achieve the capacity of point-to-point Gaussian channels. Lattice codes are also applied to Gaussian networks, for example the Gaussian two-way relay channel (\cite{Nam_etal_2010}\cite{wilson_joint_2010}), and yield best known communication rates that cannot be achieved otherwise. More recently, the \textit{compute-and-forward} \cite{NazerGastpar_2011} framework employs nested lattice codes in a general Gaussian wireless network. It exploits the additivity of the network by  addressing the problem of decoding  sums of lattice codewords at  intermediate nodes in the network. Furthermore  nested linear codes (see  \cite{padakandla_computing_2013}, \cite{miyake_2010} for example), which can be seen as a generalization of nested lattice codes, are applicable to  general multi-user networks other than Gaussian networks.

Consider applying the simplest structured codes --- linear codes,  to a standard two-user Gaussian multiple access channel (MAC) of the form $Y=X_1+X_2+Z$. Existing coding schemes using structured codes usually consider two codewords $T_1^n, T_2^n$ in some vector space over a finite field, say $\mathbb F_q^n$, and require the entry-wise modulo sum $T_1^n\oplus T_2^n$ to be decoded at the receiver.  But for the Gaussian MAC it is more natural to study the ``integer sum" $T_1^n+T_2^n$, where two codewords are treated as integer-valued vectors. This is because after lifting  linear codes from the $\mathbb F_q^n$ to $\mathbb R^n$, the additive Gaussian channel sums up $T_1^n, T_2^n$ as vectors of real numbers instead of in a finite field.  The modulo sum $T_1^n\oplus T_2^n$ is easy to understand: if $T_1^n, T_2^n$ are uniformly chosen from a linear code, the sum $T_1^n\oplus T_2^n$ stays in that linear code and is still uniformly distributed. But the analysis of the integer sum $T_1^n+T_2^n$ is more complicated and its behavior have not been studied. 

To put our study in perspective, it is worth pointing out that our problem is closely connected to \textit{sumset theory}, which studies the size of the set $\mathcal A+\mathcal B:=\{a+b: a\in\mathcal  A, b\in\mathcal  B\}$ where  $\mathcal A,\mathcal B$ are two finite sets taking values in some additive group. One objective of the sumset theory is to use \textit{sumset inequalities} to relate the cardinality of sets $|\mathcal A|, |\mathcal B|$ and $|\mathcal A+\mathcal B|$. As a simple example, for $\mathcal A=\{0,1,2,3,4\}$ with $5$ elements we have $|\mathcal A+\mathcal A|=9$ elements. But if  let $\mathcal A'=\{0,0.2,0.8,1.1,2.1\}$ with $5$ elements we have $|\mathcal A'+\mathcal A'|=15$ elements. This shows that the sumset size $|\mathcal A+\mathcal B|$ depends heavily on structures of the sets. As a rule of thumb, the sumset size will be small if and only if the individual sets are ``structured".  Some classical results of sumset theory and inverse sumset theory can be found in, e.g.  \cite{ruzsa_sumsets_2009}.

Our problem concerns with  \textit{sums of random variables} defined over a certain set,  hence can be viewed as a sumset problem in a probabilistic setting. It shares similarity with the classical sumset problem while has its  own feature. We first point out the main difference between the two problems. Given a set of integers $\mathcal U=\{0,1,\ldots,q-1\}$,  the sumset $\mathcal U+\mathcal U$ contains $2q-1$ elements. Now let  $U_1, U_2$ be two independent random variables uniformly distributed in the set $\mathcal U$, a natural connection between the size of the set $\mathcal U$ and the random variables $U_1, U_2$ is that $H(U_1)=H(U_2)=\log |\mathcal U|$, i.e., the entropy of the random variable is equal to the logarithmic size of $\mathcal U$. Now we turn to the sum variable $W:=U_1+U_2$. Although $W$ takes \textit{all} possible values in $\mathcal U+\mathcal U$,  it is   ``smaller" than $\log|\mathcal U+\mathcal U|$  because the distribution of $W$ is non-uniform over $\mathcal U+\mathcal U$. Indeed we have $H(W)< \log|\mathcal U+\mathcal U|$ in this case but the difference between $H(W)$ and $\log |\mathcal U+\mathcal U|$ is small.  However this phenomenon is much more pronounced in high dimensional spaces as we shall see later in this paper.  On the other hand it is also important to realize that in the probabilistic setting, the structure of the random variable still has decisive impact on the sumset ``size", which can be partially characterized by the entropy of the sum variable. Using the examples in the preceding paragraph, if the identical independent random variables $U_1,U_2$ are  uniformly distributed in $\mathcal A$, we have $H(U_1+U_2)\approx 2.99$ bit while if $U'_1,U'_2$ uniformly distributed in $\mathcal A'$, it gives $H(U'_1+U'_2)\approx 3.84$ bit.  We also point out that the sumset theory for  Shannon entropy has been studied recently in e.g. \cite{tao_sumset_2010}  \cite{kontoyiannis_sumset_2014} and fundamental results relating $H(X)$ and $H(X_1+X_2)$ are established. However our specific problem about linear codes in high-dimensional spaces requires separate analysis which is not present in the existing literature.




In this paper, we consider two linear codes $\mathcal C_1, \mathcal C_2$ with rates $R_1, R_2$ while satisfying the condition $\mathcal C_1\subseteq\mathcal C_2$ or $\mathcal C_2\subseteq\mathcal C_1$. Let $T_1^n,T_2^n$ be   two codewords uniformly chosen from $\mathcal C_1,\mathcal C_2$ and we would like to understand what does the sum $W^n:=T_1^n+T_2^n$ look like in $\mathbb Z^n$ for very large $n$. We will show that when the dimension $n$ goes to infinity, most sums $T_1^n+T_2^n$ will fall into a subset $\mathcal K$, which could be substantially smaller than the sumset $\mathcal C_1+\mathcal C_2$.  We characterize the asymptotic size of $\mathcal K$ completely and show certain thresholds effects of the size $|\mathcal K|$ depending on the values of $R_1, R_2$. We also established the exact relationship between the $H(T_1^n,H_2^n)$ and $H(T_1^n+T_2^n)$ in the limit and show that the difference between $H(T_1^n+T_2^n)$ and $\log|\mathcal C_1+\mathcal C_2 |$ can increase unboundedly as the codewod length $n$ increases.  As an application of the results, we study the problem of decoding the integer sum of codewords through a general two-user Gaussian MAC when two users are equipped with two linear codes.

\section{Typical sumsets of linear codes}
\label{sec:sumsets}
In this section we  formally define and study typical sumsets of linear codes. 

\subsection{Preliminaries and notations}
\label{sec:notation}
We use $[a:b]$ to denote the set of integers $\{a,a+1,\ldots, b-1, b\}$ and define two sets $\mathcal U:=[0:q-1]$ and $\mathcal W:=[0:2q-2]$. We also define $P_U$ to be the uniform probability distribution over the  set $\mathcal U$  i.e., 
\begin{align}
P_U(a)=1/q \mbox{ for all } a\in\mathcal U.
\label{eq:P_U}
\end{align}
If $U_1, U_2$ are two independent random variables with distribution $P_U$,  the sum $W:=U_1+U_2$ is a random variable distributed over the  set $\mathcal W$.  Let $P_W$ denote the probability distribution of this random variable. A direct calculation shows that 
\begin{align}
P_W(a)=\begin{cases}
\frac{a+1}{q^2} & a\in[0:q-1]\\
\frac{2q-1-a}{q^2} &a\in[q:2q-2]
\end{cases}
\label{eq:P_W}
\end{align}
and the entropy of $W$ is given as
\begin{align}
H(W)=2\log q-\frac{1}{q^2}(2\sum_{i=1}^qi\log i-q\log q).
\label{eq:H_W}
\end{align}
Given a probability distribution $P_U$ over the alphabet $\mathcal U$, we use $\mathcal A_{[U]}^{(n)}$ to denote the set of typical sequences defined as:
\begin{align}
\mathcal A_{[U]}^{(n)}:=\left\{\ve m:\left|P_U(a)-\frac{1}{n}N(a|\ve m)\right |\leq\delta, \mbox{ for all } a\in\mathcal U\right\}
\label{eq:typical_def}
\end{align}
where $N(a|\ve m)$ is the occurrence count of the symbol $a$ in sequence $\ve m=(\ve m_1,\ldots,\ve m_n)$. In the paper we will always choose $\delta$ small but satisfying $n\delta^2 \rightarrow \infty$  as $n\rightarrow \infty$. Similarly we can define the conditional typical sequences $\mathcal A_{[Z|U]}^{(n)}(\ve u)$ as well as the typical sequences $\mathcal A_{[Z U]}^{(n)}$ determined by a joint distribution $P_{ZU}$ as in \cite[Ch. 2]{csiszar_information_2011}. We recall the standard results regarding the typical sequences. 
\begin{lemma}[Typical sequences \cite{csiszar_information_2011}]
Let $U^n$ be a  $n$-length random vector with each entry i.i.d. according to $P_U$. Then for every $\delta>0$ in (\ref{eq:typical_def}), it holds that
\begin{align}
\pp{U^n\in\mathcal A_{[U]}^{(n)}}\geq 1-2|\mathcal U|e^{-2n\delta^2}
\end{align}
Furthermore, the size of set of typical sequences is bounded as
\begin{align}
2^{n(H(U)-\epsilon_n)}\leq |\mathcal A_{[U]}^{(n)}|\leq 2^{n(H(U)+\epsilon_n)}
\end{align}
for some $\epsilon_n\searrow 0$ as $n\rightarrow \infty$.
\label{lemma:typical_sequence}
\end{lemma}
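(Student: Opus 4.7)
The plan is to establish the two claims in the order stated: first the high-probability bound by concentration of empirical frequencies, then the cardinality bounds by a standard entropy sandwich.

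For the probability bound, I would fix an arbitrary symbol $a \in \mathcal{U}$ and note that $N(a|U^n) = \sum_{i=1}^n \mathbf{1}\{U_i = a\}$ is a sum of $n$ i.i.d.\ Bernoulli$(P_U(a))$ random variables with mean $nP_U(a)$. Applying Hoeffding's inequality to the bounded summands gives
\begin{align}
\pp{\left|\tfrac{1}{n}N(a|U^n) - P_U(a)\right| > \delta} \leq 2e^{-2n\delta^2}.
\end{align}
A union bound over all $a \in \mathcal{U}$ then yields $\pp{U^n \notin \mathcal{A}_{[U]}^{(n)}} \leq 2|\mathcal{U}|e^{-2n\delta^2}$, which is the first claim.

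For the cardinality bounds, the key observation is that every typical sequence has a tightly controlled probability. For any $\ve m \in \mathcal{A}_{[U]}^{(n)}$, the i.i.d.\ structure gives
\begin{align}
-\log P_{U^n}(\ve m) = -\sum_{a \in \mathcal{U}} N(a|\ve m) \log P_U(a),
\end{align}
and using $|N(a|\ve m)/n - P_U(a)| \leq \delta$ termwise, one obtains $|{-\tfrac{1}{n}\log P_{U^n}(\ve m)} - H(U)| \leq \epsilon_n$ where $\epsilon_n := \delta \sum_{a} |\log P_U(a)|$, which vanishes as $\delta \to 0$. Equivalently, $2^{-n(H(U)+\epsilon_n)} \leq P_{U^n}(\ve m) \leq 2^{-n(H(U)-\epsilon_n)}$ uniformly over typical $\ve m$. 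The upper bound on $|\mathcal{A}_{[U]}^{(n)}|$ then follows from $1 \geq \sum_{\ve m \in \mathcal{A}_{[U]}^{(n)}} P_{U^n}(\ve m)$, and the lower bound follows from combining $\pp{U^n \in \mathcal{A}_{[U]}^{(n)}} \geq 1 - 2|\mathcal{U}|e^{-2n\delta^2}$ from the first part with the upper bound on individual probabilities.

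The only mildly delicate point is the interplay between the slack $\delta$ and the $n$-dependence hidden in $\epsilon_n$: one needs $\delta = \delta_n \to 0$ slowly enough to keep $n\delta_n^2 \to \infty$ (so the Hoeffding bound still decays) while fast enough to make $\epsilon_n \to 0$. The condition $n\delta^2 \to \infty$ imposed in the paragraph preceding the lemma is precisely what is needed, so no real obstacle arises; the result is entirely standard and amounts to repackaging Hoeffding's inequality together with the AEP-style estimate on $P_{U^n}(\ve m)$.
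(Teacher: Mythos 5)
The paper states this lemma as a standard result, citing Csiszár--Körner, and supplies no proof of its own, so there is no in-paper argument to compare against. Your derivation is correct and self-contained: Hoeffding plus a union bound over the $|\mathcal U|$ symbols gives exactly the claimed $1-2|\mathcal U|e^{-2n\delta^2}$, and the two-sided estimate $2^{-n(H(U)+\epsilon_n)}\leq P_{U^n}(\ve m)\leq 2^{-n(H(U)-\epsilon_n)}$ on typical sequences (valid here since every $P_U(a)>0$) yields the cardinality bounds in the standard way. You correctly flag the one delicate point, namely that $\delta=\delta_n$ must go to zero (so $\epsilon_n\searrow 0$) while $n\delta_n^2\to\infty$ (so the concentration still bites and the factor $1-2|\mathcal U|e^{-2n\delta_n^2}$ can be absorbed into the lower-bound exponent), which is precisely the regime the paper fixes just before the lemma. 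The Csiszár--Körner text itself proves such estimates via the method of types (counting type classes near $P_U$ and bounding each class with Stirling), whereas you argue via Hoeffding and an AEP-style probability sandwich; both routes are standard, equivalent in strength for this statement, and yours is somewhat shorter for the precise constant in the probability bound.
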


In this paper vectors and matrices are denoted using bold letters such as $\ve a$ and $\ve A$, respectively. The $i$-th entry of a vector $\ve a$ is denoted as  $\ve a_i$ and $\ve A_i$ denotes the $i$-th column of the matrix $\ve A$. Throughout the paper, the notations $\ve A\ve b$ or $\ve a^T\ve b$ are understood as matrix multiplication \textit{modulo $q$}, or the matrix multiplication over the corresponding finite field. Modulo addition is denoted with $\oplus$ and $+$ means the usual addition over integers. Logarithm $\log$ is with base $2$. Sets are usually denoted using calligraphic letters such as $\mathcal A$ and their cardinality are denoted by $|\mathcal A|$. We often deal with quantities depending on the codeword length $n$. The notation $o_n(1)$ denotes a quantity that approaches $0$ as $n\rightarrow \infty$. We say $a\doteq 2^{nb}$ for some constant $b$ if there exists some $\epsilon_n\searrow 0$ such that $2^{n(b-\epsilon_n)}\leq a\leq 2^{n(b+\epsilon_n)}$. We also consider the probability of events in the limit when the codeword length $n$ goes to infinity. For any event $H$, we say the event $H$ occurs \textit{asymptotically almost surely} (a.a.s.) if $\pp{H}\rightarrow 1$ as $n\rightarrow\infty$. 

\subsection{Problem statement and main results}

Given two positive integers $k, n$ satisfying $k<n$, an $(n,k)$-linear code over $\mathbb F_q$ is a $k$-dimensional subspace in $\mathbb F_q^n$ where $q$ is a prime number. The \textit{rate} of this code is given by $R:=\frac{k}{n}\log q$. Any $(n,k)$-linear code can be constructed as
\begin{align}
\mathcal C=\left\{\ve t :\ve t=\ve G\ve m, \mbox{ for all }\ve m\in \mathbb F_q^k\right\}
\label{eq:linear_codes}
\end{align}
with a \textit{generator matrix} $\ve G\in\mathbb F_q^{n\times k}$ and $\ve m$ can be thought as a \textit{message}. An $(n,k)$-linear code $\mathcal C$ over $\mathbb F_q$ is called \textit{systematic} if it can be constructed as 
\begin{align}
\mathcal C=\left\{\ve t :\ve t=\begin{bmatrix}
\ve I_{k\times k}\\
\ve Q
\end{bmatrix}\ve m, \mbox{ for all }\ve m\in \mathbb F_q^k\right\}
\label{eq:systematic}
\end{align}
with some $\ve Q\in\mathbb F_q^{(n-k)\times k}$ where $\ve I_{k\times k}$ is the $k\times k$ identity matrix.

We are interested in the sumset of two codebooks. More precisely,  let $k_2\leq k_1\leq n$ and use $\ve m\in\mathbb F_q^{k_1}$, $\ve n'\in\mathbb F_q^{k_2}$ to denote two different messages.  We concatenate the messages of the codebook with the smaller rate as $\ve n:=
[^{\ve n'}_\ve 0] $ where $\ve 0$ is a zero vector of length $k_1-k_2$. Two codebooks are generated as
\begin{subequations}
\begin{align}
\mathcal C_1&:=\left\{\ve t :\ve t=\ve G\ve m, \mbox{ for all }\ve m\in \mathbb F_q^{k_1}\right\}\\
\mathcal C_2&:=\left\{\ve v :\ve v=\ve G\ve n=\ve G\begin{bmatrix}
\ve n' \\ 
\ve 0
\end{bmatrix}, \mbox{ for all }\ve n'\in \mathbb F_q^{k_2} \right\}
\end{align}
\label{eq:nestedcodes}
\end{subequations}
with some matrix $\ve G\in\mathbb F_q^{n\times k_1}$. Since the two codebooks are generated with the common generator matrix $\ve G$, we have $\mathcal C_2\subseteq\mathcal C_1$ and these two codebooks are called \textit{nested}. The rates of these two codebooks are $R_1:=\frac{k_1}{n}\log q, R_2:=\frac{k_2}{n}\log q$, respectively. 

From now on we will view $\mathcal C_1,\mathcal C_2$ as sets of $n$-length integer-valued vectors taking values in $\mathcal U^n$ where $\mathcal U:=\{0,\ldots,q-1\}$. The sumset of two linear codes is defined as
\begin{align}
\mathcal C_1+\mathcal C_2:=\{\ve t + \ve v : \ve t\in\mathcal C_1, \ve v\in C_2\}
\label{eq:c+c}
\end{align}
where the addition is understood as the addition in $\mathbb Z$ and is performed element-wise between the two $n$-length vectors. Hence each element in $\mathcal C_1+\mathcal C_2$ takes value in  $\mathcal W^n$ where $\mathcal W:=\{0,\ldots,2q-2\}$.    Let $T_1^n, T_2^n$  denote two random variables taking values in the code $\mathcal C_1,\mathcal C_2$ with uniform distribution, i.e.
\begin{subequations}
\begin{align}
\pp{T_1^n=\ve t}&=q^{-k_1} \mbox{ for all }\ve t\in \mathcal C_1\\
\pp{T_2^n=\ve v}&=q^{-k_2} \mbox{ for all }\ve v\in \mathcal C_2
\end{align}
\label{eq:rv_T}
\end{subequations}
The sum codewords $T_1^n+T_2^n$  is also a random vectors taking values in $\mathcal C_1+\mathcal C_2$. There is a natural distribution on $\mathcal C_1+\mathcal C_2$ induced by $T_1^n,T_2^n$, which is formally defined as follows.
\begin{definition}[Induced distribution on $\mathcal C_1+\mathcal C_2$]
Given two codebooks $\mathcal C_1,\mathcal C_2$ and assume $T_1^n,T_2^n$ are two uniformly distributed vectors defined as in (\ref{eq:rv_T}). We use $P_{\mathcal S}$ to denote the  distribution on $\mathcal C_1+\mathcal C_2$ which is induced from the distribution of $T_1^n,T_2^n$.
\label{def:P_S}
\end{definition}

 The object of interest in this paper is given in the following definition.
\begin{definition}[Typical sumset]
Let $\mathcal C_j^{(n)}, j=1,2$ be a sequence of  linear codes indexed by their dimension. Let $T_1^n,T_2^n$  be two independent random variables uniformly distributed in $\mathcal C_1^{(n)}, \mathcal C_2^{(n)}$ as in (\ref{eq:rv_T}).  A sequence of subsets $\mathcal K^{(n)}\subseteq \mathcal C_1^{(n)}+\mathcal C_2^{(n)}$ is called  \textit{typical sumsets} of $\mathcal C_1^{(n)}$, $\mathcal C_2^{(n)}$, if $T_1^n+T_2^n\in \mathcal K^{(n)}$ asymptotically almost surely, i.e., $\pp{T_1^n+T_2^n\in\mathcal  K^{(n)}}\rightarrow 1$ as $n\rightarrow\infty$.
\label{def:typical_sumset}
\end{definition}

To make notations easier, we sometimes often drop the dimension $n$ and say $\mathcal K$ is a typical sumset of $\mathcal C_1$, $\mathcal C_2$, with the understanding that a sequence of codes are considered as in Definition \ref{def:typical_sumset}. Clearly the sumset $\mathcal C_1+\mathcal C_2$ is always a typical sumset according to the definition because all possible $T_1^n+T_2^n$ must fall inside it. However we will show that for almost all linear codes,  most sum codewords $T_1^n+T_2^n$ will fall into a subset $\mathcal K$ which could be much smaller than $\mathcal C_1+\mathcal C_2$ by taking the probability distribution of $T_1^n$ and $T_2^n$ into account. In fact, we will consider a  more general case when one codebook is (possibly) shifted to a coset by a fixed vector. Assume $\mathcal C_1$ is shifted to $\mathcal C_1'$ with any fixed vector $\ve d$ as
\begin{align}
\mathcal C_1'=\mathcal C_1\oplus \ve d:=\{\ve t\oplus \ve d: \ve t\in\mathcal C_1\}
\label{eq:coset},
\end{align}
the following theorem states the main result in this section.


\begin{theorem}[Normal typical sumsets]
Let $\mathcal C_1^{(n)}, \mathcal C_2^{(n)}$ be two sequences of linear codes in $\mathbb F_q^n$ indexed by their dimension with rate $R_j:=\lim_{n\rightarrow\infty}\frac{1}{n}\log|\mathcal C_j|, j=1,2$.  For any fixed vector $\ve d\in\mathbb F_q^n$ we define $\mathcal C_1'^{(n)}:=\mathcal C_1^{(n)}\oplus \ve d$ as in (\ref{eq:coset}). Consider the case when $\mathcal C_1^{(n)}, \mathcal C_2^{(n)}$ are generated as in (\ref{eq:nestedcodes}) with the same generator matrix $\ve G$ and assume without loss of generality that $\mathcal C_2^{(n)}\subseteq\mathcal C_1^{(n)}$. If each entry of the generator matrix $\ve G$ is independent and identically chosen according to the uniform distribution $P_U$,  then asymptotically almost surely there exists a sequence of typical sumsets $\mathcal K_N^{(n)}\subseteq \mathcal C_1'^{(n)}+\mathcal C_2^{(n)}$  whose sizes satisfy
\begin{align}
|\mathcal K_N^{(n)}|&\doteq \min\left\{2^{n(R_1+R_2)}, 2^{n\left(\max\{R_1,R_2\}+D(q)\right)}\right\}\label{eq:size_sumset}\\
D(q)&:=H(U_1+U_2)-\log q
\label{eq:Dq}
\end{align}
where $U_1, U_2$ are independent random variables with the uniform distribution $P_U$ in (\ref{eq:P_U}).
Furthermore  for all $\ve w\in \mathcal K_N^{(n)}$, the induced distribution $P_{\mathcal S}$ defined in Definition \ref{def:P_S} satisfies
\begin{align}
P_{\mathcal S}(\ve w)\doteq \max\left\{2^{-n(R_1+R_2)}, 2^{-n\left(\max\{R_1,R_2\}+D(q)\right)}\right\}.
\label{eq:AEP}
\end{align}
where $P_{\mathcal S}$ is the induced probability distribution on $\mathcal C_1'^{(n)}+\mathcal C_2^{(n)}$.
\label{thm:size_sumset}
\end{theorem}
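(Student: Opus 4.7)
The plan is to analyze the preimage count $N(\ve w) := |\{(\ve t_1,\ve t_2)\in\mathcal C_1'\times\mathcal C_2 : \ve t_1+\ve t_2=\ve w\}|$ of a typical integer sum $\ve w$. Since $P_{\mathcal S}(\ve w) = N(\ve w)/(|\mathcal C_1'|\cdot|\mathcal C_2|)$ and the typical sumset will carry mass $\to 1$, proving the exponential identity $N(\ve w)\doteq\max\{1,\,2^{n(R_2-D(q))}\}$ for typical $\ve w$ a.a.s.\ over $\ve G$ yields both (\ref{eq:AEP}) and (\ref{eq:size_sumset}): the first by substitution, the second because $|\mathcal K_N^{(n)}|\doteq P_{\mathcal S}(\ve w)^{-1}$. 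WLOG $\mathcal C_2\subseteq\mathcal C_1$ so $R_2\leq R_1=\max\{R_1,R_2\}$, and the crossover $R_2=D(q)$ separates the two cases on the right-hand side of (\ref{eq:size_sumset}).

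Stage 1 (typicality of $W$). For random $\ve G$ with i.i.d.\ uniform entries, the pair $(T_{1,i}',T_{2,i})$ at each coordinate $i$ is marginally uniform on $\mathbb F_q\times\mathbb F_q$ whenever the $i$th row of $\ve G$ is non-degenerate (which holds simultaneously for all $i$ a.a.s.); the coset shift $\ve d$ only permutes the first marginal. Hence $W_i\sim P_W$, and pairwise near-independence of $(W_i,W_j)$ for $i\neq j$ combined with a Chebyshev estimate on the empirical counts $N(a|W)/n$ yields $W\in\mathcal A_{[W]}^{(n)}$ asymptotically almost surely.

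Stage 2 (structural reduction). Write $\ve w=S+q\ve c$ with $S\in\mathcal U^n$ equal to $\ve w\bmod q$ elementwise and $\ve c\in\{0,1\}^n$ the carry pattern. Then $S\in\mathcal C_1'$, and $T_2\mapsto (S\ominus T_2,T_2)$ is a bijection onto the pairs in $\mathcal C_1'\times\mathcal C_2$ with modular sum $S$. Matching $c_i=\mathbb{1}[T_{2,i}>S_i]$ restricts $T_{2,i}$ to an interval $\mathcal A(w_i)\subseteq\mathcal U$ of size equal to the integer-addition preimage count of $w_i$, giving
\begin{equation*}
N(\ve w)=|\mathcal C_2\cap\mathcal B(\ve w)|,\qquad \mathcal B(\ve w):=\prod_{i=1}^{n}\mathcal A(w_i).
\end{equation*}
For typical $\ve w$, the identity $|\mathcal A(a)|=q^2 P_W(a)$ from (\ref{eq:P_W}) together with (\ref{eq:Dq}) yields $|\mathcal B(\ve w)|\doteq 2^{n(\log q-D(q))}$ via $\sum_a P_W(a)\log(q^2P_W(a))=2\log q-H(U_1+U_2)=\log q-D(q)$.

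Stage 3 (second moment over $\ve G$). The first moment is $\E{|\mathcal C_2\cap\mathcal B(\ve w)|}\doteq |\mathcal C_2|\cdot|\mathcal B(\ve w)|/q^n\doteq 2^{n(R_2-D(q))}$. When $R_2>D(q)$, pairs of $\mathbb F_q$-linearly independent vectors in $\mathcal B(\ve w)$ lie simultaneously in $\mathcal C_2$ with probability $\approx(|\mathcal C_2|/q^n)^2$, so the variance is dominated by the first moment; Chebyshev plus a union bound over the polynomially many types in $\mathcal A_{[W]}^{(n)}$ then forces $N(\ve w)\doteq 2^{n(R_2-D(q))}$ for every typical $\ve w$ a.a.s. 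When $R_2<D(q)$, the first moment is $o(1)$, and a first-moment bound on the number of collision $4$-tuples $((\ve t_1,\ve t_2),(\ve t_1',\ve t_2'))\in(\mathcal C_1'\times\mathcal C_2)^2$ with equal integer sum — decomposing $(\ve t_1\ominus\ve t_1',\ve t_2\ominus\ve t_2')$ into modular coset shifts together with a carry compatibility constraint — shows $N(\ve w)=1$ on a $P_{\mathcal S}$-set of mass $1-o(1)$. The main obstacle is the concentration when $R_2>D(q)$: the set $\mathcal B(\ve w)$ is a Cartesian product of intervals rather than an $\mathbb F_q$-subspace, so the textbook ``random linear code meets a fixed subspace'' second-moment calculation does not apply directly, and the contribution of scalar-multiple pairs $(\ve u,\lambda\ve u)\in\mathcal B(\ve w)^2$ to $\E{N(\ve w)^2}$ must be bounded by exploiting the product structure of $\mathcal B$; moreover the concentration must hold uniformly in $\ve w$, so the union bound has to be routed through types rather than individual $\ve w$'s.
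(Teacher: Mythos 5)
Your structural reduction $N(\ve w)=|\mathcal C_2\cap\mathcal B(\ve w)|$ is elegant and correct (it uses $\mathcal C_2\subseteq\mathcal C_1$ so that $T_1=S\ominus T_2\in\mathcal C_1'$ automatically, and the carry pattern $\ve c$ is determined by $\ve w$ alone), and it compresses into one step what the paper does in stages via the systematic form (Lemma~\ref{lemma:equivalence} followed by Lemmas~\ref{lemma:information_sum}--\ref{lemma:size_p2}: info-sum, parity-sum ${\sf p}_1$, then ${\sf p}_2$ constrained to $\{a_i,a_i+q\}$). The identity $|\mathcal B(\ve w)|\doteq 2^{n(\log q-D(q))}$ and the first-moment computation are also right.

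The gap is in Stage~3, and you half-name it yourself. The event ``$\ve w$ is attainable'' forces $\ve w\bmod q\in\mathcal C_1'$, which is a $\ve G$-dependent event, and $\mathcal C_2$ is built from the same $\ve G$; so you cannot treat $\mathcal B(\ve w)$ as a fixed target set and $\mathcal C_2$ as a fresh random code when you compute $\E{|\mathcal C_2\cap\mathcal B(\ve w)|}$ and its second moment. Your proposed fix, a union bound over polynomially many types, does not remove this dependence: for a fixed typical type $\tau$ there are still exponentially many $\ve w$ of type $\tau$ whose mod-$q$ reduction lies in the random coset $\mathcal C_1'$, and the concentration of $|\mathcal C_2\cap\mathcal B(\ve w)|$ must be proved conditionally on $\ve w\bmod q$ being a codeword, which ties up part of $\ve G$. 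Moreover, even for a fixed $\ve w$, $\mathcal B(\ve w)$ is a box, not a subspace, and pairs $\ve u,\lambda\ve u\in\mathcal B(\ve w)$ (with $\lambda\neq 1$) land in $\mathcal C_2$ with probability $\approx q^{k_2-n}$ rather than $q^{2(k_2-n)}$; you flag this but do not bound their contribution. The paper sidesteps both problems at once by passing to systematic codes: there the first $k_2$ coordinates of the sum are $\ve m_1\oplus\ve d_1+\ve n'$, a deterministic function of the messages and independent of the random matrices $(\ve Q,\ve H)$, so conditioning on the information-sum $\ve s$ costs no randomness; and the remaining ``parity'' coordinates are analyzed one column at a time (Lemma~\ref{lemma:2values} pins ${\sf p}_{2,i}$ to $\{a_i,a_i+q\}$, turning the box constraint into a binary one), which keeps the second-moment indicators $Z_{\ell,i}$ independent across $i$ and makes $\var{Z_2}\leq\E{Z_2}$ immediate. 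To rescue your route you would need either an explicit argument that, conditioned on $\ve w\bmod q\in\mathcal C_1'$, the residual randomness of $\ve G$ still makes $\mathcal C_2\cap\mathcal B(\ve w)$ behave as if $\mathcal C_2$ were fresh, or you would need to reintroduce a systematic-like change of basis, at which point you have rederived the paper's decomposition.
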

\begin{proof}
A proof of the theorem is given in Section \ref{sec:proof}. In Appendix \ref{sec:Dq} we show that $D(q)$ is an increasing function of $q$  and
\begin{align}
1/2\leq D(q) < \log \sqrt{e}\approx 0.7213
\end{align}
where the lower bound holds for $q=2$ and the upper bound is approached with $q\rightarrow\infty$.
\end{proof}

\begin{remark}
We point out that there exist linear codes which possess  (exponentially) smaller or larger typical sumsets than $\mathcal K_N$ in (\ref{eq:size_sumset}). For example $|\mathcal C_1+\mathcal C_2|$ is always larger or equal to $|\mathcal K_N|$ and we will give an example of a smaller typical sumset  in Section \ref{sec:proof}, Remark \ref{remark:small_sumset}.  To distinguish the specific typical sumset $\mathcal K_N$ in Theorem \ref{thm:size_sumset} from other possible typical sumsets, we will call $\mathcal K_N$ a \textit{normal typical sumset}. Theorem \ref{thm:size_sumset} shows that  randomly generated linear codes  a.a.s. have  a normal typical sumset $\mathcal K_N$.
\end{remark}

To help us visualize  the rather complicated expression in (\ref{eq:size_sumset}), Figure \ref{fig:K_R1R2} depicts the size of the typical sumset $\mathcal K_N$. It is also instructive to see how the typical sumset size grows if we fix the rate of one codebook and vary the rate of the other. In Figure \ref{fig:K_fixR1} we fix  $R_1$ and plot the size of the normal typical sumset for different $R_2$. Depending on the range of $R_1$, there are two cases where we have different behaviors of the typical sumset size as $R_2$ increases.   It is worthy to point out  the ``saturation" behavior on the size of the typical sumsets. For example let $R_1$ be its maximal value $\log q$ and increase $R_2$ from $0$ to $\log q$, the typical sumset size increases until $R_2$ reaches $D(q)$, but stays unchanged afterwards. It means for $R_1=\log q$ and $R_2=D(q)$, all possible sum codewords have already appeared in the typical sumset, and adding more codewords to $\mathcal C_2$ will not create new sum codewords in the typical sumset.

\begin{figure}[!htb]
\centering
\includegraphics[scale=0.5]{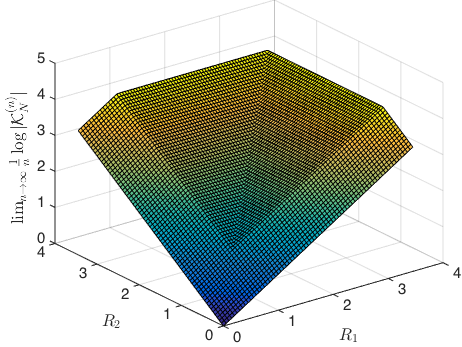}
\caption{The asymptotic size of the typical sumset $\lim_{n\rightarrow\infty}\frac{1}{n}\log|\mathcal K_N|$ in (\ref{eq:size_sumset}) as a function of $R_1, R_2$  (In this plot we set $q=11$).} 
\label{fig:K_R1R2}
\end{figure}

\begin{figure*}[htbp!]
   \centerline{\subfigure[For a fixed $R_1$ in the range $0\leq R_1\leq D(q)$, the normalized size of the typical sumset takes the form $\lim_{n\rightarrow\infty}\log |\mathcal K_N|/n=R_1+R_2$ as a function of $R_2$. ]{\includegraphics[width=2.7in]{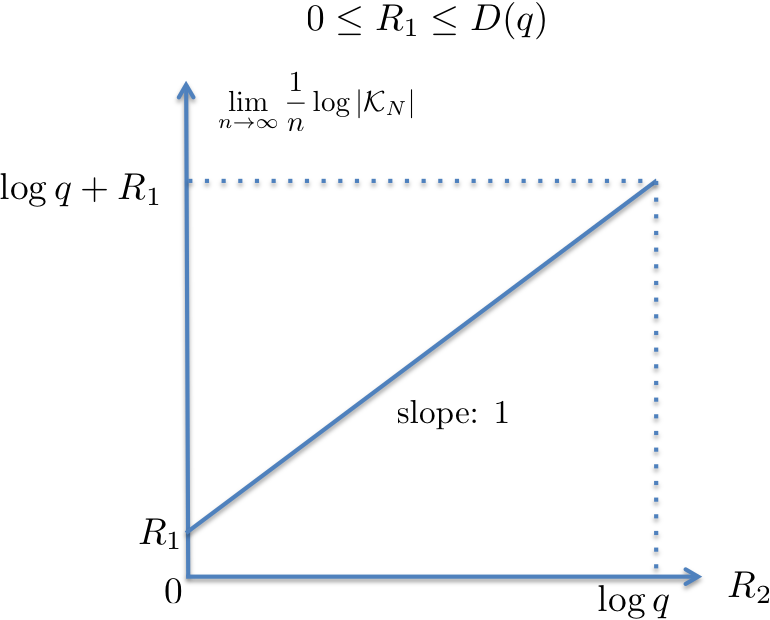}
       \label{fig:K_fixR1_small}}
     \hfil
     \subfigure[For a fixed $R_1$ in the range $D(q)\leq R_1\leq \log q$, we can identify three regimes of the growth of the typical sumset.  The piece-wise linear function $\lim_{n\rightarrow\infty}\frac{1}{n}\log |\mathcal K_N|$ is equal to  $R_1+R_2$ for $0\leq R_2< D(q)$, to $D(q)+R_1$ for $D(q)\leq  R_2<  R_1$ and to $R_2+D(q)$ for $R_1\leq  R_2\leq\log q$, as a function of $R_2$.]{\includegraphics[width=2.7in]{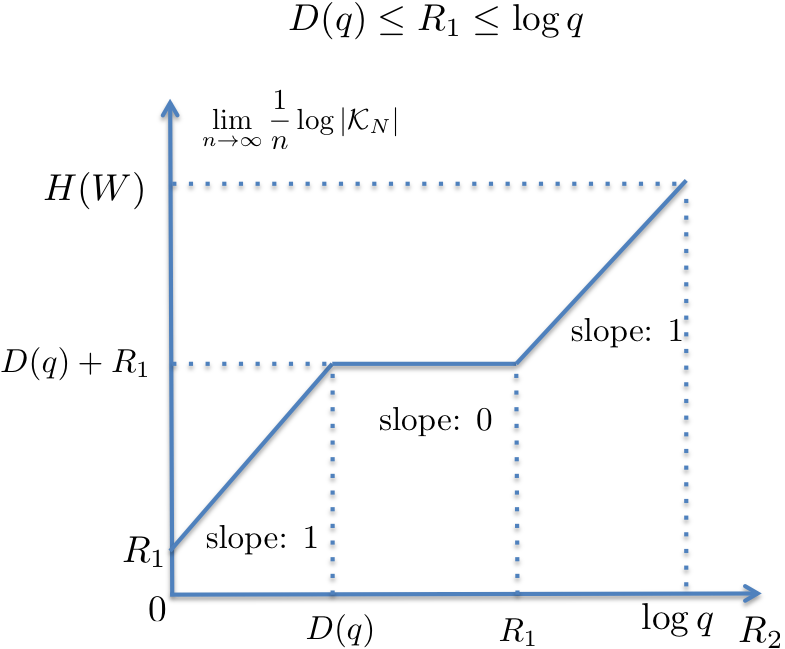}
       \label{fig:K_fixR1_large}}}
   \caption{For a fixed $R_1$, the asymptotic size of the normal typical sumset $\lim_{n\rightarrow \infty}\frac{1}{n}\log |\mathcal K_N|$ as a function of $R_2$.  }
   \label{fig:K_fixR1}
\end{figure*}

\subsection{The symmetric case}
In the case when the two codebooks are the same, i.e., $\mathcal C_1=\mathcal C_2=\mathcal C$, the size of the typical sumset is easier to describe.

\begin{corollary}[Normal typical sumsets--symmetric case]
Let $\mathcal C^{(n)}$ be a sequence linear codes indexed by their dimension in $\mathbb F_q^n$  with rate $R=\lim_{n\rightarrow\infty}\frac{1}{n}\log |\mathcal C^{(n)}|$ and let $\mathcal C'^{(n)}:=\mathcal C^{(n)}\oplus \ve d$ for any fixed $\ve d\in\mathbb F_q^n$. We assume $\mathcal C$ is generated as in (\ref{eq:linear_codes}) and each entry of the generator matrix $\ve G$ is independent and identically distributed according to the uniform distribution in $\mathbb F_q$.  Then a.a.s. there exists a sequence of typical sumsets $\mathcal K^{(n)}_N\subseteq\mathcal C^{(n)}+\mathcal C'^{(n)}$ whose sizes satisfy
\begin{align}
|\mathcal K_N^{(n)}|&\doteq
\begin{cases}
2^{2nR} &R\leq D(q)\\
2^{n(R+D(q))} &R>D(q)
\end{cases}
\label{eq:size_sumset_symmetric}
\\D(q)&:=H(U_1+U_2)-\log q.
\end{align}
where $U_1, U_2$ are independent variables with the distribution $P_U$ in  (\ref{eq:P_U}). Furthermore  for all $\ve w\in \mathcal K_N^{(n)}$, the induced distribution $P_{\mathcal S}$ defined in Definition \ref{def:P_S} satisfies
\begin{align}
P_{\mathcal S}(\ve w)\doteq
\begin{cases}
2^{-2nR} &R\leq D(q)\\
2^{-n(R+D(q))} &R>D(q)
\end{cases}
\label{eq:AEP_symmetric}
\end{align}
\label{thm:size_sumset_symmetric}
\end{corollary}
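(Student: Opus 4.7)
The plan is to deduce the corollary as a direct specialization of Theorem~\ref{thm:size_sumset}. First, I would observe that the symmetric setting $\mathcal{C}_1 = \mathcal{C}_2 = \mathcal{C}$ fits within the nested construction~(\ref{eq:nestedcodes}) by taking $k_1 = k_2 = k$, so that the zero-padding of $\mathbf{n}$ becomes vacuous and both codebooks are generated by the same random matrix $\mathbf{G} \in \mathbb{F}_q^{n \times k}$ with i.i.d.\ uniform entries over $\mathbb{F}_q$. Under this identification the shifted pair $(\mathcal{C}',\mathcal{C})$ appearing in the corollary matches the pair $(\mathcal{C}_1', \mathcal{C}_2)$ of Theorem~\ref{thm:size_sumset} with common rate $R_1 = R_2 = R$, and the distributional hypothesis on $\mathbf{G}$ agrees.

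Invoking Theorem~\ref{thm:size_sumset} then yields, asymptotically almost surely, $|\mathcal{K}_N^{(n)}| \doteq \min\{2^{2nR},\, 2^{n(R+D(q))}\}$ together with the dual estimate $P_{\mathcal{S}}(\mathbf{w}) \doteq \max\{2^{-2nR},\, 2^{-n(R+D(q))}\}$ for every $\mathbf{w} \in \mathcal{K}_N^{(n)}$. I would conclude by the elementary observation that $2R \leq R + D(q)$ precisely when $R \leq D(q)$, so the $\min$ (respectively $\max$) collapses into the two-piece expressions in~(\ref{eq:size_sumset_symmetric}) and~(\ref{eq:AEP_symmetric}).

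Because the result is a pure specialization, I do not anticipate any genuine analytical obstacle beyond what Theorem~\ref{thm:size_sumset} already addresses. The only point worth verifying is that the symmetric random construction actually produces a code of rate exactly $R = \frac{k}{n}\log q$ in the limit, so that the symbol $\doteq$ transfers without change; this is standard, since an $n \times k$ matrix with i.i.d.\ uniform entries over $\mathbb{F}_q$ has full column rank with probability at least $\prod_{i=0}^{k-1}(1 - q^{i-n})$, which tends to $1$ as $n \to \infty$ whenever $k/n < 1$. Thus $|\mathcal{C}^{(n)}| = q^k$ a.a.s., and the rate concentration needed to invoke Theorem~\ref{thm:size_sumset} holds without extra work.
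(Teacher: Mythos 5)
Your proposal matches the paper's proof: both specialize Theorem~\ref{thm:size_sumset} to $R_1=R_2=R$ and then unfold the $\min$/$\max$ into the two-piece expressions using $2R \le R+D(q) \iff R \le D(q)$. The additional remark you make about full column rank is harmless but not needed, since the paper defines the rate as the limit $\lim_{n\to\infty}\frac{1}{n}\log|\mathcal C^{(n)}|$ rather than presupposing $|\mathcal C^{(n)}|=q^k$.
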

\begin{proof}
This is a consequence of Theorem \ref{thm:size_sumset} by setting $R_1=R_2=R$.  This gives
\begin{align}
|\mathcal K_N^{(n)}|\doteq\min\left\{2^{2nR}, 2^{n(R+D(q))}\right\}.
\end{align}
It is also instructive to rewrite it in the formulation stated in the corollary.
\end{proof}

For the symmetric case, Figure \ref{fig:cartoon} provides a generic plot showing the code rate $R$ vs. normalized size $\lim_{n\rightarrow\infty}\frac{1}{n}\log |\mathcal K_N^{(n)}|$ of the normal typical sumset size.  We see there exists a threshold  $D(q)$ on the rate $R$ of the code,  above or below which the normal typical sumset $\mathcal K_N$ behaves differently. 
For the low rate regime $R<D(q)$, almost every codeword pair $T_1^n, T_2^n$ gives a distinct sum codeword, hence the sumset size $|\mathcal K_N|$ is essentially $|\mathcal C|^2$. For the medium to high rate regime $R\geq D(q)$, due to the linear structure of the code, there are (exponentially) many different codeword pairs $T_1^n, T_2^n$ which give the same sum codeword, and the normal typical sumset size $|\mathcal K_N|$ grows only as  $2^{nD(q)}|\mathcal C|$ where $D(q)$ does not depend on $R$.  In this regime the code $\mathcal C$ has a typical sumset which is exponentially smaller than $\mathcal C+\mathcal C'$.  In contrast to the low dimensional case where the sum of two uniformly distributed random variables is not uniformly distributed, the sum codewords  are \textit{uniformly} distributed in the typical sumset $\mathcal K_N$ as the dimension $n$ tends to infinity,  as shown by (\ref{eq:AEP}) in Theorem \ref{thm:size_sumset}. This is reminiscent of the classical typical sequences with asymptotic equipartition property (AEP), i.e., the typical sumset occurs a.a.s. but is uniformly filled up with only a small subset of sequences. We also give a pictorial description of the sum codewords $T_1^n+T_2^n$ in Figure \ref{fig:typicalsumset}.

\begin{figure}[htbp!]
\centering
\includegraphics[scale=0.6]{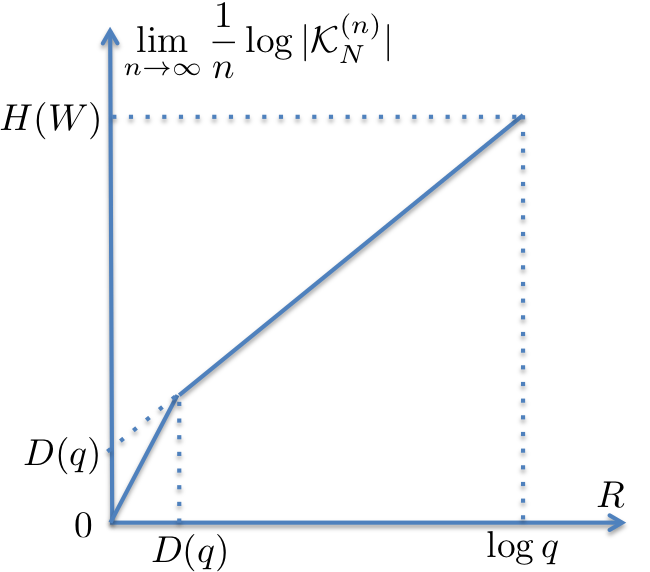}
\caption{An illustration of the size of  normal typical sumsets  of linear codes in the symmetric case.  $H(W)$ and $D(q)$ are given in (\ref{eq:H_W}) and (\ref{eq:Dq}), respectively. The piece-wise linear function has slope $2$ for low rate regime and slope $1$ for medium-to-high rate regime.}
\label{fig:cartoon}
\end{figure}




\begin{figure*}[hbt!]
\centering
\includegraphics[scale=0.55]{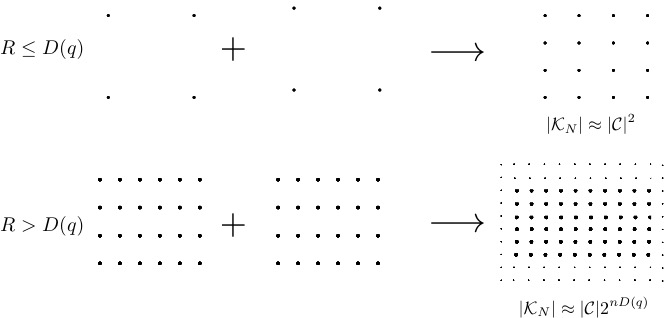}
\caption{An illustration of the sum codewords $T_1^n+T_2^n$ in the symmetric case $\mathcal C_1=\mathcal C_2=\mathcal C$. For the rate $R\leq D(q)$, each pair $(T_1^n,T_2^n)$ will give a different sum and  typical sumset $\mathcal K_N$ is essentially the same as $\mathcal C+\mathcal C$.  For rate $R>D(q)$, many pairs $(T_1^n, T_2^n)$ give the same sum codeword and the typical sumset $\mathcal K_N$ is much smaller than $\mathcal C+\mathcal C$. Interestingly in the $n$-dimensional space with $n\rightarrow\infty$, the sum codewords $T_1^n+T_2^n$ is always \textit{uniformly} distributed in the typical sumset $\mathcal K_N$ (represented by thick dots in the plot). The other sum codewords in $(\mathcal C+\mathcal C)\setminus \mathcal K_N$ (represented by the small dots) have only negligible probability.}
\label{fig:typicalsumset}
\end{figure*}


\subsection{Comparison with $|\mathcal C_1+\mathcal C_2|$}
In Section \ref{sec:intro} we emphasized the distinction between the classical sumset theory and our study of typical sumsets in a probabilistic setting. Now we compare the size of a normal typical sumset $\mathcal K_N$ of $\mathcal C_1,\mathcal C_2$ with the size of the  exact sumset $\mathcal C_1+\mathcal C_2$. Before doing this, we first introduce a useful result relating the sumsets of general linear codes with that of systematic linear codes.

\begin{lemma}[Equivalence between systematic and non-systematic codes]
Given any linear codes $\mathcal C_1, \mathcal C_2$ such that  $\mathcal C_2\subseteq\mathcal C_1$, there exist  systematic linear codes $\mathcal C_1',\mathcal C_2'$ with a one-to-one mapping $\phi: \mathcal C_1\longrightarrow \mathcal C_1', \phi: \mathcal C_2\longrightarrow\mathcal C_2'$ such that for \textit{any} pair $\ve t\in \mathcal C_1,\ve v\in \mathcal C_2$ satisfying $\ve t+\ve v=\ve s$, we have $
\phi(\ve t)+\phi(\ve v)=\phi (\ve s)$.
\label{lemma:equivalence}
\end{lemma}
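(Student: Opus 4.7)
The plan is to exhibit $\phi$ as a permutation of the $n$ coordinates: since permuting the entries of two integer vectors and then adding them is the same as adding them first and then permuting, any coordinate permutation automatically preserves the entry-wise integer sum, so the only real work is showing that a single permutation can simultaneously place both $\mathcal C_1$ and $\mathcal C_2$ into systematic form.

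First, I would put the two codes in the nested form of (\ref{eq:nestedcodes}). Since $\mathcal C_2\subseteq\mathcal C_1$, pick a basis $\ve b_1,\ldots,\ve b_{k_2}$ of $\mathcal C_2$ (viewed as $\mathbb F_q$-columns of length $n$) and extend it to a basis $\ve b_1,\ldots,\ve b_{k_1}$ of $\mathcal C_1$. Let $\ve G=[\ve b_1,\ldots,\ve b_{k_1}]\in\mathbb F_q^{n\times k_1}$, so that $\mathcal C_1=\{\ve G\ve m:\ve m\in\mathbb F_q^{k_1}\}$ and $\mathcal C_2$ is generated by the first $k_2$ columns of $\ve G$, matching (\ref{eq:nestedcodes}). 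Denote by $\ve G^{(2)}\in\mathbb F_q^{n\times k_2}$ the matrix of the first $k_2$ columns.

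Next I would construct the desired permutation. Since $\ve G^{(2)}$ has rank $k_2$, there is a set $S_2\subseteq\{1,\ldots,n\}$ of size $k_2$ such that the submatrix $\ve G^{(2)}[S_2,:]$ is an invertible $k_2\times k_2$ matrix over $\mathbb F_q$. I then claim $S_2$ can be extended to a set $S_1\supseteq S_2$ of size $k_1$ such that $\ve G[S_1,:]$ is invertible. The key step is a Steinitz exchange argument: the $k_2$ rows of $\ve G$ indexed by $S_2$ are linearly independent in $\mathbb F_q^{k_1}$ (their projections to the first $k_2$ coordinates already are), and the full row set of $\ve G$ spans $\mathbb F_q^{k_1}$; hence one can successively add rows of $\ve G$ with indices in $\{1,\ldots,n\}\setminus S_2$ until the chosen set $S_1$ of $k_1$ rows forms a basis of $\mathbb F_q^{k_1}$. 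This is the main (and really only) technical obstacle, but it is standard linear algebra. Let $\pi$ be any permutation on $\{1,\ldots,n\}$ that sends $S_2$ bijectively onto $\{1,\ldots,k_2\}$ and $S_1\setminus S_2$ bijectively onto $\{k_2+1,\ldots,k_1\}$.

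Finally, define $\phi:\mathbb Z^n\to\mathbb Z^n$ by $\phi(\ve c)_{\pi(i)}=\ve c_i$, i.e.\ the coordinate permutation induced by $\pi$. Restricted to $\mathcal C_1$ it is injective and its image is $\mathcal C_1':=\phi(\mathcal C_1)$, a linear code of dimension $k_1$ generated by $\pi(\ve G)$, whose top $k_1\times k_1$ block $\ve A$ is by construction invertible; re-parametrizing messages via $\ve A$ (equivalently, multiplying $\pi(\ve G)$ on the right by $\ve A^{-1}$) exhibits the generator matrix of $\mathcal C_1'$ in the systematic form (\ref{eq:systematic}). The same permutation applied to $\mathcal C_2$ gives $\mathcal C_2':=\phi(\mathcal C_2)$, which is generated by $\pi(\ve G^{(2)})$, whose top $k_2\times k_2$ block $\ve A_{11}$ is invertible by the choice of $S_2$; multiplying by $\ve A_{11}^{-1}$ yields the systematic form for $\mathcal C_2'$. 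Since $\phi$ is the entry-wise permutation $\pi$, it trivially satisfies $\phi(\ve t)+\phi(\ve v)=\phi(\ve t+\ve v)$ for the integer sum, so $\ve t+\ve v=\ve s$ implies $\phi(\ve t)+\phi(\ve v)=\phi(\ve s)$, completing the proof.
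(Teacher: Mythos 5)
Your proof takes the same route as the paper --- a coordinate permutation $\phi$, which commutes with entry-wise integer addition, so the only content is finding one permutation that simultaneously systematizes both $\mathcal C_1$ and $\mathcal C_2$. However, you handle that crucial step more carefully than the paper does. The paper appeals to the standard fact that any linear code is permutation-equivalent to a systematic one, takes such a permutation for $\mathcal C_1$, and then asserts that it ``clearly'' also systematizes $\mathcal C_2$; this is not literally true (a permutation that makes $\mathcal C_1$ systematic need not make an arbitrary nested subcode $\mathcal C_2$ systematic --- e.g.\ $\mathcal C_1$ generated by $[\ve I_2;\,(1,1)]$ over $\mathbb F_2$ is already systematic, yet the subcode spanned by $(0,1,1)$ is not), and what must actually be shown is the existence of \emph{some} permutation working for both. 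Your Steinitz-exchange argument --- pick $k_2$ rows on which $\ve G^{(2)}$ is invertible, note they remain independent as rows of $\ve G$, and extend to $k_1$ rows on which $\ve G$ is invertible, then move those index sets to the front --- supplies exactly this and is the correct justification. In short, you use the same key idea as the paper but make rigorous the one step the paper hand-waves.
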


\begin{proof}
A code $\mathcal C$ is said to be \textit{equivalent} (\cite[Ch. 4]{welsh_codes_1988} ) to another code $\mathcal C'$,  if  there exists a permutation $\pi$ over the set $\{1,\ldots, n\}$, such that every codeword $\ve t'$  in $\mathcal C'$ satisfies
\begin{align}
\ve t':=(\ve t'_1,\ve t'_2,\ldots, \ve t'_n)=(\ve t_{\pi(1)},\ve t_{\pi(2)},\ldots, \ve t_{\pi(n)})
\end{align}
for some $\ve t:=(\ve t_1,\ve t_2,\ldots, \ve t_n)\in \mathcal C$. It is  known that any linear code $\mathcal C$ is equivalent to some systematic linear code  (see \cite[Ch. 4.3]{welsh_codes_1988} for example).  If we assume without loss of generality that $\mathcal C_2\subseteq\mathcal C_1$,  define the mapping $\phi$ to be the permutation needed to transform the given linear code $\mathcal C_1$ to its systematic counterpart $\mathcal C_1'$. Clearly it also gives the permutation on code $\mathcal C_2$ which transforms $\mathcal C_2$ to its systematic counterpart $\mathcal C_2'$. Furthermore this permutation is a one-to-one mapping.

For two different pairs  $(\ve t, \ve v)$ and $(\tilde{\ve t}, \tilde{\ve v})$ where $\ve t,\tilde{\ve t}\in\mathcal C_1, \ve v,\tilde{\ve v}\in\mathcal C_2$ such that $\ve t+\ve v=\tilde{\ve t}+\tilde{\ve v}=\ve s$, it  holds that
\begin{align}
\phi(\ve t)+\phi(\ve v)&=(\ve t_{\pi(1)}+\ve v_{\pi(1)},\ve t_{\pi(2)}+\ve v_{\pi(2)},\ldots, \ve t_{\pi(n)}+\ve v_{\pi(n)})\\
&=(\tilde{\ve t}_{\pi(1)}+\tilde{\ve v}_{\pi(1)},\tilde{\ve t}_{\pi(2)}+\tilde{\ve v}_{\pi(2)},\ldots, \tilde{\ve t}_{\pi(n)}+\tilde{\ve v}_{\pi(n)})\\
&=\phi(\tilde{\ve t})+\phi(\tilde{\ve v})=\phi(\ve s)
\end{align}
where the second equality holds because of the assumption $\ve t+\ve v=\tilde{\ve t}+\tilde{\ve v}$ and the last equality holds because permutation is distributive with respect to entry-wise addition.
\end{proof}

This lemma shows that for any linear codes $\mathcal C_1,\mathcal C_2$ which are nested, there exists  corresponding systematic codes $\mathcal C_1',\mathcal C_2'$ whose sumset structure is exactly the same as the former. Now we can show the following simple bounds on the size of the sumset $\mathcal C_1+\mathcal C_2$.
\begin{lemma}[Simple sumset estimates]
Let $\mathcal C_1$ be an $(n,k_1)$-linear code and $\mathcal C_2$  an $(n, k_2)$-linear code over $\mathbb F_q$ such that either $\mathcal C_1\subseteq\mathcal C_2$ or $\mathcal C_2\subseteq\mathcal C_1$. The size of the sumset $\mathcal C_1+\mathcal C_2$ is upper bounded as
\begin{align}
|\mathcal C_1+\mathcal C_2|\leq q^{k_1+k_2}
\label{eq:trivial_upper}
\end{align}
and lower bounded as
\begin{align}
|\mathcal C_1+\mathcal C_2|\geq (2q-2)^{\min(k_1,k_2)}
\label{eq:AbsoSumset_lower}
\end{align}
\end{lemma}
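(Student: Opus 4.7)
The upper bound is immediate counting: the integer-sum map $(\ve t,\ve v)\mapsto \ve t+\ve v$ from $\mathcal C_1\times\mathcal C_2$ onto $\mathcal C_1+\mathcal C_2$ is (at best) a bijection, so
$$|\mathcal C_1+\mathcal C_2|\;\leq\;|\mathcal C_1|\cdot|\mathcal C_2|\;=\;q^{k_1}q^{k_2}\;=\;q^{k_1+k_2},$$
which gives (\ref{eq:trivial_upper}) with no further work.

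For the lower bound my plan is to first reduce to the systematic case via Lemma \ref{lemma:equivalence}. Coordinatewise permutation $\phi$ is a bijection on $\mathbb Z^n$ that distributes over integer addition, so $|\mathcal C_1+\mathcal C_2|=|\mathcal C_1'+\mathcal C_2'|$ where both $\mathcal C_i'$ are systematic. Assume without loss of generality $\mathcal C_2\subseteq\mathcal C_1$, so $\min(k_1,k_2)=k_2$, and arrange the systematization so that coordinates $1,\ldots,k_2$ are systematic for \emph{both} codes (this is the natural outcome for the nested construction (\ref{eq:nestedcodes}): pick $k_1$ rows of $\ve G$ forming an invertible $k_1\times k_1$ block, with the first $k_2$ of them already invertible on the first $k_2$ columns, which is possible by extending any basis of the row-span of the first $k_2$ columns of $\ve G$). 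Then project any element of $\mathcal C_1'+\mathcal C_2'$ onto its first $k_2$ coordinates: for every $\mathbf a,\mathbf b\in\mathbb F_q^{k_2}$ there exist $\ve t\in\mathcal C_1'$ with $\ve t_{1:k_2}=\mathbf a$ and $\ve v\in\mathcal C_2'$ with $\ve v_{1:k_2}=\mathbf b$, so the integer sum $(\ve t+\ve v)_{1:k_2}=\mathbf a+\mathbf b$ ranges over every element of $\{0,1,\ldots,2q-2\}^{k_2}$. This yields $(2q-1)^{k_2}\geq(2q-2)^{k_2}$ distinct projections, hence at least that many distinct integer sums, proving (\ref{eq:AbsoSumset_lower}).

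The only nontrivial step is the bookkeeping around simultaneous systematization: Lemma \ref{lemma:equivalence} as stated produces systematic representatives, but the lower bound needs the systematic supports to be \emph{nested}. I expect this to be the main (though still routine) obstacle, and it is resolved by the argument above about extending a basis of the first-$k_2$-column row-span to a basis for all of $\ve G$; alternatively, one can appeal directly to the explicit form of $\ve G$ in (\ref{eq:nestedcodes}) and row-reduce. Everything else is direct coordinatewise counting.
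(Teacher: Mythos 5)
Both bounds are correct, and your argument is essentially the paper's: the upper bound is word-for-word the paper's observation that $|\mathcal C_1+\mathcal C_2|\leq|\mathcal C_1||\mathcal C_2|$, and for the lower bound you reduce to systematic form via Lemma \ref{lemma:equivalence} and count on the systematic coordinates. You are right that the equivalence lemma as stated does not by itself guarantee that the systematic supports of $\mathcal C_1'$ and $\mathcal C_2'$ can be taken to coincide on the first $k_2$ positions, and your basis-extension argument is a legitimate repair of that gap (the paper's one-sentence proof glosses over it). But a simpler route avoids the bookkeeping and the equivalence lemma entirely: since $\mathcal C_2\subseteq\mathcal C_1$, any set $S$ of $k_2$ coordinates on which $\mathcal C_2$ projects surjectively onto $\mathbb F_q^{k_2}$ (such $S$ exists because $\mathcal C_2$ has rank $k_2$) is automatically a set on which the larger code $\mathcal C_1$ also projects surjectively; nesting hands you the shared free coordinates for free, so you can run your projection-and-counting step directly without simultaneously systematizing both codes. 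Finally, note that your count actually delivers the stronger bound $(2q-1)^{\min(k_1,k_2)}$, since $|\{0,\ldots,2q-2\}|=2q-1$; the stated $(2q-2)^{\min(k_1,k_2)}$ in (\ref{eq:AbsoSumset_lower}) is just a looser presentation of the same fact, and the paper's own argument gives $(2q-1)^{\min(k_1,k_2)}$ as well.
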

\begin{proof}
The upper bound follows simply from the fact that $|\mathcal C_1+\mathcal C_2|\leq |\mathcal C_1||\mathcal C_2|$ for any set $\mathcal C_1,\mathcal C_2$. To establish the lower bound, Lemma \ref{lemma:equivalence} shows that for \textit{any} nested linear code $\mathcal C_1, \mathcal C_2$, we can find corresponding systematic linear codes $\mathcal C_1',\mathcal C_2'$ whose sumset size $|\mathcal C_1'+\mathcal C_2'|$ equals to $|\mathcal C_1+\mathcal C_2|$. The lower bound follows by noticing that for any systematic linear codes, the sum of the message part of the codewords already take at least $(2q-2)^{\min(k_1,k_2)}$ different values.
\end{proof}

Notice $|\mathcal K_N|$ can be smaller than  the simple lower bound given in (\ref{eq:AbsoSumset_lower}) for certain rate range. The reason is clear: some of the sum codewords $T_1^n+T_2^n$ occurs very rarely if $T_1^n$ and $T_2^n$ are chosen uniformly. Those sum codewords will be counted in the sumset $\mathcal C_1+\mathcal C_2$ but are probabilistically negligible.  
For a comparison, we consider the simple case when $k_1=k_2$ hence $\mathcal C_1$ and $\mathcal C_2$ are identical. we see the lower bound in (\ref{eq:AbsoSumset_lower}) states that
\begin{align}
|\mathcal C_1+\mathcal C_2|\geq 2^{nR_1 \log (2q-2)/\log q}.
\end{align}
Then  Eq. (\ref{eq:size_sumset}) implies that $|\mathcal K_N|$ is smaller than $|\mathcal C_1+\mathcal C_2|$ for the rate range 
\begin{align}
R>\frac{D(q)}{\log(2q-2)/\log q-1},
\end{align}
(Notice that the RHS is always larger than $D(q)$ for $q\geq 2$ but  is only meaningful if it is smaller than $\log q$). For example  $|\mathcal K_N|$ is smaller than the lower bound in (\ref{eq:AbsoSumset_lower}) for $R>2.85$ bits with $q=11$ and for $R>4.87$ bits for $q=101$.

\subsection{Entropy of sumsets}
Often we are interested in inequalities relating the entropy of two random variables $X_1, X_2$ and the entropy of their sum $X_1+X_2$. One classical result is the \textit{entropy power inequality} involving differential entropy.  Recent works including  \cite{tao_sumset_2010} \cite{kontoyiannis_sumset_2014} have established several fundamental results on this topic. For our problem,  if  codes $\mathcal C_1, \mathcal C_2$ have a normal typical sumset and 
 $T_1^n, T_2^n$ are random variables uniformly distributed in $\mathcal C_1, \mathcal C_2$ respectively, we are able to give an asymptotic relationship between $H(T_1^n), H(T_2^n)$ and $H(T_1^n+T_2^n)$.

\begin{theorem}[Entropy of sumsets]
Let $\mathcal C_1^{(n)}, \mathcal C_2^{(n)}$ be two sequences of linear codes in $\mathbb F_q^n$ with normal typical sumsets $\mathcal K_N^{(n)}$ as in Theorem \ref{thm:size_sumset}. Let $T_1^n, T_2^n$ be independent random $n$-length vectors uniformly distributed in the code $\mathcal C_1^{(n)}, \mathcal C_2^{(n)}$, respectively. In the limit $n\rightarrow\infty$  we have
\begin{align}
\lim_{n\rightarrow\infty}\frac{H(T_1^n+T_2^n)}{n}&=\min\left\{\frac{H(T^n_1)+ H(T^n_2)}{n}, \frac{\max\{H(T^n_1), H(T^n_2)\}}{n}+D(q)\right\}\\
&=\min\{R_1+R_2, \max\{R_1,R_2\}+D(q)\}
\end{align}
where as before, $D(q):=H(W)-\log q$ with $W$ distributed according to $P_W$ in (\ref{eq:P_W}). 
\label{thm:entropy}
\end{theorem}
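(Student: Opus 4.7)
The plan is to derive this theorem directly from Theorem \ref{thm:size_sumset}, which already provides an AEP-like characterization of the sum distribution: $P_{\mathcal S}$ is (asymptotically) uniform on a set $\mathcal K_N^{(n)}$ of size $\doteq 2^{nL}$, where $L:=\min\{R_1+R_2,\max\{R_1,R_2\}+D(q)\}$. Given this structure, the entropy $H(T_1^n+T_2^n)$ must equal $nL$ to first order, just as the entropy of an AEP-typical random variable matches the log-cardinality of its typical set. So the whole proof reduces to two short sandwiching arguments using Theorem \ref{thm:size_sumset}, and the target limit $L$ is exactly the exponent appearing in \eqref{eq:size_sumset} and \eqref{eq:AEP}.

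For the upper bound, I would introduce the indicator $E:=\mathbf 1\{T_1^n+T_2^n\in\mathcal K_N^{(n)}\}$ and write
\begin{align}
H(T_1^n+T_2^n)\le H(E)+\pp{E=1}\,\log|\mathcal K_N^{(n)}|+\pp{E=0}\,\log|\mathcal C_1+\mathcal C_2|.
\end{align}
By Definition \ref{def:typical_sumset}, $\pp{E=1}\to 1$ and $\pp{E=0}\to 0$; by Theorem \ref{thm:size_sumset}, $\log|\mathcal K_N^{(n)}|\le n(L+\epsilon_n)$ with $\epsilon_n\searrow 0$; and trivially $\log|\mathcal C_1+\mathcal C_2|\le n\log(2q-1)$, which is bounded linearly in $n$. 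Dividing by $n$ and letting $n\to\infty$ yields $\limsup_n \tfrac{1}{n}H(T_1^n+T_2^n)\le L$.

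For the lower bound, I would use the near-uniform probability statement \eqref{eq:AEP} from Theorem \ref{thm:size_sumset}, which gives $P_{\mathcal S}(\ve w)\le 2^{-n(L-\epsilon_n)}$ for every $\ve w\in\mathcal K_N^{(n)}$, so $-\log P_{\mathcal S}(\ve w)\ge n(L-\epsilon_n)$ on that set. Restricting the entropy sum to $\mathcal K_N^{(n)}$,
\begin{align}
H(T_1^n+T_2^n)\ge \sum_{\ve w\in\mathcal K_N^{(n)}}P_{\mathcal S}(\ve w)\bigl(-\log P_{\mathcal S}(\ve w)\bigr)\ge n(L-\epsilon_n)\,\pp{E=1},
\end{align}
and since $\pp{E=1}\to 1$, dividing by $n$ and taking the limit gives $\liminf_n \tfrac{1}{n}H(T_1^n+T_2^n)\ge L$. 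Combining the two bounds produces the claimed equality, and the second equality in the theorem is simply the substitution $H(T_j^n)=nR_j$ coming from the uniform distribution of $T_j^n$ on the linear code $\mathcal C_j^{(n)}$.

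I do not expect a real obstacle here; the argument is a textbook AEP-style sandwich once Theorem \ref{thm:size_sumset} is in hand. The only point requiring a little care is the ``atypical'' contribution $\pp{E=0}\log|\mathcal C_1+\mathcal C_2|$: one must check that the linear-in-$n$ upper bound on $\log|\mathcal C_1+\mathcal C_2|$ (which need not vanish on its own) is killed after dividing by $n$ because $\pp{E=0}\to 0$. This is guaranteed by Definition \ref{def:typical_sumset}, so no further error analysis beyond what Theorem \ref{thm:size_sumset} already delivers is needed.
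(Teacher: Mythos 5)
Your argument is correct, and the lower bound is essentially the paper's (both restrict the entropy sum to $\mathcal K_N^{(n)}$ and use $P_{\mathcal S}(\ve w)\le 2^{-n(L-\epsilon_n)}$ from \eqref{eq:AEP}). Where you diverge is the upper bound, and it is a genuine simplification. The paper splits $H(T_1^n+T_2^n)$ directly into the typical-part sum and the atypical-part sum, then controls the latter with the log-sum inequality; that step needs the quantitative decay $P_{\mathcal S}(\overline{\mathcal K_N})\le 4qe^{-n\delta^2\min\{R_1,R_2\}/\log q}$ supplied by Lemma \ref{lemma:typical_matter} so that the residual term $-P_{\mathcal S}(\overline{\mathcal K_N})\log P_{\mathcal S}(\overline{\mathcal K_N})+P_{\mathcal S}(\overline{\mathcal K_N})\log|\overline{\mathcal K_N}|$ is $o(n)$. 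You instead condition on the indicator $E=\mathbf 1\{T_1^n+T_2^n\in\mathcal K_N\}$, use $H(T_1^n+T_2^n)=H(E)+\pp{E=1}H(T_1^n+T_2^n\mid E=1)+\pp{E=0}H(T_1^n+T_2^n\mid E=0)$ (valid since $E$ is determined by the sum), and bound each conditional entropy by the log of its support size. After dividing by $n$, the atypical contribution is $\pp{E=0}\cdot\log|\mathcal C_1+\mathcal C_2|/n$, which is a product of something tending to $0$ and something bounded, so it vanishes using only the qualitative fact $\pp{E=0}\to 0$ from Definition \ref{def:typical_sumset}. Your route is therefore slightly more robust: it does not need the explicit exponential rate, only that $\mathcal K_N$ is a typical sumset. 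Both proofs are correct; yours is the standard Fano-style grouping bound, while the paper's is the log-sum variant that carries along more quantitative information than the theorem ultimately needs.
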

\begin{proof}
As $T_j^n, j=1,2$ is uniformly distributed in the $(n,k_j)$-linear code $\mathcal C_j$ with rate $R_j$, we have $H(T_j^n)=nR_j$.  Theorem \ref{thm:size_sumset} shows that the distribution of the random variable $T_1^n+T_2^n$ depends on the values $R_1+R_2$ and $\max\{R_1,R_2\}+D(q)$.  We first consider the case when the $R_1+R_2$ is smaller than the latter value. Recall that $P_{\mathcal S}$ denotes the distribution on $\mathcal C_1+\mathcal C_2$ induced by $T_1^n, T_2^n$ as in Definition \ref{def:P_S},   we have
\begin{align}
H(T_1^n+T_2^n)&=-\sum_{\ve w\in \mathcal C_1+\mathcal C_2}P_{\mathcal S}(\ve w)\log P_{\mathcal S}(\ve w)\\
&\geq -\sum_{\ve w\in \mathcal K_N}P_{\mathcal S}(\ve w)\log P_{\mathcal S}(\ve w)
\end{align}
Theorem \ref{thm:size_sumset} shows that in this case for $\ve w\in \mathcal K_N$ it holds that $P_{\mathcal S}(\ve w)\leq  2^{-n(R_1+R_2-\epsilon_n)}$, hence
\begin{align}
H(T_1^n+T_2^n)&\geq -\log 2^{-n(R_1+R_2-\epsilon_n)} \sum_{\ve w\in \mathcal K_N}P_{\mathcal S}(\ve w)\\
&=n(R_1+R_2-\epsilon_n)(1-\delta_n)
\end{align}
with $\delta_n\rightarrow 0$ because $\mathcal K_N$ is a typical sumset. It follows that
\begin{align}
\lim_{n\rightarrow\infty} H(T_1^n+T_2^n)/n&\geq \lim_{n\rightarrow\infty}(R_1+R_2-\epsilon_n)(1-\delta_n)\\
&=R_1+R_2=(H(T_1)+H(T_2))/n
\end{align}
as both $\delta_n,\epsilon_n\rightarrow 0$.

On the other hand, we have 
\begin{align}
H(T_1^n+T_2^n)&=-\sum_{\ve w\in \mathcal K_N}P_{\mathcal S}(\ve w)\log P_{\mathcal S}(\ve w)-\sum_{\ve w\notin \mathcal K_N}P_{\mathcal S}(\ve w)\log P_{\mathcal S}(\ve w)
\end{align}
For $\ve w\in \mathcal K_N$ it holds $P_{\mathcal S}(\ve w)\geq 2^{-n(R_1+R_2+\epsilon_n)}$ in this case, implied by Theorem \ref{thm:size_sumset}. Hence the first term above is bounded as
\begin{align}
-\sum_{\ve w\in \mathcal K_N}P_{\mathcal S}(\ve w)\log P_{\mathcal S}(\ve w)&\leq -\log 2^{-n(R_1+R_2+\epsilon_n)}\sum_{\ve w \in \mathcal K_N}P_{\mathcal S}(\ve w)\\
&\leq n(R_1+R_2+\epsilon_n)
\end{align}
To bound the second term, using  \textit{log sum inequatliy} \cite[Lemma 3.1]{csiszar_information_2011} gives
\begin{align}
-\sum_{\ve w\notin \mathcal K_N}P_{\mathcal S}(\ve w)\log P_{\mathcal S}(\ve w)&\leq -\left(\sum_{\ve w\notin \mathcal K_N}P_{\mathcal S}(\ve w)\right) \log\frac{\sum_{\ve w\notin \mathcal K_N}P_{\mathcal S}(\ve w)}{|\overline{\mathcal K_N}|}\\
&=-P_{\mathcal S}(\overline{\mathcal K_N})\log P_{\mathcal S}(\overline{\mathcal K_N})+P_{\mathcal S}(\overline{\mathcal K_N})\log |\overline{\mathcal K_N}|
\label{eq:entropy_typical}
\end{align}
where $\overline{\mathcal K_N}$ denotes the complementary set  $(\mathcal C_1+\mathcal C_2 )\backslash\mathcal K_N$. Later in Lemma \ref{lemma:typical_matter} Eq. (\ref{eq:K_speed}) we show that 
\begin{align}
P_{\mathcal S}(\overline{\mathcal K_N})\leq 4q e^{-n\delta^2\min\{R_1,R_2\}/\log q}
\end{align}
For $n\rightarrow\infty$, the first term in (\ref{eq:entropy_typical}) approaches zero as $P_{\mathcal S}(\overline{\mathcal K_N})\rightarrow 0$. The second term is bounded as
\begin{align}
P_{\mathcal S}(\overline{\mathcal K_N})\log |\overline{\mathcal K_N}|&\leq 4q e^{-n\delta^2\min\{R_1,R_2\}/\log q}\log 2^{n(R_1+R_2)}\\
&= 4n(R_1+R_2)q e^{-n\delta^2\min\{R_1,R_2\}/\log q}
\end{align}
approaches zero as well for large enough $n$. Hence overall we have
\begin{align}
\lim_{n\rightarrow \infty}H(T_1^n+T_2^n)&\leq \lim_{n\rightarrow \infty}(R_1+R_2+\epsilon_n)+o_n(1)\\
&=n(R_1+R_2)=(H(T_1^n)+H(T_2^n))/n
\end{align}
This shows in the limit we have $H(T_1^n+T_2^n)/n\rightarrow (H(T_1^n)+H(T_2^n))/n$ for the case  $R_1+R_2\leq \max\{R_1,R_2\}+D(q) $. The other case can be proved in the same way.
\end{proof}

\section{Proof of Theorem \ref{thm:size_sumset}}\label{sec:proof}
We prove Theorem \ref{thm:size_sumset} in a few steps. Lemma \ref{lemma:equivalence} already shows that for any linear codes $\mathcal C_1,\mathcal C_2$, there exist corresponding systematic linear codes whose sumset structure is the same as the former. Hence we first focus on systematic linear codes and establish a similar result. Given two matrices $\ve Q\in\mathbb F_q^{(n-k_1)\times k_1}$ and $\ve H\in \mathbb F_q^{(k_1-k_2)\times k_2}$ with $k_1\geq k_2$, we consider two codes of the form
\begin{subequations}
\begin{align}
\mathcal C_1&=\left\{\ve t :\ve t=\begin{bmatrix}
\ve I_{k_1\times k_1}\\
\ve Q
\end{bmatrix}\ve m, \mbox{ for all }\ve m\in \mathbb F_q^{k_1}\right\}\\
\mathcal C_2&=\left\{\ve v :\ve v=
\begin{bmatrix}
\ve I_{k_1\times k_1}\\
\ve Q
\end{bmatrix}
\ve n
=
\begin{bmatrix}
\ve I_{k_1\times k_1}\\
\ve Q
\end{bmatrix}
\begin{bmatrix}
\ve n'\\
\ve H\ve n'
\end{bmatrix}, \mbox{ for all }\ve n'\in \mathbb F_q^{k_2}\right\}
\end{align}
\label{eq:systematic_proof}
\end{subequations}
where we defined $\ve n:=\begin{bmatrix}
\ve n'\\
\ve H\ve n'
\end{bmatrix}$. It is easy to see that we have $\mathcal C_2\subseteq\mathcal C_1$ in this case. Also notice that it is in general insufficient to set $\ve H$ to be the zero matrix. For example for the case $k_1=n$, letting $\ve H$ to be the zero matrix will result in a $\mathcal C_2$ whose codewords do not have parity part.

%

\begin{theorem}[Normal typical sumset - systematic linear codes]
Let $\mathcal C_1^{(n)}, \mathcal C_2^{(n)}$ be two sequences of \textit{systematic} linear codes in the form (\ref{eq:systematic_proof}) indexed by their dimension. The rates of the two codes are given by $R_j=\lim_{n\rightarrow\infty}\frac{1}{n}\log |\mathcal C_j^{(n)}|$ for $j=1,2$. For any fixed vector $\ve d\in\mathbb F_q^n$ define $\mathcal C_1'^{(n)}:=\mathcal C_1^{(n)}\oplus \ve d$ as in (\ref{eq:coset}).  If each entry of the  matrices $\ve Q, \ve H$ is independent and identically distributed according to the uniform distribution in $\mathbb F_q$, then asymptotically almost surely there exists a sequence of typical sumsets $\mathcal K_N^{(n)}\subseteq \mathcal C_1'^{(n)}+\mathcal C_2^{(n)}$ with sizes given in (\ref{eq:size_sumset}). Furthermore,  the induced probability distribution $P_{\mathcal S}$  on $\mathcal C_1'^{(n)}+\mathcal C_2^{(n)}$  satisfies (\ref{eq:AEP}). 
\label{thm:size_sumset_systematic}
\end{theorem}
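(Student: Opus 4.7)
The plan is to work with the block decomposition of sums of systematic codewords. Writing $\ve m\in\mathbb F_q^{k_1}$ for the message generating $T_1^n$ and $\ve n=[\ve n';\ \ve H\ve n']\in\mathbb F_q^{k_1}$ for the embedded message generating $T_2^n$, and absorbing the coset shift $\ve d$ into the info and parity offsets (a harmless shift of the whole sumset), the integer sum decomposes as $W^n=\bigl[\,\ve m+\ve n\,;\,\ve Q\ve m+\ve Q\ve n\,\bigr]\in\mathcal W^{k_1}\times\mathcal W^{n-k_1}$, where the matrix--vector products inside are computed in $\mathbb F_q$ while the outer additions are over $\mathbb Z$. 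Each coordinate of $W^n$ is a sum of two $P_U$-distributed $\mathbb F_q$ elements and is therefore marginally distributed as $W=U_1+U_2\sim P_W$.

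I would take the candidate normal typical sumset $\mathcal K_N$ to be the intersection of $\mathcal C_1'+\mathcal C_2$ with those $\ve w\in\mathcal W^n$ whose empirical type is $\delta$-close to $P_W$. The a.a.s.\ containment $\pp{W^n\in\mathcal K_N}\to 1$ would follow from a Hoeffding-type concentration on the empirical type of each block: in the info block, $\ve m$ is coordinate-wise uniform and $\ve n$ becomes marginally uniform after averaging over the uniform $\ve H$; in the parity block, conditional on nonzero $\ve m$ and $\ve n$ (which holds a.a.s.) the entries of $\ve Q\ve m$ and $\ve Q\ve n$ are uniform on $\mathbb F_q$, and the independence across rows of $\ve Q$ yields the required empirical concentration.

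The main technical work is bounding $|\mathcal K_N|$ so that it matches (\ref{eq:size_sumset}). One branch is the trivial upper bound $|\mathcal K_N|\leq q^{k_1+k_2}=2^{n(R_1+R_2)}$; the other branch $2^{n(\max\{R_1,R_2\}+D(q))}$ would be extracted by the following counting. Knowing the info block of $W^n$ pins down $\ve m\oplus\ve n$, which in turn determines $\ve Q(\ve m\oplus\ve n)\bmod q$; the parity block $\ve Q\ve m+\ve Q\ve n$ then takes the form $\ve Q(\ve m\oplus\ve n)+q\ve c$ for a binary carry pattern $\ve c\in\{0,1\}^{n-k_1}$, and in the $P_W$-typical region only $\doteq 2^{(n-k_1)D(q)}$ such carry patterns are consistent. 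Combined with the fact that only $\doteq 2^{n\max\{R_1,R_2\}}$ info blocks arise from legitimate codeword pairs (since the info block factors through $\mathcal C_1$ or $\mathcal C_2$, whichever is larger), this yields the second branch. For the lower bound I would compute, over the randomness of $(\ve Q,\ve H)$, the expected number of distinct sum codewords via a first-moment argument and establish a matching second-moment bound to turn expectation into an a.a.s.\ statement.

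The step I expect to be the main obstacle is this second-moment estimate: given distinct message pairs $(\ve m_1,\ve n_1')\neq(\ve m_2,\ve n_2')$, one must estimate the collision probability $\pp{T_1^n(\ve m_1)+T_2^n(\ve n_1')=T_1^n(\ve m_2)+T_2^n(\ve n_2')}$ over $\ve Q,\ve H$. This probability depends on the linear-algebraic type of $(\ve m_1\oplus\ve m_2,\ve n_1\oplus\ve n_2)\in\mathbb F_q^{k_1}\times\mathbb F_q^{k_1}$, and the sum over all such collision events with correct multiplicities must reproduce the crossover at $\max\{R_1,R_2\}+D(q)$. Once this concentration is in place, the induced distribution (\ref{eq:AEP}) follows because every $\ve w\in\mathcal K_N$ has, up to subexponential factors, the same number of $(\ve m,\ve n')$-preimages $q^{k_1+k_2}/|\mathcal K_N|$, so $P_{\mathcal S}$ is essentially uniform on $\mathcal K_N$.
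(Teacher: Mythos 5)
Your core structural insight---the carry-pattern decomposition of the parity block, i.e.\ that fixing the modular sum $\ve m\oplus\ve n$ pins down $\ve Q(\ve m\oplus\ve n)\bmod q$ and the integer parity block can only differ by a binary carry vector $\ve c\in\{0,1\}^{n-k_1}$---is exactly Lemma 6 (\ref{lemma:2values}) in the paper, and your observation that only $\doteq 2^{(n-k_1)D(q)}$ of these carry vectors yield a $P_W$-typical parity block is what drives the saturation branch. So the approach is fundamentally the same as the paper's.

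However, your count of the number of distinct info blocks is wrong, and the factorization breaks as a result. You claim there are $\doteq 2^{n\max\{R_1,R_2\}}$ legitimate info blocks ``since the info block factors through $\mathcal C_1$ or $\mathcal C_2$.'' That count is for the \emph{modular} sum $\ve m\oplus\ve n$, which indeed lies in $\mathcal C_1$ and takes $q^{k_1}$ values. But the info block in your decomposition is the \emph{integer} sum $\ve m+\ve n$ over $\mathbb Z^{k_1}$, which carries the carry bits of those first $k_1$ coordinates as well; it does not factor through $\mathcal C_1$. In the saturated regime the number of typical integer-sum info blocks is $\doteq 2^{k_1 H(W)}=2^{nR_1 H(W)/\log q}$ (this is what the paper's Lemma 3 and Lemma 5 jointly deliver: $2^{k_2 H(W)}$ choices of $\sf s$ times $2^{(k_1-k_2)H(W)}$ choices of $\sf p_1$), and since $H(W)>\log q$ this is strictly larger than $2^{nR_1}$. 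With the correct count the product is $2^{k_1 H(W)}\cdot 2^{(n-k_1)D(q)}=2^{n(R_1+D(q))}$ as required; with your count of $2^{nR_1}$ you get $2^{n(R_1+D(q)-R_1D(q)/\log q)}$, which is strictly smaller and does not match (\ref{eq:size_sumset}). The slip is precisely the conflation of modular and integer sums in the message block.

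The other point worth flagging is that you label the second-moment collision estimate ``the main obstacle'' and leave it unexecuted. That is indeed where the paper does most of its work, and crucially the paper does not compute a flat collision probability over all pairs of message pairs: it conditions sequentially, first on the info-sum $\ve s$, then on the partial parity-sum $\ve p_1$, and then computes first and second moments of the coincidence count $Z_2(\ve p)$ \emph{conditioned on the carry-pattern event} $F(\ve a)$ (Appendix C). The conditioning on $F(\ve a)$ is what makes $\E{Z_2(\ve p)\mid F(\ve a)}$ and $\var{Z_2(\ve p)\mid F(\ve a)}$ computable in closed form and keeps the variance below the squared mean. A direct unconditional collision analysis over $(\ve Q,\ve H)$ mixes the modular and carry degrees of freedom and would need to reinvent essentially this conditional structure; stating it as a single open ``obstacle'' understates the gap.
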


\begin{remark}
There exist linear codes with a smaller typical sumset than $|\mathcal K_N|$. As an extreme example consider the sumset $\mathcal C+\mathcal C$ where a systematic $(n,k)$-linear codes $\mathcal C$ is generated with the generator matrix $[\ve I; \ve 0]$, i.e., the $\ve Q$ matrix is the zero matrix. Since the sum codewords are essentially $k$-length sequences with each entry i.i.d. with  distribution $P_W$, it is easy to see that the set of typical sequences $\mathcal A_{[W]}^k$ is actually a typical sumset for this code with size $2^{kH(W)}=2^{nRH(W)/\log q}$ where $W$ has the distribution in (\ref{eq:P_W}). This code has a typical sumset which is smaller than the normal typical sumset as demonstrated in Figure \ref{fig:small_sumset}. However this kind of codes are rare and the above theorem states that a randomly picked systematic linear code has a normal typical sumset a.a.s..
\label{remark:small_sumset}
\end{remark}

\begin{figure}[hbt!]
\centering
\includegraphics[scale=0.6]{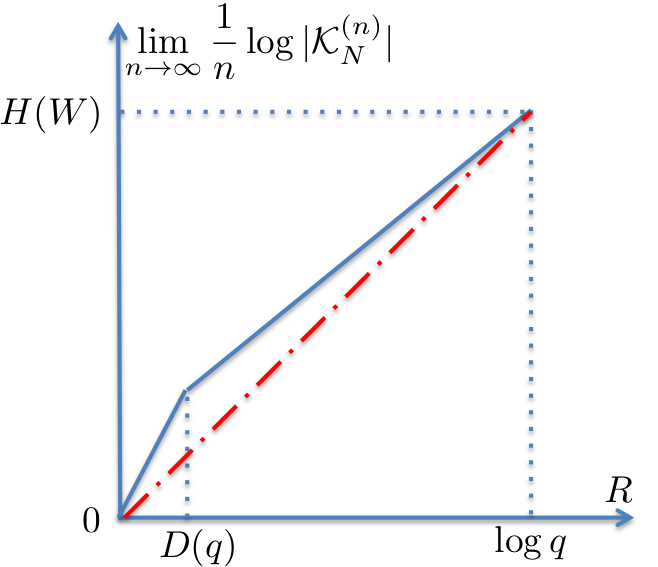}
\caption{A linear code with a typical sumset which is not normal:  the solid line shows the size of  the normal typical sumset and the dot-dashed line shows the size of a typical sumset of the example in Remark \ref{remark:small_sumset}. This code has a small typical sumset with size $2^{nR H(W)/\log q}$ but is uninteresting for the purpose of error corrections.}
\label{fig:small_sumset}
\end{figure}


We first prove Theorem \ref{thm:size_sumset_systematic}. In the following we will always assume without loss of generality that $k_1\geq k_2$. Let $\mathcal C_1$ be an $(n,k_1)$-systematic linear code and $\mathcal C_2$ be an $(n,k_2)$-systematic linear code generated using the same  generator matrix $[\ve I; \ve Q]$ as in (\ref{eq:systematic_proof}).  We fix a vector $\ve d$ and let $\mathcal C_1'=\mathcal C_1\oplus\ve d$ as in (\ref{eq:coset}).   We use $\ve d_1$ to denote the first $k_1$ entries of $\ve d$, $\ve d_2$ to denote the entries from $k_1-k_2$ to $k_1$ and $\ve d_3$ the last $n-k_1$ entries of $\ve d$.   Assume two messages $\ve m, \ve n'$ are independently and uniformly chosen from $\mathbb F_q^{k_1}, \mathbb F_q^{k_2}$, respectively,  and two codewords $\ve t\in\mathcal C_1$,  $\ve v\in\mathcal C_2$ are formed using $\ve m, \ve n'$ as in (\ref{eq:systematic_proof}).   The sum codeword of $\mathcal C_1'+\mathcal C_2$ can be written as
\begin{align}
(\ve t\oplus \ve d)+\ve v=\begin{bmatrix}
(\ve m_1\oplus \ve d_1)+\ve n'\\
(\ve m_2\oplus \ve d_2)+\ve H \ve n'\\
(\ve Q\ve m\oplus \ve d_3)+\ve Q\ve n
\end{bmatrix}
:=\begin{bmatrix}
\sf s(\ve m, \ve n')\\
\sf p_1(\ve m, \ve n')\\
\sf p_2(\ve m, \ve n')
\end{bmatrix}
\label{eq:sum_systematic}
\end{align}
where we use $\ve m_1$ to denote the first $k_2$ entries of $\ve m$ and $\ve m_2$ to denote its remaining entries. We use $\sf s(\ve m,\ve n')$ to denote the first $k_2$ entries of the sum codewords. We also use $\sf p_1(\ve m, \ve n')$ and $\sf p_2(\ve m, \ve n')$ to denote the entries of the sum codewords with indices ranging from $k_2$ to $k_1$, and with indices ranging from $k_1$ to $n$, respectively. In the sequel we will refer to $\sf s(\ve m,\ve n')$ and $\sf p_1(\ve m, \ve n'), \sf p_2(\ve m, \ve n')$ defined above as the \textit{information-sum} and \textit{parity-sum}, respectively. We shall omit their dependence on $\ve m,\ve n'$ and use $\sf s, \sf p_1, \sf p_2$ if it is clear in the context.

We choose  $\mathcal K_N$ to be the set which  contains  sum codewords whose information-sum $\ve s$  is typical, that is
\begin{align}
\mathcal K_N:=\left\{(\ve t\oplus \ve d)+\ve v \middle| (\ve t\oplus \ve d)+\ve v=\begin{bmatrix}\sf s\\ {\sf p}_1 \\ {\sf p}_2\end{bmatrix} \text{ where }   {\sf s}\in \mathcal A^{(k_2)}_{[W]} \right\}
\label{eq:K_normal}
\end{align}
with ${\sf s}, {\sf p}_1, {\sf p}_2$ defined in (\ref{eq:sum_systematic}) and $W$ defined in (\ref{eq:P_W}). For all pairs of codewords $(\ve t, \ve v)$ whose information-sum  equals to a common value $\ve s$, we define the set of all possible parity-sums as
\begin{align}
\mathcal P_{\ve Q, \ve H}(\ve s):=\left\{  \begin{bmatrix}
(\ve m_2\oplus \ve d_2)+\ve H \ve n'\\
(\ve Q\ve m\oplus \ve d_3)+\ve Q\ve n
\end{bmatrix}:\ve m\in \mathbb F_q^{k_1}, \ve n' \in\mathbb F_q^{k_2}\mbox{ such that } \sf s(\ve m,\ve n')=\ve s \right\}.
\label{def:p}
\end{align}
with $\ve n:=\begin{bmatrix}
\ve n'\\
\ve H\ve n'
\end{bmatrix}$. To facilitate our analysis, we further decompose the above set in the following way. When the information-sum is fixed to be $\sf s(\ve m,\ve n')=\ve s $, we define the set of  possible parity-sums $\sf p_1$ as
\begin{align}
\mathcal P_{1, \ve H}(\ve s):=\left\{  
(\ve m_2\oplus \ve d_2)+\ve H \ve n': \ve m\in \mathbb F_q^{k_1}, \ve n' \in\mathbb F_q^{k_2}\mbox{ such that } \sf s(\ve m,\ve n')=\ve s  \right\}.
\label{def:p1}
\end{align}
When the information-sum  $\sf s$ is fixed to be $\sf s(\ve m,\ve n')=\ve s $ and the parity-sum $\sf p_1$ is fixed to be ${\sf p}_1(\ve m,\ve n')=\ve p_1  $, we also define the set of  possible parity-sums ${\sf p}_2$ as
\begin{align}
\mathcal P_{2,\ve Q, \ve H}(\ve s, \ve p_1):=\left\{  
(\ve Q\ve m\oplus \ve d_3)+\ve Q\ve n: \ve m\in \mathbb F_q^{k_1}, \ve n' \in\mathbb F_q^{k_2}\mbox{ such that } {\sf s}(\ve m,\ve n')=\ve s , {\sf p}_1(\ve m,\ve n')=\ve p_1 \right\}.
\label{def:p2}
\end{align}

Notice we have the following relationship between the cardinality of the above three sets
\begin{align}
|\mathcal P_{\ve Q,\ve H}(\ve s)|=\sum_{\ve p_1\in\mathcal P_{1,\ve H}(\ve s)} |\mathcal P_{2,\ve Q,\ve H}(\ve s, \ve p_1)|
\label{eq:relation_paritysum}
\end{align}

In the following lemma we show that the set $|\mathcal K_N|$ defined in (\ref{eq:K_normal}) is indeed a typical sumset. We also give a simple estimate on its size.

\begin{lemma}[The typical sumset $|\mathcal K_N|$]
Let $\mathcal C_1$ be an $(n,k_1)$-systematic linear code and $\mathcal C_2$  an $(n,k_2)$-systematic linear code ($k_1\geq k_2$) which are generated as in (\ref{eq:systematic_proof}). Let $\mathcal C_1'=\mathcal C_1\oplus \ve d$ for any fixed $\ve d$ and $T_1^n,T_2^n$ be two random variables uniformly distributed in  $\mathcal C_1', \mathcal C_2$, respectively. We have
\begin{align}
\pp{T_1^n+T_2^n\in \mathcal K_N}\rightarrow 1 \mbox{ as }n\rightarrow\infty
\end{align}
with $\mathcal K_N$ defined in (\ref{eq:K_normal}). Furthermore we have
\begin{align*}
|\mathcal K_N|=\sum_{\ve s\in \mathcal A^{(k_2)}_{[W]}}|\mathcal P_{\ve Q, \ve H}(\ve s)|
\end{align*}
with $W$ defined in  (\ref{eq:P_W}).
\label{lemma:typical_matter}
\end{lemma}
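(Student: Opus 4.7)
The plan is to prove the two assertions of the lemma in succession, both by fairly direct arguments that rest on the decomposition in (\ref{eq:sum_systematic}). Throughout, the matrices $\ve Q$ and $\ve H$ can be treated as fixed, since the lemma makes no probabilistic claim about them; only $\ve m$ and $\ve n'$ (and hence $T_1^n$, $T_2^n$) are random.

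For the asymptotic typicality claim, I would start by pinning down the distribution of the information-sum ${\sf s}(\ve m,\ve n') = (\ve m_1\oplus\ve d_1) + \ve n'$. The message $\ve m$ is uniform on $\mathbb F_q^{k_1}$, so its first $k_2$ coordinates $\ve m_1$ are uniform on $\mathbb F_q^{k_2}$; modular translation by the fixed vector $\ve d_1$ preserves this, so $\ve m_1\oplus\ve d_1$ remains uniform. Since $\ve n'$ is independently uniform on $\mathbb F_q^{k_2}$, each entry ${\sf s}_i$ is the integer sum of two independent $P_U$ variables, which is precisely $P_W$ as given in (\ref{eq:P_W}); the entries are independent across $i$, so ${\sf s}$ is i.i.d.\ $P_W$ of length $k_2$. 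Applying Lemma \ref{lemma:typical_sequence} with $W$ in place of $U$ and block length $k_2$ yields $\pp{{\sf s}\in\mathcal A^{(k_2)}_{[W]}} \geq 1 - 2|\mathcal W|\,e^{-2k_2\delta^2}$, which tends to $1$ as $n\to\infty$ provided $k_2\to\infty$; this is immediate when $R_2>0$, and the degenerate $R_2=0$ case can be handled by treating $\mathcal C_2$ as a single coset (where $\mathcal K_N$ trivially contains the only sum). By the very definition (\ref{eq:K_normal}), membership of $(\ve t\oplus\ve d)+\ve v$ in $\mathcal K_N$ is equivalent to ${\sf s}\in\mathcal A^{(k_2)}_{[W]}$, so the first assertion follows.

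For the cardinality formula, I would exploit the fact that the first $k_2$ coordinates of every sum codeword in $\mathcal K_N$ equal the corresponding ${\sf s}$. Consequently, two sum codewords with distinct information-sums are automatically distinct integer vectors, and $\mathcal K_N$ partitions as the disjoint union over ${\sf s}\in\mathcal A^{(k_2)}_{[W]}$ of its elements with prescribed first block ${\sf s}$. By the definition (\ref{def:p}), the set of parity-sum pairs $({\sf p}_1,{\sf p}_2)$ arising from pairs $(\ve m,\ve n')$ whose information-sum equals ${\sf s}$ is exactly $\mathcal P_{\ve Q,\ve H}({\sf s})$. Summing cardinalities across this partition gives $|\mathcal K_N| = \sum_{\ve s\in\mathcal A^{(k_2)}_{[W]}} |\mathcal P_{\ve Q,\ve H}(\ve s)|$, as claimed.

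The only place that demands real care is verifying that ${\sf s}$ is genuinely i.i.d.\ $P_W$ under the joint randomness of $\ve m$ and $\ve n'$ for an arbitrary coset shift $\ve d$; this is the conceptual heart of the proof and hinges on the translation-invariance of the uniform law on $\mathbb F_q$. After that, the first claim is just an application of Lemma \ref{lemma:typical_sequence}, and the second is bookkeeping via the partition of $\mathcal K_N$ by its first $k_2$ coordinates. Estimating $|\mathcal P_{\ve Q,\ve H}({\sf s})|$ itself---which is where the randomness of $\ve Q,\ve H$ will matter---is deferred to subsequent lemmas en route to Theorem \ref{thm:size_sumset_systematic}.
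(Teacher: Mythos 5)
Your proposal is correct and follows essentially the same route as the paper: identify that the information-sum block is i.i.d.\ $P_W$ under the randomness of $(\ve m,\ve n')$ (using translation-invariance of the uniform law to absorb the shift $\ve d_1$), apply Lemma~\ref{lemma:typical_sequence} at block length $k_2$, and then observe that $\mathcal K_N$ partitions by its first $k_2$ coordinates so the cardinality identity is immediate. You spell out the partition argument and the degenerate $R_2=0$ case a bit more explicitly than the paper, but the structure and key facts used are identical.
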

\begin{proof}
Recall that we defined $\mathcal K_N$ in  (\ref{eq:K_normal}) to be the set containing all sum codewords whose information-sum $\sf s$ satisfies the property that  $\sf s$ is a typical sequence in $\mathcal A^{(k_2)}_{[W]}$. As shown in (\ref{eq:sum_systematic}) $\sf s=\ve m_1\oplus\ve d_1+\ve n'$ where  $\ve m_1$ and $\ve u'$ are independent vectors and are uniformly distributed in $\mathcal U^{k_2}$, then for any fixed $\ve d$, the first $k_2$ entries of $T_1^n+T_2^n$ is in fact an i.i.d. sequence distributed according to $P_W$, thanks to the systematic form of the codes.

Let  $S^{k_2}$ denote a $k_2$-length random vector with each entry i.i.d. according to $P_W$. We have
\begin{align}
\pp{T_1^n+T_2^n\in \mathcal K_N}&=\pp{S^{k_2}\in \mathcal A^{(k_2)}_{[W]}}\\
&\geq (1-2|\mathcal W|e^{-2k_2\delta^2})\\
&> 1-4qe^{-n(2\delta^2R_2/\log q)}
\label{eq:K_speed}
\end{align}
where the first inequality follows from the property of  typical sequences in Lemma \ref{lemma:typical_sequence}. Choose $\delta$ such that $n\delta^2\rightarrow \infty$, we have that $T_1^n+T_2^n\in \mathcal K_N$ a.a.s. for $n$ large enough and $R_2>0$. This shows $\mathcal K_N$ is indeed a typical sumset.  The claim  on the size of $\mathcal K_N$ follows by the definition of $\mathcal K_N$ and $\mathcal P_{\ve Q, \ve H}(\ve s)$.  
\end{proof}

The above lemma shows that we only need to focus on the message pairs $(\ve m, \ve n')$ if the information-sum ${\sf s}(\ve m, \ve n')$ is a typical sequence $\ve s\in\mathcal A_{[W]}^{(k_2)}$ as shown in (\ref{eq:K_normal}). For a given information-sum $\ve s$, we have the following characterization.

\begin{lemma}[Message pairs with given $\sf s$]
Let $\ve m, \ve n'$ be two vectors in $\mathbb F_q^{k_1}$ and $\mathbb F_q^{k_2}$ respectively. Two codes $\mathcal C_1,\mathcal C_2$ with rate $R_1, R_2$ are generated as in (\ref{eq:systematic_proof}) and their sum codewords are of the form in (\ref{eq:sum_systematic}).  There are $L$ pairs of $(\ve m,\ve n')$ satisfying $\sf s(\ve m,\ve n')=\ve s$ for some $\ve s\in\mathcal A_{[W]}^{(k_2)}$ with
\begin{align*}
L\doteq 2^{n(R_1+R_2-R_2H(W)/\log q)}
\end{align*}
\label{lemma:information_sum}
\end{lemma}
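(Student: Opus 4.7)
The plan is to reduce the counting of pairs $(\ve m,\ve n')$ with ${\sf s}(\ve m,\ve n')=\ve s$ to an entry-wise count, and then exploit the defining property of $\mathcal A_{[W]}^{(k_2)}$ to recognize the product of per-entry counts as $2^{k_2(2\log q - H(W))}$ up to subexponential factors.

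First I would observe that from the formula ${\sf s}(\ve m,\ve n')=(\ve m_1\oplus\ve d_1)+\ve n'$, only the first $k_2$ entries $\ve m_1$ of $\ve m$ play any role; the remaining $k_1-k_2$ entries $\ve m_2$ may be chosen freely in $\mathbb F_q^{k_1-k_2}$. This factors $L$ as
\begin{align*}
L = q^{k_1-k_2}\cdot N(\ve s),
\end{align*}
where $N(\ve s)$ is the number of pairs $(\ve m_1,\ve n')\in\mathbb F_q^{k_2}\times\mathbb F_q^{k_2}$ with $(\ve m_1\oplus\ve d_1)+\ve n'=\ve s$. Because the map $\ve m_1\mapsto\ve m_1\oplus\ve d_1$ is a bijection on $\mathbb F_q^{k_2}$, the coset shift can be absorbed and we may count pairs $(\ve a,\ve b)\in\mathcal U^{k_2}\times\mathcal U^{k_2}$ with $\ve a+\ve b=\ve s$ entry-wise.

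Second, for each coordinate $i\in[1:k_2]$, the count of $(a,b)\in\mathcal U\times\mathcal U$ with $a+b=\ve s_i$ is precisely $q^2 P_W(\ve s_i)$, by the definition of $P_W$ in (\ref{eq:P_W}). Since the coordinates are independent, this gives
\begin{align*}
N(\ve s) = \prod_{i=1}^{k_2} q^2 P_W(\ve s_i) = q^{2k_2}\prod_{a\in\mathcal W}P_W(a)^{N(a|\ve s)},
\end{align*}
where $N(a|\ve s)$ is the occurrence count of symbol $a$ in $\ve s$.

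Third, since $\ve s\in\mathcal A_{[W]}^{(k_2)}$, the typicality bound $|N(a|\ve s)/k_2 - P_W(a)|\leq\delta$ holds for every $a\in\mathcal W$. Taking logarithms,
\begin{align*}
\log N(\ve s) = 2k_2\log q + \sum_{a\in\mathcal W}N(a|\ve s)\log P_W(a) = k_2\bigl(2\log q - H(W)\bigr) + k_2\,\eta_n,
\end{align*}
where $|\eta_n|\leq\delta\sum_{a\in\mathcal W}|\log P_W(a)|$ is $o(1)$ under the regime $\delta\to 0$, $n\delta^2\to\infty$ fixed in Section \ref{sec:notation}. Hence $N(\ve s)\doteq 2^{k_2(2\log q - H(W))}$. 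Combining with the $q^{k_1-k_2}$ factor and using $k_j/n = R_j/\log q$ yields
\begin{align*}
L \doteq 2^{(k_1-k_2)\log q + k_2(2\log q - H(W))} = 2^{n(R_1 + R_2 - R_2 H(W)/\log q)},
\end{align*}
as claimed. The only non-routine step is the uniform control of $\eta_n$ to absorb it into the $\doteq$ notation, which is immediate from the chosen scaling of $\delta$; note also that $L$ is the same for every typical $\ve s$ only up to a $2^{o(n)}$ slack, which is exactly what $\doteq$ tolerates.
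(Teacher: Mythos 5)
Your proof is correct and follows essentially the same route as the paper's: both factor out the free $q^{k_1-k_2}$ choices of $\ve m_2$, count pairs coordinate-by-coordinate under $\ve a+\ve n'=\ve s$, and invoke typicality of $\ve s$ to collapse the product into $2^{k_2(2\log q - H(W) + o(1))}$. The only cosmetic difference is that the paper lists the $c+1$ (resp.\ $2q-1-c$) integer pairs summing to each $c$ explicitly and then assembles the product $\prod_a a^{(a/q^2+o(1))k_2}\prod_a a^{(a/q^2+o(1))k_2}$, whereas you package the per-coordinate count directly as $q^2 P_W(\ve s_i)$, which makes the appearance of $H(W)$ immediate — a mild streamlining, not a different argument. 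One small wording nit: "the coordinates are independent" is a probabilistic locution for what is actually a deterministic multiplicativity of the count across coordinates (the constraint $\ve a+\ve n'=\ve s$ factorizes entry-wise); the claim is right, just phrase it combinatorially.
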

\begin{proof}
Recall that for any fixed $\ve d$, we defined ${\sf s}(\ve m,\ve n'):=(\ve m_1\oplus \ve d_1) +\ve n'$ where $\ve m_1, \ve d_1$ is given in (\ref{eq:sum_systematic}). For a given value $\ve s_{i}\in\mathcal W$, we can write out all possible $(\ve m_{1,i}\oplus\ve d_{1,i}, \ve n_{i})$ summing up to $\ve s_i$ explicitly:
\begin{align*}
\ve s_{i}: &(\ve m_{1,i}\oplus\ve d_{1,i},\ve n'_{i}) \mbox{ such that }\ve m_{1,i}\oplus\ve d_{1,i}+\ve n'_{i}=\ve s_{i}, i=1,\ldots,k_2 \\
0: &(0,0)\\
1: &(0,1), (1,0)\\
2: &(1,1), (2,0), (0,2)\\
3: &(0,3), (3,0),(1,2), (2,1)\\
&\vdots\\
q-1: &(0,q-1),(q-1,0), (1,q-2),(q-2,1),\ldots, ((q-1)/2,(q-1)/2)\\
&\vdots\\
2q-3: &(q-1,q-2), (q-2,q-1)\\
2q-2: &(q-1,q-1)
\end{align*}
We can show that  the number of different pairs $(\ve m_1\oplus \ve d_1,\ve n')$ satisfying $\ve m_1\oplus\ve d_1+\ve n'=\ve s$ is
\begin{align}
&2^{(2/q^2+o(1))k_2}3^{(3/q^2+o(1))k_2}\ldots q^{(q/q^2+o(1))k_2}(q-1)^{((q-1)/q^2+o(1))k_2}\ldots2^{(2/q^2+o(1))k_2}\\
&=\prod_{a=1}^q a^{(a/q^2+o(1))k_2} \prod_{a=1}^{q-1} a^{(a/q^2+o(1))k_2} \\
&=2^{k_2(\log q-D(q)+o(1))}
\end{align}
To see why this is the case, recall that since $\ve s$ is a typical sequence in $\mathcal A_{[W]}^{(k_2)}$,  there are for example $(2/q^2+o(1))k_2$ entries in $\ve s$ taking value $1$, as implied by the definition of typical sequences in (\ref{eq:typical_def}) and the distribution  $P_W$. The pair $(\ve m_{1,i}\oplus\ve d_{1,i}, \ve n'_{i})$ can take value $(1,0)$ or $(0,1)$ in these entries. Hence there are $2^{(2/q^2+o(1))k_2}$ different choices on the pair $(\ve m_1\oplus\ve d_1, \ve n')$ for those entries. The same argument goes for other entries taking values $2,\ldots, 2q-2$ using the number of possible values of $(\ve m_{1,i}\oplus\ve d_{1,i},\ve n'_{i})$ shown in the above list. Furthermore since there are $q^{k_1-k_2}$ possible $\ve m_2$ for each of the $(\ve m_1\oplus \ve d_1, \ve n')$,  the number of $(\ve m,\ve n')$ giving ${\sf s}(\ve m, \ve n')=\ve s$ is
\begin{align*}
L=2^{k_2(\log q-D(q)+o(1))}\cdot q^{k_1-k_2}\doteq 2^{n(R_1+R_2-R_2H(W)/\log q)}
\end{align*}
which proves the claim.
\end{proof}

In the following lemmas we will give the estimates on the size of parity-sums.
\begin{lemma}[Estimates of $|\mathcal P_1|$]
Let $\ve m, \ve n'$ be two independent random vectors which are uniformly distributed in $\mathbb F_q^{k_1}$ and  $\mathbb F_q^{k_2}$, respectively.  For the pairs $(\ve m,\ve n')$  satisfying ${\sf s}(\ve m,\ve n')=\ve s$ for some $\ve s\in\mathcal A_{[W]}^{(k_2)}$,  let $\mathcal P_1(\ve s)$ denote the random set formed in (\ref{def:p1}), where each entry of $\ve H$ is i.i.d. according to the uniform distribution in $\mathbb F_q$. Then asymptotically almost surely it holds that
\begin{align*}
|\mathcal P_1(\ve s)|\doteq 2^{n(R_1+R_2-R_2H(W)/\log q)}
\end{align*}
if  $R_2\leq R_1D(q)/\log q$, and 
\begin{align*}
|\mathcal P_1(\ve s)|\doteq 2^{n(R_1-R_2)H(W)/\log q}
\end{align*}
if $R_2\geq R_1D(q)/\log  q$.  
\label{lemma:size_p1}
\end{lemma}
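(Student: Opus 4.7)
The plan is to analyze $|\mathcal P_1(\ve s)|$ via a first/second moment argument over the randomness of $\ve H$. Fix a typical $\ve s\in\mathcal A^{(k_2)}_{[W]}$. By Lemma \ref{lemma:information_sum}, each valid $\ve n'$ uniquely determines $\ve m_1$, so the number of valid $\ve n'$'s is $N\doteq 2^{k_2(\log q-D(q))}$; together with the $q^{k_1-k_2}$ free choices of $\ve m_2$, the total number of pairs with ${\sf s}(\ve m,\ve n')=\ve s$ is $L\doteq 2^{n(R_1-R_2D(q)/\log q)}$. The key reformulation is $\mathcal P_1(\ve s)=\{0,\ldots,q-1\}^{k_1-k_2}+B$ as an integer sumset, where $B:=\{\ve H\ve n':\ve n'\text{ valid}\}\subseteq\mathbb F_q^{k_1-k_2}$ is a random subset determined by $\ve H$.

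For each $\ve p\in[0,2q-2]^{k_1-k_2}$, let $N_{\ve p}$ count the valid pairs with ${\sf p}_1(\ve m,\ve n')=\ve p$. Using that $\ve H\ve n'$ is uniform on $\mathbb F_q^{k_1-k_2}$ for every nonzero $\ve n'$, one computes $\E{N_{\ve p}}=L\cdot P_W^{k_1-k_2}(\ve p)$, where $P_W^{k_1-k_2}$ is the coordinate-wise product of $P_W$. For typical $\ve p$ in $T:=\mathcal A^{(k_1-k_2)}_{[W]}$ this simplifies to $\E{N_{\ve p}}\doteq L/|T|$, so the threshold $R_2=R_1 D(q)/\log q$ corresponds exactly to $L=|T|$. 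Moreover, because $(\ve H\ve n',\ve H\tilde{\ve n}')$ is jointly uniform for linearly independent $\ve n',\tilde{\ve n}'$, the second moment factors cleanly as $\E{N_{\ve p}^2}\doteq\E{N_{\ve p}}+\E{N_{\ve p}}^2$ for typical $\ve p$.

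In the regime $L\leq|T|$, $\E{N_{\ve p}}\lesssim 1$ for typical $\ve p$, and a pointwise application of Paley--Zygmund gives $\pp{N_{\ve p}\geq 1}\geq\E{N_{\ve p}}^2/\E{N_{\ve p}^2}\gtrsim \E{N_{\ve p}}$; summing over $T$ yields $\E{|\mathcal P_1\cap T|}\doteq L$, and a Chebyshev estimate on the total (using that pairwise covariances $\text{Cov}(\mathbf 1[\ve p\in\mathcal P_1],\mathbf 1[\ve p'\in\mathcal P_1])$ decay for distinct typical $\ve p,\ve p'$) promotes this to $|\mathcal P_1|\doteq L$ a.a.s., matching the trivial $|\mathcal P_1|\leq L$. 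In the regime $L\geq|T|$, Paley--Zygmund instead yields $\pp{N_{\ve p}\geq 1}\to 1$ for each typical $\ve p$, so $|\mathcal P_1\cap T|\doteq|T|$; the upper bound is closed by $\E{|\mathcal P_1\cap T^c|}\leq L\cdot P_W^{k_1-k_2}(T^c)$, which is $2^{o(n)}|T|$ once the typicality parameter $\delta$ is suitably chosen.

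The main obstacle I anticipate is the concentration step near the threshold $L\approx|T|$, where each $N_{\ve p}$ fluctuates like a Poisson variable of mean $\approx 1$. A direct Cauchy--Schwarz bound $|\mathcal P_1|\geq L^2/\sum_{\ve p}N_{\ve p}^2$ turns out to be too lossy, since non-typical $\ve p$ can contribute a disproportionate share of $\sum \E{N_{\ve p}^2}$ (as can be seen by comparing $\sum_w P_W(w)^2$ to $2^{-H(W)}$). The remedy is to restrict the second moment to $\ve p\in T$, where it factors as $\doteq L+L^2/|T|$, and to apply the Paley--Zygmund bound pointwise rather than globally; the non-typical contribution is treated separately by tuning $\delta$ so that the typicality-error factor beats $L/|T|$.
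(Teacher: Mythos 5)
Your proposal rests on the same first\- and second\-moment computation as the paper's proof: your $N_{\ve p}$ is exactly the paper's $Z_1(\ve p)$, and both establish $\E{N_{\ve p}}\doteq L/|T|$ for typical $\ve p$ together with $\E{N_{\ve p}^2}\lesssim \E{N_{\ve p}}+\E{N_{\ve p}}^2$. What differs is how these moments are assembled. The paper shows via Chebyshev that $N_{\ve p}$ concentrates near its mean $\mu\doteq 2^{n(R_2-R_1D(q)/\log q)}$ for each typical $\ve p$ and then reads $|\mathcal P_1|\approx L/\mu$ from the exact budget $\sum_{\ve p\in\mathcal P_1}N_{\ve p}=L$. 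You instead apply Paley--Zygmund to $\pp{N_{\ve p}\ge 1}$, sum over $T:=\mathcal A^{(k_1-k_2)}_{[W]}$, and concentrate the resulting count. The sumset reformulation $\mathcal P_1(\ve s)=\{0,\dots,q-1\}^{k_1-k_2}+B$ is a nice structural observation the paper does not make, but you never actually use it. One simplification for the regime $R_2\le R_1D(q)/\log q$: since $|\mathcal P_1|\le L$ holds deterministically, once Paley--Zygmund delivers $\E{|\mathcal P_1|}\ge(1-o(1))L$, a one\-line Markov bound on the nonnegative variable $L-|\mathcal P_1|$ finishes; the covariance\-decay Chebyshev step you invoke is unnecessary.

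The genuine gap is in the upper bound for the regime $R_2\ge R_1 D(q)/\log q$. You propose to close it via $\E{|\mathcal P_1\cap T^c|}\le L\cdot P_W^{k_1-k_2}(T^c)$ and assert this is $2^{o(n)}|T|$ ``once $\delta$ is suitably chosen.'' This does not hold. The ratio $L/|T|\doteq 2^{n(R_2-R_1D(q)/\log q)}$ is $2^{\Theta(n)}$ whenever $R_2$ is bounded away from the threshold, whereas the typicality tail is only $P_W^{k_1-k_2}(T^c)\le 2|\mathcal W|\,e^{-2(k_1-k_2)\delta^2}=2^{-O(n\delta^2)}$. To have $|T|\doteq 2^{(k_1-k_2)H(W)}$ one must send $\delta\to 0$, but then $n\delta^2=o(n)$ and the tail cannot cancel a $2^{\Theta(n)}$ factor; taking $\delta$ to be a fixed positive constant beats the tail but shifts $|T|$ itself by a factor $2^{\Theta(n)}$, again breaking the $\doteq$ claim. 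So the expected number of pairs whose parity\-sum is atypical is not dominated by $|T|$, and a crude Markov bound on $|\mathcal P_1\cap T^c|$ cannot finish the argument. The paper's proof avoids this specific calculation by never isolating $|\mathcal P_1\cap T^c|$: it bounds $|\mathcal P_1\cap T|$ directly through $\sum_{\ve p\in\mathcal P_1\cap T}Z_1(\ve p)\le L$ combined with $Z_1(\ve p)\gtrsim\mu$ on typical hits. To make your variant rigorous you would need a refined type\-class analysis of the atypical parity\-sums rather than a single expectation bound.
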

\begin{proof}
We will bound the possible number of different parity-sum ${\sf p}_1$ given the condition that ${\sf s}(\ve m, \ve n')=\ve s$ for some $\ve s\in\mathcal A_{[W]}^{(k_2)}$. It is shown in Appendix \ref{appendix:parity_sum}, Lemma \ref{lemma:check_distr} that  each entry of the parity sum ${\sf p}_1$ is i.i.d. according to $P_W$ hence the probability that the parity-sum ${\sf p}_1$ being atypical is negligible. For a given typical vector $\ve p\in\mathcal A^{(k_1-k_2)}_{[W]}$,  define the random variable $Z_1(\ve p)$ to be the number of pairs $(\ve m, \ve n')$ whose parity sum $\sf p_1$ is equal to $\ve p$. In other words, define the random set
\begin{align*}
\mathcal Z_1(\ve p):=\{(\ve m,\ve n'): \ve m_2\oplus \ve d_2+\ve H\ve n'=\ve p\}
\end{align*}
where each entry of $\ve H$ is chosen uniformly at random from $\mathbb F_q$, the random variable $Z_1(\ve p)$ is defined as $Z_1(\ve p):=|\mathcal Z_1(\ve p)|$. In Appendix \ref{app:Z_1} we show that if $\sf s(\ve m,\ve n')=\ve s$ for some $\ve s\in\mathcal A_{[W]}^{(k_2)}$ and with randomly chosen $\ve H$,  the   conditional expectation and  variance of $Z_1(\ve p)$ for a typical sequence $\ve p\in\mathcal A^{(k_1-k_2)}_{[W]}$ is  bounded as
\begin{align}
2^{n\left(R_2-R_1D(q)/\log q -\epsilon_n \right)}\leq \E{Z_1(\ve p)|\sf s(\ve m,\ve n')=\ve s}\leq 2^{n\left(R_2-R_1D(q)/\log q +\epsilon_n \right)}
\label{eq:bounds_E_Z1}
\end{align}
for some $\epsilon_n\rightarrow 0$. For any fixed $\ve p\in\mathcal A^{(k_1-k_2)}_{[W]}$,  Markov inequality shows that
\begin{align}
\pp{Z_1(\ve p)>1)|\sf s(\ve m,\ve n')=\ve s}&\leq \E{Z_1(\ve p)|\sf s(\ve m,\ve n')=\ve s}\leq  2^{n\left(R_2-R_1D(q)/\log q +\epsilon_n \right)}
\end{align}
In the case when $R_2\leq R_1D(q)/\log q-2\epsilon_n$, we have $\pp{Z_1(\ve p)>1|{\sf s}(\ve m,\ve n')=\ve s}\leq 2^{-n\epsilon_n}$ which can be made arbitrarily small for large enough $n$ if choose $\epsilon_n$ such that $n\epsilon_n\rightarrow\infty$. As $Z_1(\ve p)$ denotes the number of pairs $(\ve m, \ve n')$ which give a parity-sum as $\sf p_1(\ve m,\ve n')=\ve p$, this means a.a.s. any typical sequence $\ve p$ can be formed by at most one pair $(\ve m, \ve n')$. In other words, every pair will form a distinct $\sf p_1(\ve m, \ve n')$ a.a.s. hence the number of distinct ${\sf p}_1$ equals to the number of pairs $(\ve m, \ve n')$ satisfying ${\sf s}(\ve m,\ve n')=\ve s$, which is given by $L$ in Lemma \ref{lemma:information_sum}. This proves the first claim by letting $\epsilon_n$ go to zero.

In the case when $R_2\geq  R_1D(q)/\log q$, we show that the number of different $\sf p_1$ is concentrated around $2^{n(R_1-R_2)H(W)/\log q}$.  For some $\epsilon'_n>0$ depending on $n$, by conditional Chebyshev inequality (see \cite[Ch. 23.4]{fristedt_modern_1997} for example) we have
\begin{align}
\pp{|Z_1(\ve p)-\E{Z_1(\ve p)}|\geq 2^{\frac{n}{2}\left(R_2-R_1D(q)/\log q +\epsilon'_n \right)}|{\sf s}(\ve m,\ve n')=\ve s}&\leq \frac{\var{Z_1(\ve p)|{\sf s}(\ve m,\ve n')=\ve s}}{2^{2\cdot\frac{n}{2}\left(R_2-R_1D(q)/\log q +\epsilon'_n \right)}}\\
&\leq \frac{\E{Z_1(\ve p)|{\sf s}(\ve m,\ve n')=\ve s}}{2^{2\cdot\frac{n}{2}\left(R_2-R_1D(q)/\log q +\epsilon'_n \right)}}\\
&\leq 2^{-n(\epsilon'_n-\epsilon_n)}
\end{align}
where we used the inequality $\var{Z_1(\ve p)|{\sf s}(\ve m,\ve n')=\ve s}\leq \E{Z_1(\ve p)|{\sf s}(\ve m,\ve n')=\ve s}$ proved in Appendix \ref{app:Z_1}. If we choose $\epsilon'_n>\epsilon_n$ and $n$ such that $n(\epsilon'_n-\epsilon_n)\rightarrow\infty$ and $\epsilon'_n\rightarrow 0$ (this is possible because $\epsilon_n\rightarrow 0$), then under the condition that ${\sf s}(\ve m,\ve n')=\ve s$, $Z_1(\ve p)$ a.a.s. satisfies
\begin{align}
\E{Z_1(\ve p)|\sf s(\ve m,\ve n')=\ve s}-2^{\frac{n}{2}(R_2-R_1D(q)/\log q+\epsilon'_n)}\leq Z_1(\ve p)\leq   \E{Z_1(\ve p)|\sf s(\ve m,\ve n')=\ve s}+2^{\frac{n}{2}(R_2-R_1D(q)/\log q+\epsilon'_n)}
\label{eq:Z1_estimates}
\end{align}
Furthermore we have the following identity regarding the total number of pairs $(\ve m,\ve n')$ satisfying ${\sf s}(\ve m,\ve n')=\ve s$:
\begin{align}
\sum_{\ve p\in \mathcal P_1(\ve s) }Z_1(\ve p)=L
\label{eq:total_number_L}
\end{align}
where $L$ is given in Lemma \ref{lemma:information_sum}. Combining (\ref{eq:Z1_estimates}) and (\ref{eq:total_number_L}),  the following estimates hold a.a.s.
\begin{align}
\frac{L}{\E{Z_1(\ve p)}+2^{\frac{n}{2}(R_2-R_1D(q)/\log q +\epsilon'_n )}}\leq |\mathcal P_1(\ve s)|\leq \frac{L}{\E{Z_1(\ve p)}-2^{\frac{n}{2}( R_2-R_1D(q)/\log q +\epsilon'_n )}}
\end{align}
Using the bounds on $\E{Z_1(\ve p)|\sf s(\ve m,\ve n')=\ve s}$ in (\ref{eq:bounds_E_Z1}) and Lemma \ref{lemma:information_sum},   $\mathcal P_1(\ve s)$ can be further bounded a.a.s. as
\begin{align}
\frac{2^{n((R_1-R_2)H(W)/\log q+o(1))}}{1+2^{-\frac{n}{2}(R_2-R_1D(q)/\log q+2\epsilon_n-\epsilon_n')}}\leq |\mathcal P_1(\ve s)|\leq \frac{2^{n((R_1-R_2)H(W)/\log q+o(1))}}{1-2^{-\frac{n}{2}(R_2-R_1D(q)/\log q-2\epsilon_n-\epsilon_n')}}
\end{align}
By the assumption that $R_2\geq R_1D(q)/\log q$, we can let  $R_2=R_1D(q)/\log q+\sigma_n$ for some $\sigma_n\rightarrow 0$. The two terms in the denumerators of the above expression can be written as
\begin{align}
2^{-\frac{n}{2}(R_2-R_1D(q)/\log q+2\epsilon_n-\epsilon_n')}&=2^{-\frac{n}{2}(\sigma+2\epsilon_n-\epsilon')}\\
 2^{-\frac{n}{2}(R_2-R_1D(q)/\log q-2\epsilon_n-\epsilon_n')}&=2^{-\frac{n}{2}(\sigma-2\epsilon_n-\epsilon_n')}
\end{align}
and both terms approaches $0$ if $\sigma_n>2\epsilon_n+\epsilon_n'$. Since both $\epsilon_n$ and $\epsilon_n'$ are chosen to approach $0$, we can also let $\sigma_n$  approach $0$. This proves that for $R_2\geq R_1D(q)/\log q$ and $n$ large enough we have a.a.s.
\begin{align} 
\frac{2^{n((R_1-R_2)H(W)/\log q+o(1))}}{1+o_n(1)}\leq |\mathcal P_1(\ve s)|\leq \frac{2^{n((R_1-R_2)H(W)/\log q+o(1))}}{1-o_n(1)}
\end{align}
or equivalently $\mathcal P_1(\ve s)\doteq 2^{n((R_1-R_2)H(W)/\log q)}$ a.a.s. if $n$ is sufficiently large. 
\end{proof}

Now we will determine the size of the parity-sums $\mathcal P_2$. The following lemma gives the key property of the parity-sum $\sf p_2$.

\begin{lemma}(Key property of parity-sum ${\sf p}_2$)
Let $\ve m, \ve n'$ be two independent random vectors which are uniformly distributed in $\mathbb F_q^{k_1}$ and  $\mathbb F_q^{k_2}$, respectively. Let $\ve H\in\mathbb F_q^{(k_1-k_2)\times k_2}$ and $\ve Q\in\mathbb F_q^{k_1\times n}$ be two matrices and $\ve d_1\in\mathbb F_q^{k_2},\ve d_2\in\mathbb F_q^{k_1-k_2}, \ve d_3\in\mathbb F_q^{n-k_1}$ some fixed vectors. We consider  all  pairs $(\ve m,\ve n')$  which satisfy the condition
\begin{subequations}
\begin{align}
{\sf s}(\ve m,\ve n')&=\ve m_1\oplus \ve d_1+\ve n'=\ve s\\
{\sf p}_1(\ve m,\ve n')&=\ve m_2\oplus \ve d_2+\ve H\ve n'=\ve p_1
\end{align}
\label{eq:same_s_same_p1}
\end{subequations}
for some $\ve s\in\mathcal W^{k_2}$ and $\ve p_1\in\mathcal W^{k_1-k_2}$. Furthermore, let ${\sf p}_{2,i}$ denote the $i$-th entry  of the parity sum ${\sf p}_2(\ve m, \ve n'):=\ve Q\ve m\oplus \ve d_3+\ve Q\ve n$ with $\ve n:=\begin{bmatrix}
\ve n'\\
\ve H\ve n'
\end{bmatrix}$. Then for all pairs $(\ve m, \ve n')$ satisfying (\ref{eq:same_s_same_p1}) and any matrices $\ve Q, \ve H$, we have
\begin{align*}
{\sf p}_{2,i}(\ve m,\ve n')\in\{a, a+q\} \mbox{ with some }a\in[0:q-1] \mbox{ for all } i \in[1:n-k_1]
\end{align*}
Equivalently, define a subset $\mathcal F(\ve a)$ in $\mathcal W^{n-k_1}$ with a vector $\ve a\in\mathcal U^{n-k_1}$ as
\begin{align}
\mathcal F(\ve a):=\{\ve p: \ve p_i\in\{a_i,a_i+q\}, i\in[1:n-k_1]\},
\end{align}
we always have
\begin{align}
\mathcal P_{2,\ve Q, \ve H}(\ve s,\ve p_1)\subseteq \mathcal F(\ve a)
\end{align}
with $\mathcal P_{2,\ve Q, \ve H}(\ve s,\ve p_1)$ defined in (\ref{def:p2}) for some $\ve a\in \mathcal U^{n-k_1}$ depending  on $\ve s, \ve p_1, \ve d, \ve H$ and $\ve Q$.
\label{lemma:2values}
\end{lemma}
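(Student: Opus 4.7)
The plan is to show that once $\ve s$ and $\ve p_1$ are fixed, the residue ${\sf p}_2 \bmod q$ is already determined; since each entry of ${\sf p}_2$ lies in $[0:2q-2]$, this forces it to take one of only two values.

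First I would reduce the two constraints in (\ref{eq:same_s_same_p1}) modulo $q$. Because $\ve m_1 \oplus \ve d_1$ and $\ve n'$ both have entries in $[0:q-1]$, the integer identity $(\ve m_1 \oplus \ve d_1) + \ve n' = \ve s$ entails $\ve m_1 \oplus \ve n' = (\ve s \bmod q) \ominus \ve d_1$ in $\mathbb F_q^{k_2}$ (with $\bmod q$ and $\ominus$ acting entry-wise, and $\ominus$ denoting subtraction in $\mathbb F_q$). Similarly $(\ve m_2 \oplus \ve d_2) + \ve H \ve n' = \ve p_1$ yields $\ve m_2 \oplus \ve H\ve n' = (\ve p_1 \bmod q) \ominus \ve d_2$ in $\mathbb F_q^{k_1-k_2}$. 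Stacking these,
\[
\ve m \oplus \ve n \,=\, \begin{bmatrix} (\ve s \bmod q) \ominus \ve d_1 \\ (\ve p_1 \bmod q) \ominus \ve d_2 \end{bmatrix} \,\in\, \mathbb F_q^{k_1},
\]
whose right-hand side depends only on $\ve s, \ve p_1, \ve d$ and not on which specific pair $(\ve m,\ve n')$ was picked out of (\ref{eq:same_s_same_p1}).

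Next I would reduce ${\sf p}_2 = (\ve Q \ve m \oplus \ve d_3) + \ve Q\ve n$ modulo $q$; using linearity of matrix multiplication over $\mathbb F_q$,
\[
{\sf p}_2 \bmod q \,=\, \ve Q(\ve m \oplus \ve n) \oplus \ve d_3 \,=:\, \ve a,
\]
which, by the previous paragraph, is a fixed vector in $\mathcal U^{n-k_1}$ determined by $\ve s, \ve p_1, \ve d, \ve Q, \ve H$ alone. Finally, every entry of ${\sf p}_2$ is the integer sum of two mod-$q$ quantities (an entry of $\ve Q\ve m \oplus \ve d_3$ and an entry of $\ve Q\ve n$), and both of these live in $[0:q-1]$, so ${\sf p}_{2,i} \in [0:2q-2]$; being congruent to $a_i$ modulo $q$ and lying in that range leaves only ${\sf p}_{2,i} \in \{a_i, a_i + q\}$, which is precisely the inclusion $\mathcal P_{2, \ve Q, \ve H}(\ve s, \ve p_1) \subseteq \mathcal F(\ve a)$. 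There is no serious obstacle to this argument: it is a short modular-arithmetic identity, and the only point requiring care is keeping straight the paper's convention that $\oplus$ and $+$ mean modular and integer addition, respectively, so that the reduction $\bmod q$ is applied on the correct side of each equation.
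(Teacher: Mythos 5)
Your proof is correct, and it is a genuinely different (and in fact cleaner) argument than the one in the paper. The paper's proof proceeds by introducing an explicit carry function $f_q(\ve a,\ve b)$ that returns $q$ or $0$ entrywise depending on whether $\ve a_i+\ve b_i\geq q$, rewrites each modular operation as an integer operation plus a multiple of $q$, and then expands ${\sf p}_{2,i}$ as a sum of inner products $\langle \hat{\ve Q}_i,\cdot\rangle + \langle\check{\ve Q}_i,\cdot\rangle$ plus several integer multiples of $q$ (the terms $qn_1,\ldots,qn_6$), carefully tracking each carry to conclude that ${\sf p}_{2,i} = a + qn'$ with $n'\in\{0,1\}$. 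You bypass all of that bookkeeping: you observe that the integer sum of two vectors with entries in $[0:q-1]$ reduces modulo $q$ to their $\mathbb F_q$-sum, so the constraints ${\sf s}=\ve s,\ {\sf p}_1=\ve p_1$ pin down $\ve m\oplus\ve n$ as a deterministic function of $(\ve s,\ve p_1,\ve d)$, and then by linearity ${\sf p}_2\bmod q=\ve Q(\ve m\oplus\ve n)\oplus\ve d_3$ is likewise fixed. Combining this with ${\sf p}_{2,i}\in[0:2q-2]$ immediately gives the two-value conclusion. What your approach buys is transparency: the ``carry'' terms $n_1,\ldots,n_6$ in the paper are exactly what the operation $\bmod\ q$ kills, and reducing modulo $q$ at the outset makes the invariance of the residue manifest without having to name and cancel each carry. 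The paper's more explicit expansion has the side benefit of exhibiting the exact integer structure of ${\sf p}_{2,i}$, but for the statement at hand (membership in $\mathcal F(\ve a)$) your route is sufficient and shorter.
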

\begin{proof}
We  rewrite the sum
\begin{align}
\ve m_1\oplus\ve d_1+\ve n'&=\ve m_1+\ve d_1+f_q(\ve m_1, \ve d_1)+\ve n'\\
\ve m_2\oplus \ve d_2+\ve H\ve n'&=\ve m_2+\ve d_2+f_q(\ve m_2, \ve d_2)+\ve H\ve n'
\end{align}
where the function $f_q: \mathcal U^k\times \mathcal U^k\rightarrow \mathcal U^k$ returns a vector of the same length as  inputs, and its $i$-th entry is given as
\begin{align}
f_q(\ve a,\ve b)_i=
\begin{cases}
q &\text{if } \ve a_i+\ve b_i\geq q\\
0 &\text{otherwise}
\end{cases}
\end{align}
Also notice that we can always write the product $\ve a^T\ve b$ in the finite field $\mathbb F_q^k$ as  $\ve a^T\ve b=\langle \ve a, \ve b\rangle +qn$ for some integer $n$ where  $\langle \ve a,\ve b\rangle$  denotes the inner product of two vectors in $\mathbb R^k$. Use $\ve Q_i$ to denote the $i$-th column of $\ve Q$, and use $\hat{\ve Q}_{i}$ to denote the first $k_2$ entries of $\ve Q_i$ and  $\check{\ve Q}_i$ to denote the remaining $k_1-k_2$ entries of $\ve Q_i$, we can rewrite the $i$-th entry of parity sum ${\sf p}_2$ as
\begin{align}
{\sf p}_{2,i}(\ve m,\ve n')&=\ve Q_i^T\ve m\oplus\ve d_{3,i}+\ve Q_i^T\ve n\\
&=\langle \ve Q_i, \ve m\rangle +qn_1+\ve d_{3,i}+f_q(\ve Q_i^T\ve m, \ve d_{3,i})+\langle \ve Q_i, \ve n\rangle+qn_2\\
&=\langle \hat{\ve Q}_i, \ve m_1\rangle + \langle \check{\ve Q}_i, \ve m_2\rangle+qn_1+\ve d_{3,i}+f_q(\ve Q_i^T\ve m, \ve d_{3,i})+\langle \hat{\ve Q}_i, \ve n'\rangle + \langle \check{\ve Q}_i, \ve H\ve n'\rangle +qn_2\\
&= \langle \hat{\ve Q}_i, \ve m_1+\ve n'\rangle +\langle \check{\ve Q}_i, \ve m_2+\ve H\ve n'\rangle +f_q(\ve Q_i^T\ve m, \ve d_{3,i})+\ve d_{3,i}+q(n_1+n_2)\\
&\stackrel{(a)}{=}\langle \hat{\ve Q}_i, \ve s-\ve d_1-f_q(\ve m_1,\ve d_1)\rangle +\langle \check{\ve Q}_i, \ve p_1-\ve d_2-f_q(\ve m_2,\ve d_2)\rangle+\ve d_{3,i}+f_q(\ve Q^T_i\ve m,\ve d_{3,i})+q(n_1+n_2)\\
&=\langle \hat{\ve Q}_i, \ve s -\ve d_1\rangle-\langle \hat{\ve Q}_i,f_q(\ve m_1,\ve d_1)\rangle+\langle \check{\ve Q}_i, \ve p_1 -\ve d_2\rangle-\langle \check{\ve Q}_i,f_q(\ve m_2,\ve d_2)\rangle+ \ve d_{3,i}\\
&+f_q(\ve Q^T_i\ve m,\ve d_{3,i})+q(n_1+n_2)
\end{align}
In step $(a)$ we used the assumption that
\begin{align*}
\ve m_1+\ve d_1+\ve n'+f_q(\ve m_1,\ve d_1)&=\ve s\\
\ve m_2+\ve d_2+\ve H\ve n'+f_q(\ve m_2,\ve d_2)&=\ve p_1
\end{align*}
Furthermore $\langle \hat{\ve Q}_i,f_q(\ve m_1,\ve d_1)\rangle=\sum_{j=1}^n\hat{\ve Q}_{i,j}f_q(\ve m_1,\ve d_1)_j$ and since $f_q(\ve m_1,\ve d_1)_j$ is either $q$ or $0$, we have $\langle \hat{\ve Q}_i,f_q(\ve m_1,\ve d_1)\rangle=n_3q$ for some integer $n_3$. Similarly we have $\langle \check{\ve Q}_i,f_q(\ve m_2,\ve d_2)\rangle=qn_4$ for some integer $n_4$ and  $f_q(\ve Q^T_i\ve m,\ve d_{3,i})=n_5q$ where $n_5$ is either $0$ or $1$. This leads to the observation that
\begin{align}
{\sf  p}_{2,i}(\ve m,\ve n')&=\langle \hat{\ve Q}_i, \ve s -\ve d_1\rangle+ \langle \check{\ve Q}_i, \ve p_1 -\ve d_2\rangle+\ve d_{3,i}+q(n_1+n_2-n_3-n_4+n_5)\\
&= a+q(n_1+n_2-n_3+n_4+n_5+n_6)\\
&:= a+qn'
\end{align}
where in the penultimate step we write  $\langle \hat{\ve Q}_i, \ve s -\ve d_1\rangle+ \langle \check{\ve Q}_i, \ve p_1 -\ve d_2\rangle+\ve d_{3,i}=a+qn_6$ for some $a\in[0:q-1]$ and integer $n_5$. On the other hand we know ${\sf p}_{2,i}(\ve m,\ve n')$ only takes value in $[0:2q-2]$, the above expression  implies ${\sf p}_{2,i}(\ve m,\ve n')$ can only equal to $a$ or $a+q$ for some $a\in[0,q-1]$, namely $n'$ can only equal to $0$ or $1$, irrespective of which pair $(\ve m, \ve n')$ is considered. In particular if $a=q-1$, we must have $n'=0$ and $\ve p_i=q-1$. We can use the same argument for all entries ${\sf p}_{2,i}(\ve m,\ve n'), i=1,\ldots, n-k_1$ and show that the entry ${\sf p}_{2,i}(\ve m,\ve n')$ can take at most two different values for any pair $(\ve m, \ve n')$ satisfying (\ref{eq:same_s_same_p1}).  Since there are $q^{n-k_1}$ different choices of $\ve a$, we can partition the whole space $\mathcal W^{n-k_1}$ into $q^{n-k_1}$ disjoint subsets $\mathcal F(\ve a)$. For any $\ve Q, \ve H$, fix the information sum $\sf s$ to be $\ve s$ and parity sum $\sf p_1$ to be $\ve p_1$, all parity-sums $\mathcal  P_{2,\ve Q,\ve H}(\ve s,\ve p_1)$ defined in (\ref{def:p2}) are confined in a subset $\mathcal F(\ve a)$. 
\end{proof}

 To lighten the notation, for  given $\ve s, \ve p_1$ we define 
\begin{align}
F(\ve a):=\{\mathcal P_{2,\ve Q,\ve H}(\ve s, \ve p_1)\subseteq \mathcal F(\ve a)\}
\end{align}
to denote the event when all parity-sums are contained in the set $\mathcal F(\ve a)$.


\begin{lemma}[Estimates of $|\mathcal P_2|$]
Let $\ve m, \ve n'$ be two independent random vectors which are uniformly distributed in $\mathbb F_q^{k_1}$ and  $\mathbb F_q^{k_2}$, respectively.  For the pairs $(\ve m,\ve n')$  satisfying ${\sf s}(\ve m,\ve n')=\ve s$ and ${\sf p}_1(\ve m,\ve n')=\ve p_1$ for some $\ve s\in\mathcal A_{[W]}^{(k_2)}$ and $\ve p_1\in\mathcal A_{[W]}^{(k_1-k_2)}$,  let $\mathcal P_2(\ve s, \ve p_1)$ denote the random set of parity-sum ${\sf p}_2$ formed in (\ref{def:p2}), where each entry of $\ve H$ and $\ve Q$ is chosen i.i.d. uniformly at random in $\mathbb F_q$. 
\begin{itemize}
\item If $R_2<R_1D(q)/\log q$, it holds a.a.s. that
\begin{align*}
|\mathcal P_2(\ve s,\ve p_1)|=1.
\end{align*}
\item If $R_1D(q)/\log q\leq R_2\leq  D(q)$, it holds a.a.s. that
\begin{align*}
|\mathcal P_2(\ve s, \ve p_1)|\doteq 2^{n(R_2-R_1D(q)/\log q)}.
\end{align*}
\item If $R_2\geq D(q)$, it holds a.a.s. that
\begin{align*}
|\mathcal P_2(\ve s, \ve p_1)|\doteq 2^{n(D(q)-R_1D(q)/\log q)}.
\end{align*}
\end{itemize}
\label{lemma:size_p2}
\end{lemma}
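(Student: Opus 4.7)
The proof follows the three-case split in the statement. Case 1 ($R_2<R_1D(q)/\log q$) is immediate from Lemma \ref{lemma:size_p1}: in this rate range that lemma shows that a.a.s.\ distinct pairs $(\ve m,\ve n')$ satisfying ${\sf s}(\ve m,\ve n')=\ve s$ yield distinct parity-sums ${\sf p}_1$, so the set $\mathcal M(\ve s,\ve p_1)$ of pairs satisfying both ${\sf s}=\ve s$ and ${\sf p}_1=\ve p_1$ is a singleton and trivially $|\mathcal P_2(\ve s,\ve p_1)|=1$.

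For Cases 2 and 3 I would import from the proof of Lemma \ref{lemma:size_p1} the estimate $|\mathcal M(\ve s,\ve p_1)|\doteq M:=2^{n(R_2-R_1D(q)/\log q)}$ a.a.s., and then run a first-and-second-moment argument parallel to the one used there, now with $\ve Q$ as the source of randomness and $\mathcal F(\ve a)$ (from Lemma \ref{lemma:2values}, of size $2^{n-k_1}$) as the ambient set. Define
\begin{align*}
Z_2(\ve p):=\left|\{(\ve m,\ve n')\in\mathcal M(\ve s,\ve p_1):{\sf p}_2(\ve m,\ve n')=\ve p\}\right|,\qquad \ve p\in\mathcal F(\ve a).
\end{align*}
Exploiting the independence of $\ve Q$ and $\ve H$ together with the fact that, for any fixed pair and any $\ve H$ consistent with the constraints, each entry ${\sf p}_{2,i}$ has marginal distribution $P_W$ over random $\ve Q$ (as in Appendix \ref{appendix:parity_sum}), I expect to show
\begin{align*}
\E{Z_2(\ve p)\mid {\sf s}=\ve s,{\sf p}_1=\ve p_1}\doteq M\cdot 2^{-(n-k_1)D(q)}\doteq 2^{n(R_2-D(q))}
\end{align*}
for $\ve p$ typical within $\mathcal F(\ve a)$. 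The factor $2^{-(n-k_1)D(q)}$ reflects the decomposition $W=A+qB$ with $A=W\bmod q$ uniform on $[0:q-1]$ and $B=\mathbf{1}\{W\geq q\}$: we have $H(B\mid A)=H(W)-\log q=D(q)$, so the typical subset of $\mathcal F(\ve a)$ under this conditional law has cardinality $\doteq 2^{(n-k_1)D(q)}$.

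Case 2 ($R_1D(q)/\log q\leq R_2\leq D(q)$) then follows from Markov: the non-positive exponent $R_2-D(q)\leq 0$ forces $\E{Z_2(\ve p)}=o(1)$, so $Z_2(\ve p)\leq 1$ a.a.s.\ for essentially every typical $\ve p$, the $M$ pairs produce $M$ distinct values of ${\sf p}_2$, and $|\mathcal P_2|\doteq M=2^{n(R_2-R_1D(q)/\log q)}$. For Case 3 ($R_2\geq D(q)$) I would obtain the variance bound $\var{Z_2(\ve p)\mid\cdot}\leq\E{Z_2(\ve p)\mid\cdot}$ from independence of the rows of $\ve Q$ acting on different parity coordinates and apply Chebyshev to conclude $Z_2(\ve p)\doteq 2^{n(R_2-D(q))}$ a.a.s.\ uniformly over typical $\ve p\in\mathcal F(\ve a)$. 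Combining with the identity $\sum_{\ve p}Z_2(\ve p)=M$ and the typical-set cardinality $\doteq 2^{(n-k_1)D(q)}$ then gives $|\mathcal P_2(\ve s,\ve p_1)|\doteq 2^{n(D(q)-R_1D(q)/\log q)}$, matching the claim.

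The main technical obstacle will be rigorously carrying out the expectation and variance computations while conditioning on the $\ve H$-dependent event ${\sf p}_1=\ve p_1$, because ${\sf p}_2=\ve Q\ve m\oplus\ve d_3+\ve Q\ve n$ depends on $\ve n=(\ve n';\ve H\ve n')$ and so the randomness in $\ve Q$ couples with the $\ve H$-constraint through the augmented message. The cleanest route should be to condition first on $\ve H$ and on a pair satisfying both constraints, note that $\ve n$ is then deterministic while $\ve Q$ remains uniform, and apply the $P_W$ marginal together with row-independence as in Appendix \ref{appendix:parity_sum}. Inter-pair correlations coming from different pairs sharing the same $\ve Q$, and the mild genericity assumptions needed for the integer-sum argument (for instance $\ve m$ and $\ve n$ not being collinear), should be controllable by the same second-moment decomposition already executed in the proof of Lemma \ref{lemma:size_p1}.
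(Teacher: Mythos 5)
Your proposal follows the paper's proof essentially step for step: Case~1 by uniqueness from Lemma~\ref{lemma:size_p1}, Cases~2--3 by the first- and second-moment method on $Z_2(\ve p)$ conditioned on the cell $\mathcal F(\ve a)$ from Lemma~\ref{lemma:2values}, the same expectation estimate $\E{Z_2(\ve p)}\doteq 2^{n(R_2-D(q))}$, Markov for Case~2 and Chebyshev plus the identity $\sum_{\ve p}Z_2(\ve p)=L'$ for Case~3. The one genuinely original ingredient is the reinterpretation of the per-coordinate factor as $H(B\mid A)=D(q)$ with $A=W\bmod q$ and $B=\mathbf{1}\{W\geq q\}$; the paper obtains the same factor $q\,P_W(a)$ arithmetically, but your information-theoretic reading of it is cleaner and makes the appearance of $D(q)$ in the exponent immediate. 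One small correction: the bound $\var{Z_2(\ve p)}\leq\E{Z_2(\ve p)}$ does not come from independence of the rows of $\ve Q$ across parity coordinates (that only factorizes $\E{Z_\ell}$); it comes from $Z_\ell^2=Z_\ell$ for indicators together with (approximate) pairwise independence of $Z_\ell$ and $Z_j$ across different message pairs $\ell\neq j$, which is the inter-pair issue you correctly flag at the end. Apart from that misattribution, the proposal is sound and matches the paper's argument.
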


\begin{proof}
We first consider the case when $R_2<R_1D(q)/\log q$. Recall in Lemma \ref{lemma:information_sum} we show that  for an information-sum $\ve s\in\mathcal A^{(k_2)}_{[W]}$, there are $L$  pairs of $(\ve m,\ve n')$ satisfying $\sf s(\ve m,\ve n')=\ve s$. In Lemma \ref{lemma:size_p1} we show that in the case  $R_2<R_1D(q)/ \log q$, all $L$ pair will give different parity-sum $\sf p_1$ asymptotically almost surely.  In other words for one parity-sum ${\sf p}_1$, there is only one pair  $(\ve m,\ve n')$ which gives ${\sf p}_1(\ve m,\ve n')=\ve p_1$, consequently there can be only one possible parity-sum ${\sf p}_2$  which results from this pair $(\ve m, \ve n')$, namely $|\mathcal P_2(\ve s, \ve p_1)|=1$. This proves the first claim.

Now we consider the remaining two cases.  It is shown in Appendix \ref{appendix:parity_sum}, Lemma \ref{lemma:check_distr} that  each entry of the parity sum ${\sf p}_2$ is i.i.d. according to $P_W$ hence the probability that the parity-sum ${\sf p}_2$ being atypical is negligible. For a given typical vector $\ve p\in\mathcal A^{(n-k_1)}_{[W]}$,   we define the random variable $Z_2(\ve p)$ to be the number of different pairs $(\ve m,\ve n')$, which give the parity-sum $\sf p_2$ equal to $\ve p$. In other words, define the random set
\begin{align*}
\mathcal Z_2(\ve p):=\{(\ve m, \ve n'): \sf p_2(\ve m,\ve n')=\ve p\}
\end{align*}
where each entry of $\ve H, \ve Q$ is chosen uniformly at random from $\mathbb F_q$, the random variable $Z_2(\ve p)$ is defined as $Z_2(\ve p):=|\mathcal Z_2(\ve p)|$. Now we study $Z_2(\ve p)$ for all pairs $(\ve m,\ve n')$ which satisfy $\sf s(\ve m, \ve n')=\ve s$ and $\sf p_1(\ve m,\ve n')=\ve p_1$ for some $\ve s$ and $\ve p_1$. Recall that in the proof of Lemma \ref{lemma:size_p1}, we have shown in (\ref{eq:Z1_estimates}) that if $\sf s(\ve m,\ve n')=\ve s$ for some $\ve s$ and if $R_2>R_1D(q)/\log q$, then the number of pairs $(\ve m,\ve n')$ satisfying $\sf p_1(\ve m, \ve n')=\ve p_1$ for some $\ve p_1$ is bounded as 
\begin{subequations}
\begin{align}
L'\leq 2^{n(R_2-R_1D(q)/\log q+\epsilon_n)}+2^{\frac{n}{2}(R_2-R_1D(q)/\log q+\epsilon_n')}\\
L'\geq 2^{n(R_2-R_1D(q)/\log q-\epsilon_n)}-2^{\frac{n}{2}(R_2-R_1D(q)/\log q+\epsilon_n')}
\end{align}
\label{eq:L_p}
\end{subequations}
Since it holds that $2^{\frac{n}{2}(R_2-R_1D(q)/\log q+\epsilon_n')}\leq 2^{n(R_2-R_1D(q)/\log q+\epsilon_n')}$ and $2^{\frac{n}{2}(R_2-R_1D(q)/\log q+\epsilon_n')}\leq \frac{1}{2}\cdot 2^{n(R_2-R_1D(q)/\log q+\epsilon_n')}$  for large enough $n$, we can conclude that
\begin{align}
L'\doteq 2^{n(R_2-R_1D(q)/\log q)}
\label{eq:L_p_approx}
\end{align}

Also recall Lemma \ref{lemma:2values} that under the condition that $\sf s(\ve m, \ve n')=\ve s, {\sf p}_1(\ve m, \ve n')=\ve p_1$ for some $\ve s, \ve p_1$, the possible parity sum ${\sf p}_2$ are constrained and  we have
\begin{align}
\pp{Z_2(\ve p)>1|\sf s(\ve m, \ve n')=\ve s, \sf p_1(\ve m, \ve n')=\ve p_1}&=\pp{Z_2(\ve p)>1|F(\ve a)}
\label{eq:Zh>1}
\end{align}
for some $\ve a$ depending only on $\ve s, \ve p_1, \ve H$ and $\ve Q$.

 
For the case  $R_1D(q)/\log q\leq R_2\leq D(q)$,  in Appendix \ref{app:Z_2}, we show that for a typical sequence $\ve p\in\mathcal F(\ve a)$, the  expectation and  variance of $Z_2(\ve p)$ conditioned on the event $F(\ve a)$ have the form
\begin{align}
2^{n(R_2-D(q)-\epsilon_n)}\leq \E{Z_2(\ve p)|F(\ve a)}\leq 2^{n(R_2-D(q)+\epsilon_n)}
\label{eq:E_Z_F}
\end{align}
for some $\epsilon_n\rightarrow 0$. Markov inequality implies that
\begin{align}
\pp{Z_2(\ve p)>1)|F(\ve a)}&\leq \E{Z_2(\ve  p)|F(\ve a)}\\
&\leq  2^{n(R_2-D(q)+\epsilon_n)}
\end{align}
which can be arbitrarily small with sufficiently large $n$ provided that $R_2\leq D(q)-2\epsilon_n$ and $n\epsilon_n\rightarrow \infty$. As $Z_2(\ve p)$ denotes the number of pairs $(\ve m, \ve n')$ which give a parity-sum part $\sf p_2$ equal to some vector $\ve p$, this means a.a.s. any party sum $\sf p_2$ can be formed by at most one pair $(\ve m, \ve n')$. In other words, every pair gives a distinct $\ve p$ a.a.s. hence the size of $\mathcal P_2(\ve s,\ve p_1)$ equals the total number of pairs $L'$ in (\ref{eq:L_p_approx}). This proves the first claim by letting $\epsilon_n\rightarrow 0$.

We then show that for the case $ R_2\geq  D(q)$ and conditioned on the event $F(\ve a)$, the random variable $Z_2(\ve p)$ concentrates around $\E{Z_2(\ve p)|F(\ve a)}$ for some typical sequence $\ve p\in\mathcal F(\ve a)$. 
For some $\epsilon'_n>0$ depending on $n$, by conditional Chebyshev inequality (see \cite[Ch. 23.4]{fristedt_modern_1997} for example) we have
\begin{align}
\pp{|Z_2(\ve p)-\E{Z_2(\ve p)|F(\ve a)}\geq 2^{\frac{n}{2}(R_2-D(q)+\epsilon'_n)}|F(\ve a)}&\leq \frac{\var{Z_2(\ve p)|F(\ve a)}}{2^{ 2\cdot\frac{n}{2}(R_2-D(q)\epsilon'_n)}}\\
&<\frac{\E{Z_2(\ve p)|F(\ve a)}}{2^{n(R_2-D(q)+\epsilon'_n)}}\\
&\leq 2^{-n(\epsilon'_n-\epsilon_n)}
\end{align}
where we used the inequality $\var{Z_2(\ve p)|F(\ve a)}\leq \E{Z_2(\ve p)|F(\ve a)}$ proved in Appendix \ref{app:Z_2}. If we choose $\epsilon'_n>\epsilon_n$ and $n$ such that $n(\epsilon'_n-\epsilon_n)\rightarrow\infty$ and $\epsilon'_n\rightarrow 0$ (this is possible because $\epsilon_n\searrow 0$), then a.a.s. $Z_2(\ve p)$ satisfies
\begin{align}
\E{Z_2(\ve p)|F(\ve a)}-2^{\frac{n}{2}(R_2-D(q)+\epsilon'_n)}\leq Z_2(\ve p)\leq   \E{Z_2(\ve p)|F(\ve a)}+2^{\frac{n}{2}(R_2-D(q)+\epsilon'_n)}
\label{eq:Z2_estimates}
\end{align}
conditioned on the event $F(\ve a)$. Furthermore we have the following identity regarding the total number of pairs $(\ve m,\ve n')$ satisfying $\sf s(\ve m,\ve n')=\ve s$ and $\sf p_1(\ve m,\ve n')=\ve p_1$ for some $\ve s, \ve p_1$:
\begin{align}
\sum_{\ve p\in \mathcal P_2(\ve s,\ve p_1)}Z_2(\ve p)=L'
\label{eq:total_number_Lp}
\end{align}
Combining (\ref{eq:Z2_estimates}) and (\ref{eq:total_number_Lp}),  the following estimates hold a.a.s.
\begin{align}
\frac{L'}{\E{Z_2(\ve p)|F(\ve a)}+2^{\frac{n}{2}(R_2-D(q)+\epsilon_n')}}\leq |\mathcal P_2(\ve s,\ve p_1)|\leq \frac{L'}{\E{Z_2(\ve p)|F(\ve a)}-2^{\frac{n}{2}(R_2-D(q)+\epsilon_n')}}
\end{align}
Using $L'$ from (\ref{eq:L_p}), Eq. (\ref{eq:E_Z_F}) and the above expression, $\mathcal P_2(\ve s,\ve p_1)$ is bounded a.a.s. as
\begin{align}
|\mathcal P_2(\ve s,\ve p_1)|&\geq \frac{2^{n(D(q)-R_1D(q)/\log q+o(1)))}-2^{\frac{n}{2}(-R_2+2D(q)-R_1D(q)/\log q+o(1))}}{1+2^{-\frac{n}{2}(R_2-D(q)+2\epsilon_n-\epsilon_n')}}\\
|\mathcal P_2(\ve s,\ve p_1)|&\leq \frac{2^{n(D(q)-R_1D(q)/\log q+o(1)))}+2^{\frac{n}{2}(-R_2+2D(q)-R_1D(q)/\log q+o(1))}}{1-2^{-\frac{n}{2}(R_2-D(q)-2\epsilon_n-\epsilon_n')}}
\end{align}
By the assumption that $R_2\geq D(q)$, we can let  $R_2=D(q)+\sigma_n$ for some $\sigma_n\rightarrow 0$, the two terms in the denumerators are
\begin{align}
2^{-\frac{n}{2}(R_2-D(q)+2\epsilon_n-\epsilon_n')}&=2^{-\frac{n}{2}(\sigma_n+2\epsilon_n-\epsilon')}\\
 2^{-\frac{n}{2}(R_2-D(q)-2\epsilon_n-\epsilon_n')}&=2^{-\frac{n}{2}(\sigma_n-2\epsilon_n-\epsilon_n')}
\end{align}
and both terms approaches $0$ if $\sigma_n>2\epsilon_n+\epsilon_n'$. Since both $\epsilon_n$ and $\epsilon_n'$ are chosen to approach $0$, we can let $\sigma_n$ approach  $0$ as well. Furthermore we have 
\begin{align*}
 2^{\frac{n}{2}(-R_2+2D(q)-R_1D(q)/\log q+o(1))}=2^{\frac{n}{2}(D(q)-R_1D(q)/\log q +\sigma_n+o(1))}
\end{align*}
We can conclude that for $R_2\geq D(q)$ and $n$ large enough we have a.a.s.
\begin{align*}
|\mathcal P_2(\ve s,\ve p_1)|&\leq \frac{2^{n(D(q)-R_1D(q)/\log q+o(1)))}+2^{\frac{n}{2}(D(q)-R_1D(q)/\log q +\sigma_n+o(1))}}{1-o_n(1)}\\
&\leq \frac{2\cdot 2^{n(D(q)-R_1D(q)/\log q+o(1))}}{1-o_n(1)}
\end{align*}
and
\begin{align*}
|\mathcal P_2(\ve s,\ve p_1)|&\geq \frac{2^{n(D(q)-R_1D(q)/\log q+o(1)))}-2^{\frac{n}{2}(D(q)-R_1D(q)/\log q +\sigma_n+o(1))}}{1+o_n(1)}\\
&\geq \frac{\frac{1}{2}\cdot 2^{n(D(q)-R_1D(q)/\log q+o(1)))}}{1+o_n(1)}
\end{align*}
since we have
\begin{align*}
2^{\frac{n}{2}(D(q)-R_1D(q)/\log q +\sigma_n+o(1))}\leq \frac{1}{2}\cdot 2^{n(D(q)-R_1D(q)/\log q +o(1))}
\end{align*}
for $n$ large enough.
Hence we can conclude that 
\begin{align} 
\frac{2^{n(D(q)-R_1D(q)/\log q+o(1))}}{1+o_n(1)}\leq |\mathcal P_2(\ve s,\ve p_1)|\leq \frac{2^{n(D(q)-R_1D(q)/\log q +o(1))}}{1-o_n(1)}
\end{align}
or equivalently $\mathcal P_2(\ve s,\ve p_1)\doteq 2^{n(D(q)-R_1D(q)/\log q)}$ a.a.s. if $n$ is sufficiently large. 
\end{proof}

Use the previous lemmas we can give the estimates on the size of the parity-sums $\mathcal P(\ve s)$.
\begin{lemma}[Estimates of $|\mathcal P(\ve s)|$]
Let $\ve m, \ve n'$ be two independent random vectors which are uniformly distributed in $\mathbb F_q^{k_1}$ and  $\mathbb F_q^{k_2}$, respectively.  For the pairs $(\ve m,\ve n')$  satisfying ${\sf s}(\ve m,\ve n')=\ve s$ for some $\ve s\in\mathcal A_{[W]}^{(k_2)}$,  let $\mathcal P(\ve s)$ denote the random set formed in (\ref{def:p}), where each entry of $\ve H$ and $\ve Q$ is chosen i.i.d. uniformly at random in $\mathbb F_q$. Then asymptotically almost surely it holds that
\begin{align*}
|\mathcal P(\ve s)|\doteq 2^{n(R_1+R_2-R_2H(W)/\log q)}
\end{align*}
for $R_2\leq D(q)$
and
\begin{align*}
|\mathcal P(\ve s)|\doteq 2^{n(R_1+D(q)-R_2H(W)/\log q)}
\end{align*}
for $R_2\geq D(q)$
\label{lemma:p_size}
\end{lemma}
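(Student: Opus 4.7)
The plan is to apply the identity (\ref{eq:relation_paritysum}), namely $|\mathcal P(\ve s)| = \sum_{\ve p_1 \in \mathcal P_1(\ve s)} |\mathcal P_2(\ve s, \ve p_1)|$, together with the a.a.s.\ estimates of $|\mathcal P_1|$ in Lemma \ref{lemma:size_p1} and of $|\mathcal P_2|$ in Lemma \ref{lemma:size_p2}. I would split into three subcases matching the threshold structure of Lemma \ref{lemma:size_p2}, and in each case verify that the resulting product yields the claimed exponent after simplifying with the identity $H(W) = \log q + D(q)$, equivalently $H(W)/\log q = 1 + D(q)/\log q$.

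In the low-rate subcase $R_2 \leq R_1 D(q)/\log q$ (which a fortiori satisfies $R_2 \leq D(q)$), Lemma \ref{lemma:size_p2} case~1 gives $|\mathcal P_2(\ve s,\ve p_1)| = 1$ a.a.s.\ uniformly (this is a single event: all $L$ pairs give distinct $\sf p_1$), so the identity collapses to $|\mathcal P(\ve s)| = |\mathcal P_1(\ve s)|$, and Lemma \ref{lemma:size_p1} case~1 supplies $|\mathcal P_1(\ve s)| \doteq 2^{n(R_1+R_2-R_2 H(W)/\log q)}$, matching the first claim. In the middle subcase $R_1 D(q)/\log q \leq R_2 \leq D(q)$ I would combine $|\mathcal P_1(\ve s)| \doteq 2^{n(R_1-R_2)H(W)/\log q}$ with $|\mathcal P_2(\ve s,\ve p_1)| \doteq 2^{n(R_2 - R_1 D(q)/\log q)}$ and simplify
\[
(R_1-R_2)\bigl(1 + D(q)/\log q\bigr) + R_2 - R_1 D(q)/\log q \;=\; R_1 - R_2 D(q)/\log q \;=\; R_1 + R_2 - R_2 H(W)/\log q,
\]
again recovering the first claim. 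The high-rate subcase $R_2 \geq D(q)$ is identical in shape: $|\mathcal P_1(\ve s)|$ keeps the same form, while Lemma \ref{lemma:size_p2} case~3 caps each $|\mathcal P_2|$ at $\doteq 2^{n(D(q) - R_1 D(q)/\log q)}$, and the arithmetic $(R_1-R_2)(1 + D(q)/\log q) + D(q)(1 - R_1/\log q) = R_1 + D(q) - R_2 H(W)/\log q$ yields the second claim.

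The main obstacle is that the $\doteq$ estimate in Lemma \ref{lemma:size_p2} is \emph{pointwise} in $\ve p_1$, whereas the sum ranges over exponentially many $\ve p_1 \in \mathcal P_1(\ve s)$. I would resolve this via two-sided sandwiching. For the upper bound, I would combine the trivial bound $|\mathcal P(\ve s)| \leq L$ from Lemma \ref{lemma:information_sum} (which already suffices in the middle subcase, since there $L$ coincides with the target) with the structural bound $|\mathcal P(\ve s)| \leq |\mathcal P_1(\ve s)| \cdot \max_{\ve p_1} |\mathcal P_2(\ve s,\ve p_1)|$. For the matching lower bound I would restrict the sum to the subset of $\ve p_1$ on which the Chebyshev concentration (\ref{eq:Z2_estimates}) holds. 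The failure probability per $\ve p_1$ from that Chebyshev step is $2^{-n(\epsilon'_n - \epsilon_n)}$, and the number of candidate typical $\ve p_1 \in \mathcal A^{(k_1-k_2)}_{[W]}$ is at most $2^{n H(W)/\log q \cdot \log q} = 2^{nH(W)}$ (up to subexponential factors), so a union bound still vanishes provided we select $\epsilon_n, \epsilon'_n \searrow 0$ with $n(\epsilon'_n - \epsilon_n) - n H(W) \to \infty$, which is feasible since the Chebyshev margin can be sharpened at the price of slower vanishing. With this uniform concentration in hand, the sum is $\doteq |\mathcal P_1(\ve s)|$ times the common value of $|\mathcal P_2(\ve s, \ve p_1)|$, and the exponent identities above complete the proof.
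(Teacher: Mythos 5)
Your proposal follows the same route as the paper: apply the decomposition identity (\ref{eq:relation_paritysum}), plug in the a.a.s.\ estimates from Lemma~\ref{lemma:size_p1} and Lemma~\ref{lemma:size_p2}, split into the three threshold subcases, and simplify exponents via $H(W)=\log q+D(q)$. The exponent arithmetic you give in all three subcases is correct and matches the paper's.

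You are also right to be uneasy about uniformity. Lemma~\ref{lemma:size_p2} delivers a concentration for a \emph{fixed} $\ve p_1$ (conditioned on $F(\ve a)$), while the sum in (\ref{eq:relation_paritysum}) ranges over the exponentially many $\ve p_1\in\mathcal P_1(\ve s)$; the paper simply writes the $\doteq$ inside the sum without further comment, so this is a real gap. However, the repair you propose is internally inconsistent. You ask for $\epsilon_n,\epsilon'_n\searrow 0$ with $n(\epsilon'_n-\epsilon_n)-nH(W)\to\infty$; this forces $\epsilon'_n-\epsilon_n>H(W)$ eventually, which is impossible because $H(W)\geq\log q+D(2)>1$ is bounded away from zero while $\epsilon'_n\to 0$. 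Put differently: a naive union bound over $\doteq 2^{(k_1-k_2)H(W)}$ candidate $\ve p_1$'s with only a subexponentially small per-$\ve p_1$ Chebyshev failure probability $2^{-n(\epsilon'_n-\epsilon_n)}$ cannot close, and ``sharpening the Chebyshev margin'' does not help because the margin $2^{\frac{n}{2}(R_2-D(q)+\epsilon'_n)}$ must remain negligible relative to the mean $\E{Z_2(\ve p)|F(\ve a)}\doteq 2^{n(R_2-D(q))}$, which caps how large $\epsilon'_n$ can be taken. Your upper-bound sandwich has a parallel hole in the high-rate case: the trivial bound $L$ overshoots the target by $2^{n(R_2-D(q))}$, and the structural bound via $\max_{\ve p_1}|\mathcal P_2(\ve s,\ve p_1)|$ is only controlled deterministically by $|\mathcal F(\ve a)|\leq 2^{n-k_1}$, which also exceeds the target by $2^{n(1-R_1/\log q)(1-D(q))}$ unless $R_1=\log q$. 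So the sandwich you describe does not actually give the matching upper bound in that regime. In short, the core argument reproduces the paper's proof and you correctly flag a real hand-wave, but the fix you sketch does not repair it; one would instead need a second-moment argument on $|\mathcal P(\ve s)|$ itself (or uniform concentration obtained some other way), which neither you nor the paper supplies.
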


\begin{proof}
When matrices $\ve H, \ve Q$ are generated randomly in the code construction, the relationship in (\ref{eq:relation_paritysum}) implies
\begin{align*}
|\mathcal P(\ve s)|=\sum_{\ve p_1\in \mathcal P_1(\ve s)}|\mathcal P_2(\ve s, \ve p_1)|
\end{align*}
where the cardinality of sets are random variables.

We first consider the case when $R_2<R_1D(q)/\log q$.  Lemma \ref{lemma:size_p1} shows that  for all $L$ pairs of $(\ve m,\ve n')$ satisfying ${\sf s}(\ve m,\ve n')=\ve s$, each of them gives a different parity-sum ${\sf p}_1$.  In Lemma \ref{lemma:size_p2} we also showed that $|\mathcal P_2(\ve s,\ve p_1)|=1$ in this case, hence a.a.s. we have
\begin{align*}
|\mathcal P(\ve s)|=|\mathcal P_1(\ve s)|\doteq 2^{n(R_1+R_2-R_2H(W)/\log q)}.
\end{align*}
where we use the result on $|\mathcal P_1(\ve s)|$ in Lemma \ref{lemma:size_p1}.

For the case when $R_2\geq R_1D(q)/\log q$,  Lemma \ref{lemma:size_p1} shows that  among $L$ pairs of $(\ve m,\ve n')$ satisfying ${\sf s}(\ve m,\ve n')=\ve s$, some of them give the same parity-sum ${\sf p}_1$. Using Lemma \ref{lemma:size_p1} and Lemma \ref{lemma:size_p2} we conclude that if $R_1D(q)/\log q \leq R_2< D(q)$,  we have a.a.s.
\begin{align*}
|\mathcal P(\ve s)|&=\sum_{\ve p_1\in\mathcal P_1(\ve s)} |\mathcal P_2(\ve s, \ve p_1)|\\
&\doteq \sum_{\ve p_1\in\mathcal P_1(\ve s)}  2^{n(R_2-R_1D(q)/\log q)}\\
&\doteq 2^{n(R_1-R_2)H(W)/\log q}\cdot 2^{n(R_2-R_1D(q)/\log q)}\\
&=2^{n(R_1+R_2-R_2H(W)/\log q)}
\end{align*}
and if $R_2\geq D(q)$, we have a.a.s.
\begin{align*}
|\mathcal P(\ve s)|&=\sum_{\ve p_1\in\mathcal P_1(\ve s)} |\mathcal P_2(\ve s, \ve p_1)|\\
&\doteq \sum_{\ve p_1\in\mathcal P_1(\ve s)}  2^{n(D(q)-R_1D(q)/\log q)}\\
&\doteq 2^{n(R_1-R_2)H(W)/\log q}\cdot 2^{n(D(q)-R_1D(q)/\log q)}\\
&=2^{n(R_1+D(q)-R_2H(W)/\log q)}
\end{align*}
which proves the claim.
\end{proof}

With the foregoing lemmas we can finalize the proof of Theorem \ref{thm:size_sumset_systematic}.
\begin{proof}[Proof of Theorem \ref{thm:size_sumset_systematic}]
We have assumed $R_1\geq R_2$ in all preceding proofs. Notice that the asymptotic estimates on $\mathcal P(\ve s)$ in Lemma \ref{lemma:p_size} hold for \textit{all} typical information-sum $\ve s$ in (\ref{eq:K_normal}). 
Hence combining Lemma \ref{lemma:typical_matter} and Lemma \ref{lemma:p_size}, we conclude that  for $R_2\leq D(q)$ we have a.a.s.
\begin{align}
|\mathcal K_N|&=\sum_{\ve s\in \mathcal A^{(k_2)}_{[W]}}|\mathcal P(\ve s)|\\
&\doteq  2^{k_2H(W)}\cdot 2^{n(R_1+R_2-R_2H(W)/\log q)}\\
&=2^{n(R_1+R_2)}.
\end{align}
where we have used the fact that $\left|\mathcal A^{(k_2)}_{[W]}\right|=2^{k_2(H(W)+o(1))}$ from  Lemma \ref{lemma:typical_sequence}. For $R_2\geq D(q)$ we have a.a.s.
\begin{align}
|\mathcal K_N|&\doteq 2^{k_2H(W)}\cdot 2^{n(R_1+D(q)-R_2H(W)/\log q)}\\
&=2^{n(R_1+D(q))}.
\end{align}
In the case when $R_1\leq R_2$, similar results is obtained by simply switching $R_1, R_2$. Namely in this case we have
\begin{align}
|\mathcal K_N|&\doteq
\begin{cases}
2^{n(R_1+R_2)} &\text{if } R_1\leq D(q)\\
2^{n(R_2+D(q))} &\text{if }R_1> D(q)
\end{cases}
\end{align}
Lastly it can be verified straightforwardly that for any $R_1,R_2\in[0,\log q]$ we can combine the expressions above into one compact formulation as
\begin{align}
|\mathcal K_N| \doteq \min\left\{2^{n(R_1+R_2)}, 2^{n\left(\max\{R_1,R_2\}+D(q)\right)}\right\}
\end{align}

Now we prove the asymptotic equipartion property (AEP) of the normal typical sumset $\mathcal K_N$ in (\ref{eq:AEP}).  Let $M^{k_1}$ denote a $k_1$-length random vector uniformly distributed in $\mathcal U^{k_1}$ and $N^{k_2}$ a  $k_2$-length random vector uniformly distributed on $\mathcal U^{k_2}$.  If we view $M^{k_1},N^{k_2}$ as two independent messages and let $T_1^n, T_2^n$ be two codewords generated using $M^{k_1}, N^{k_2}$,  then $T_1^n, T_2^n$ are two independent random variables uniformly distributed on $\mathcal C'_1:=\mathcal C_1\oplus \ve d_1,\mathcal C_2$ respectively. We assume that with the chosen $\ve Q, \ve H$, $\mathcal C_1'\oplus \mathcal C_2$ has a normal typical sumsets $\mathcal K_N$. Recall that  $P_{\mathcal S}$ denotes the probability distribution on the sumset $\mathcal C_1'+\mathcal C_2$ induced by $T_1^n, T_2^n$ as in Definition \ref{def:P_S}.

Again assume $R_1\geq R_2$, we first consider the rate regime when $R_2\leq D(q)$.  In this case Lemma \ref{lemma:p_size} shows that the number of possible parity-sum $\sf p$ is equal to the number of pairs $(\ve m,\ve n')$ satisfying $\sf s(\ve m,\ve n')=\ve s$. In other words any sum codewords $\ve w\in\mathcal K_N$ is formed by a unique pair, say, $(\ve m_0, \ve n_0)$. Hence
\begin{align}
P_{\mathcal S}(\ve w)&=\pp{M^{k_1}=\ve m_0, N^{k_2}=\ve n_0}\\
&=\pp{M^{k_1}=\ve m_0}\pp{ N^{k_2}=\ve n_0}\\
&=q^{-k_1}\cdot q^{-k_2}=2^{-n(R_1+R_2)}
\end{align}

Now consider the case when $R_2\geq   D(q)$. Lemma \ref{lemma:p_size} shows that among all $L$ pairs of $(\ve m,\ve n')$ satisfying $\sf s(\ve m,\ve n')=\ve s$ for some $\ve s$, many pairs give the same parity-sum $\ve p$. More precisely, let $Z(\ve s, \ve p)$ denote the number of pairs $(\ve m,\ve n')$ sum up to a particular parity-sum $\ve p:=(^{\ve p_1}_{\ve p_2})$ given $\sf s(\ve m,\ve n')=\ve s$. We have shown in (\ref{eq:Z2_estimates}) that given the constraints that $\sf s(\ve m,\ve n')=\ve s$ and ${\sf p}_1(\ve m,\ve n')=\ve p_1$, then the number of $(\ve m,\ve n')$ satisfying ${\sf p}_2(\ve m,\ve n')=\ve p_2$ for some $\ve p_2$ is bounded  as
\begin{align*}
Z_2(\ve p)&\geq 2^{n(R_2-D(q)-\epsilon_n)}-2^{\frac{n}{2}(R_2-D(q)+\epsilon'_n)}\\
 Z_2(\ve p)&\leq   2^{n(R_2-D(q)+\epsilon_n)}+2^{\frac{n}{2}(R_2-D(q)+\epsilon'_n)}
\end{align*}
Notice this is also the number of pairs $(\ve m,\ve n')$ sum up to a particular parity-sum $\ve p:=(^{\ve p_1}_{\ve p_2})$ given $\sf s(\ve m,\ve n')=\ve s$. Hence we have
\begin{align*}
Z(\ve s, \ve p)\doteq 2^{n(R_2-D(q))}
\end{align*}
Hence for a sum codeword $\ve w=(^{ \ve s}_{\ve p})\in \mathcal K_N$, we have
\begin{align}
P_{\mathcal S}(\ve w)&=\sum_{\stackrel{(\ve m, \ve n)}{\sf s(\ve m,\ve n')=\ve s, \sf p(\ve m,\ve n')=\ve p}}\pp{M^{k_1}=\ve m, N^{k_1}=\ve n}\\
&=\sum_{\stackrel{(\ve m, \ve n)}{\sf s(\ve m,\ve n')=\ve s, \sf p(\ve m,\ve n')=\ve p}}q^{-(k_1+k_2)}\\
&\doteq 2^{n(R_2-D(q))}\cdot 2^{-n(R_1+R_2)}\\
&=2^{-n(R_1+D(q))}
\end{align}
The exact arguments hold for the case when $R_1\leq R_2$, and this concludes the proof of the AEP  and  Theorem \ref{thm:size_sumset_systematic}.
\end{proof}

With the results established for systematic linear codes,  we can finally prove the  results for general linear codes. 

\begin{proof}[Proof of Theorem \ref{thm:size_sumset}]
Assume $k_1\geq k_2$. We first fix $\mathcal C_1$ and consider the construction of $\mathcal C_2$. In the construction (\ref{eq:nestedcodes}), the code ensemble $\mathcal C_2$ is constructed using $k_2$ linearly independent basis of $\mathcal C_1$. In (\ref{eq:nestedcodes}) we used the first $k_2$ columns of $\ve G$, however since each entry of $\ve G$ is chosen i.i.d. uniformly,  by symmetry we will have the same ensemble if we choose any $k_2$ linearly independent basis of the code $\mathcal C_1$.  Now consider the construction of $\mathcal C_2$ in (\ref{eq:systematic_proof}). In Theorem \ref{thm:size_sumset_systematic} we considered the ensemble of codes generated as in (\ref{eq:systematic_proof}) where each entry of $\ve Q$ and $\ve H$ is chosen i.i.d.   according to the uniform distribution in $\mathbb F_q$.  We first show that the ensemble of $\mathcal C_2$ generated in (\ref{eq:systematic_proof}) can be equivalently rewritten in the following way
\begin{align}
\mathcal C_2=\left\{ \ve v: \ve v=\begin{bmatrix}
\ve I_{k_2\times k_2}\\
\ve Q'
\end{bmatrix}
\ve n', \mbox{ for all }\ve n'\in\mathbb F_q^{k_2}\right\}
\label{eq:C2_equivalent}
\end{align}
for some $\ve Q'\in\mathbb F_q^{(n-k_2)\times k_2}$. To see what is the matrix $\ve Q'$, using $\ve g_i$ to denote the $i$-th column of the matrix $\begin{bmatrix}
\ve I_{k_1\times k_1}\\
\ve Q
\end{bmatrix}$, using $\ve H_{i,j}$ to denote the $(i,j)$ entry of $\ve H$ and $\ve n'_i$ the $i$-th entry of $\ve n'$,  we can rewrite $\ve v$ in (\ref{eq:systematic_proof}) as 
\begin{align*}
\ve v&=\ve g_1\ve n'_1\oplus\ldots \oplus \ve g_{k_2}\ve n_{k_2}'\oplus \ve g_{k_2+1}\sum_{i=1}^{k_2}\ve H_{1i}\ve n'_i\oplus \ldots \oplus \ve g_{k_1}\sum_{i=1}^{k_2}\ve H_{k_1-k_2,i} \ve n'_i\\
&=\left(\ve g_1\oplus \sum_{i=1}^{k_1-k_2}\ve H_{i1}\ve g_{k_2+i}\right)\ve n'_1\oplus \left(\ve g_2\oplus \sum_{i=1}^{k_1-k_2}\ve H_{i2}\ve g_{k_2+i}\right)\ve n'_2\oplus \ldots \oplus \left(\ve g_{k_2}\oplus \sum_{i=1}^{k_1-k_2}\ve H_{ik_2}\ve g_{k_2+i}\right)\ve n'_{k_2}
\end{align*}
This shows that the $j$-th column of $\begin{bmatrix}
\ve I_{k_2\times k_2}\\
\ve Q'
\end{bmatrix}$
is given by  $\ve g_j\oplus \sum_{i=1}^{k_1-k_2}\ve H_{ij}\ve g_{k_2+i}$. Notice that $\ve g_{i}, i=1,\ldots, k_1$ is a vector whose first $k_1$ entries are all zero except that its $i$-th position is $1$, and the remaining entries are chosen i.i.d. uniformly from $\mathbb F_q$. Then the vector $\sum_{i=1}^{k_1-k_2}\ve H_{ij}\ve g_{k_2+i}$, for $j=1,\ldots, k_2$ has zero entries for the first $k_2$ positions. Hence indeed $\ve g_j\oplus \sum_{i=1}^{k_1-k_2}\ve H_{ij}\ve g_{k_2+i}$ has zero entries for the first $k_2$ entries except that it has $1$ at the $j$-th position, and the last $k_1-k_2$ entries  given by $\ve H$ and $\ve g_i,i=1,\ldots, k_1-k_2$. This proves that $\mathcal C_2$ can be generated equivalently as in (\ref{eq:C2_equivalent}). Furthermore, since each entry of $\ve H$ is chosen i.i.d. uniformly, it follows that each entry of $\ve Q'$ is also chosen i.i.d. according to the uniform distribution in $\mathbb F_q$. This also shows that in the construction (\ref{eq:systematic_proof}), $\mathcal C_2$ is generated using $k_2$ linearly independent basis of $\mathcal C_1$.

It is known that the systematic generator matrix for a systematic linear code is unique.  Furthermore, as we can identify an $(n,k_1)$-linear code with the $k_1$-dimensional subspace spanned by its generator matrix, each systematic generator matrix thus gives a unique $k_1$-dimensional subspace. It is also known that the total number of $k_1$-dimensional subspaces in $\mathbb F_q^n$ is given by the so-called Gaussian binomial coefficient (see \cite{roman_advanced_2005} for example):
\begin{align}
{n\choose k_1}_q:=\frac{(q^n-1)(q^n-q)\cdots (q^n-q^{k_1-1})}{(q^k_1-1)(q^k_1-q)\cdots (q^{k_1}-q^{k_1-1})}
\end{align}

Let  $\mathcal C'_j, j=1,2 $ be the corresponding systematic linear code of an  arbitrary $(n, k_j)$-linear code by permuting the entries and assume that $\mathcal C_2\subseteq\mathcal C_1$.   Lemma \ref{lemma:equivalence} shows that there is a one-to-one mapping between $\mathcal C_1+\mathcal C_2$ and $\mathcal C_1'+\mathcal C_2'$. Hence if codes $\mathcal C_1, \mathcal C_2$ are equivalent to some systematic linear codes $\mathcal C'_1, \mathcal C_2'$ with a normal typical sumset $\mathcal K_N$, the codes $\mathcal C_1, \mathcal C_2$ also have a normal typical sumset. By identifying a codebook with its corresponding subspace,  Theorem \ref{thm:size_sumset_systematic}  shows that almost all of the $k_1$-dimensional subspaces (with a $k_2$-dimensional subspace within it generated by choosing any $k_2$ linearly independent basis) have a normal typical sumset, since every linear code is equivalent to some systematic linear code.  Formally the number of codes $\mathcal C_1$ (with $\mathcal C_2$ generated with $k_2$ linearly independent basis of $\mathcal C_1$) which have a normal typical sumset  is $(1-o(1)){n\choose k_1}_q$.


Now consider the codes ensemble in Theorem \ref{thm:size_sumset} where we choose all possible $q^{nk_1}$ generator matrices  with equal probability. Clearly some of the generator matrices give the same code if they span the same $k_1$-dimensional subspace. We now show most of these generator matrices will give codes which have a normal typical sumsets. Notice that each distinct $k_1$-dimensional subspace can be generated by $(q^{k_1}-1)(q^{k_1}-q)\cdots(q^{k_1}-q^{k_1-1})$ different generator matrices (because there are this many different choices of basis in a $k_1$-dimensional subspace).  Hence the fraction of the generator matrices with a normal typical sumset is 
\begin{align*}
\rho:=\frac{(1-o(1)){n\choose k_1}_q\cdot(q^{k_1}-1)(q^{k_1}-q)\cdots(q^{k_1}-q^{k_1-1})}{q^{nk_1}}&=(1-o(1))\frac{(q^n-1)(q^n-q)\cdots (q^n-q^{k_1-1})}{q^{nk_1}}\\
&=(1-o(1))(1-q^{-n})(1-q^{-n+1})\cdots(1-q^{-n+k_1-1})\\
&>(1-o(1))(1-q^{-n+k_1})^{k_1}
\end{align*}
Assume $k_1=\beta n$ for some $\beta\in[0,1)$,  L'H\^{o}pital's rule shows the logarithm of the term $(1-q^{-n+k_1})^{k_1}$ has limit 
\begin{align}
\lim_{n\rightarrow\infty}\beta n\ln (1-q^{-n(1+\beta)})&=\lim_{n\rightarrow\infty} \frac{\ln (1-q^{-n(1+\beta)})}{1/\beta n}\\
&=\lim_{n\rightarrow\infty} \frac{-\beta n^2}{1-q^{-n(1+\beta)}}q^{-n(1+\beta)}(1+\beta)\ln q\\
&=0
\end{align}
Hence the  fraction $\rho$  of codes with a normal typical sumset is arbitrarily close to $1$ for sufficiently large $n$. This proves that the code ensemble considered in Theorem \ref{thm:size_sumset} have a normal typical sumset a.a.s..

The proof of AEP property of the normal typical sumset is the same as in the proof of Theorem \ref{thm:size_sumset_systematic} by using the fact that every linear code is equivalent to some systematic linear code,  and we shall not repeat it.
\end{proof}


\section{Application to computation over multiple access channels}
\label{sec:computation}

In this section we  study a computation problem over noisy multiple access channels. We consider a general two-user discrete memoryless multiple access channel described by a conditional probability distribution $P_{Y|X_1X_2}$ with input and output alphabets  $\mathcal X_1,\mathcal X_2$ and $\mathcal Y$, respectively. Unlike the usual coding schemes, we always assume that codebooks $\mathcal C_1, \mathcal C_2$ are subsets of $\mathbb F_q^n$ (or $\mathcal U^n$), such that the (entry-wise) addition of codewords is well-defined. 

A $(2^{nR_1}, 2^{nR_2}, n)$ computation code in $\mathcal U^n$ for a two-user MAC consists of
\begin{itemize}
\item two message sets $[1:2^{nR_1}]$ and $[1:2^{nR_2}]$,
\item two encoders, where encoder $1$ first assigns a codeword $\ve t(\ve m)\in\mathcal U^n$ to each message $\ve m\in[1:2^{nR_1}]$ and then map the codeword $\ve t$ to a channel input $\ve x\in\mathcal X_1^n$. The operation of encoder $2$  is the same.
\item a decoder $\mathcal D$ which assigns an estimated sum of codewords $\hat{\ve w}\in\mathcal W^n$ for each channel output $\ve y\in\mathcal Y^n$.
\end{itemize}

We assume that the messages $M, N$ from two users are uniformly chosen from the message sets. The \textit{average sum-decoding error probability} as
\begin{align}
P_e^{(n)}:=\sum_{\ve m,\ve n}\pp{M=\ve m,N=\ve n}\lambda(\ve m,\ve n)
\label{eq:error_proba}
\end{align}
where $\lambda(\ve m,\ve n)$ to denote the conditional sum-decoding error probability of this code if $\ve t(\ve m)+\ve v(\ve n)$ is the true sum codeword, i.e.
\begin{align}
\lambda(\ve m,\ve n):=\pp{\mathcal D(Y^n)\neq \ve t(\ve m)+\ve v(\ve n)| M=\ve m, N=\ve n}
\end{align}
A \textit{computation rate pair} $(R_1,R_2)$ is said to be achievable if there exists a sequence of $(2^{nR_1}, 2^{nR_2},n)$ computation codes in $\mathcal U^n$ such that $\lim_{n\rightarrow \infty}P_e^{(n)}=0$.

Similar problem has been studied using the compute-and-forward scheme \cite{NazerGastpar_2011} and  nested linear codes (\cite{padakandla_computing_2013}\cite{Nazer_compute_discrete_2014}\cite{ZhuGastpar_ITW}) where the modulo sum $\ve t\oplus \ve v$ is to be decoded. Here we study the problem of decoding the integer sum $\ve t+\ve v$ directly. First notice that the integer sum $\ve t+\ve v$ always allow us to recover the modulo sum $\ve t\oplus \ve v$. Another reason for insisting on decoding the integer sum is that it could be  more useful than a modulo sum in some scenario.  For example, consider an additive  interference network with multiple transmitter-receiver pairs where all transmitted signals are added up at receivers.  Because of the additivity of the channel, each receiver experiences interference which is the sum of signals of all other transmitters. In this case it is of interest to be able to decode the sum of the codewords because this is exactly the total interference each receiver suffers.


\begin{theorem}[Achievable computation rate pairs]
A computation rate pair $(R_1,R_2)$  is achievable in the two-user multiple access channel if it satisfies
\begin{align}
\max\{R_1,R_2\}&< I(U_1+U_2;Y)-D(q)
\label{eq:achievable_rate}
\end{align}
where $U_1, U_2$ are independent random variables with distribution  $P_U$ defined in (\ref{eq:P_U}) and the joint distribution $P_{U_1U_2Y}$ is given by $P_{U_1U_2Y}(u_1,u_2,y)=\sum_{x_1,x_2}P_U(u_1)P_U(u_2)P_{X_1|U}(x_1|u_1)P_{X_2|U}(x_2|u_2)P_{Y|X_1X_2}(y|x_1,x_2)$. Let $P_{X_1|U}, P_{X_2|U}$ be two arbitrary conditional probability distribution functions where $U$ and $X_1$ (resp. $X_2$) take values in $\mathcal U$ and $\mathcal X_1$ (resp. $\mathcal X_2$). The function $D(q)$ is defined in (\ref{eq:Dq}).
\label{thm:computation_rate}
\end{theorem}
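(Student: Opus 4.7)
The plan is to run a random-coding argument with a joint-typicality sum decoder, where the effective ``code for the sum'' is the normal typical sumset $\mathcal K_N^{(n)}$ supplied by Theorem \ref{thm:size_sumset}. First I would draw the nested pair $(\mathcal C_1, \mathcal C_2)$ of rates $(R_1, R_2)$ from the random-generator-matrix ensemble of Theorem \ref{thm:size_sumset}, so that asymptotically almost surely $|\mathcal K_N^{(n)}| \doteq 2^{n(\max\{R_1,R_2\}+D(q))}$ and the induced distribution $P_{\mathcal S}$ is essentially uniform on $\mathcal K_N^{(n)}$. User $j$ picks its message uniformly, maps it to a codeword $T_j^n\in\mathcal U^n$, and then forms the channel input by drawing $X_{j,i}\sim P_{X_j|U}(\cdot\,|T_{j,i})$ symbol-wise and independently. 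Because $T_{j,i}$ is marginally uniform on $\mathcal U$, the induced single-letter law of $(U_1,U_2,X_1,X_2,Y)$ agrees with the one in the theorem statement.

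Next I would specify the decoder: given $Y^n$, return the unique $\hat{\ve w}\in \mathcal K_N^{(n)}$ (if any) with $(\hat{\ve w},Y^n)\in\mathcal A_{[WY]}^{(n)}$ for the joint distribution $P_{WY}$ with $W:=U_1+U_2$, and otherwise declare an error. Correctness is tracked through four events: $E_0$, the drawn code fails to have a normal typical sumset (vanishes a.a.s.\ by Theorem \ref{thm:size_sumset}); $E_1$, the true sum $W^n=T_1^n+T_2^n$ lies outside $\mathcal K_N^{(n)}$ (vanishes by Definition \ref{def:typical_sumset}); $E_2$, the pair $(W^n,Y^n)$ is not jointly typical (vanishes by the LLN since each $(W_i,Y_i)$ is marginally $P_{WY}$); and $E_3$, some wrong $\hat{\ve w}\in \mathcal K_N^{(n)}\setminus\{W^n\}$ is jointly typical with $Y^n$.

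The main work is in bounding $E_3$. Applying a union bound over candidates and the joint-typicality packing estimate for each wrong candidate yields
\begin{align*}
P(E_3)\;\leq\;\bigl(|\mathcal K_N^{(n)}|-1\bigr)\cdot 2^{-n(I(W;Y)-\epsilon_n)}\;\doteq\; 2^{n(\max\{R_1,R_2\}+D(q)-I(W;Y)+\epsilon_n)},
\end{align*}
which tends to zero precisely when $\max\{R_1,R_2\}<I(U_1+U_2;Y)-D(q)$, matching the claimed rate region. Averaging over the code ensemble and the random symbol-wise mapping, the standard argument then extracts a deterministic code with vanishing average sum-decoding error.

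The hard part is the per-candidate packing estimate $P((\hat{\ve w},Y^n)\in\mathcal A_{[WY]}^{(n)})\leq 2^{-n(I(W;Y)-\epsilon_n)}$. Unlike a purely i.i.d.\ random codebook, here $\hat{\ve w}$ is a deterministic function of a wrong message pair through the shared random matrix $\ve G$, so different candidates are linearly correlated among themselves and with the transmitted codewords. The remedy is to average over $\ve G$ and invoke the computations of Section \ref{sec:proof} (in particular the parity-sum distribution established in the appendix lemmas) to show that, marginally over $\ve G$, each wrong sum codeword has entries i.i.d.\ $\sim P_W$ and is, pairwise, asymptotically independent of the true input $(T_1^n,T_2^n)$ and hence of $Y^n$. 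With this pairwise near-independence the standard joint-typicality lemma supplies the required exponential decay, and the theorem follows.
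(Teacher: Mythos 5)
Your overall plan matches the paper's in outline: draw the nested linear codes from the random-generator ensemble, invoke Theorem~\ref{thm:size_sumset} to get a normal typical sumset $\mathcal K_N$ of size $\doteq 2^{n(\max\{R_1,R_2\}+D(q))}$, map codewords symbol-wise to channel inputs, and decode $T_1^n+T_2^n$ by joint typicality over $\mathcal K_N$. The event decomposition $E_0,\dots,E_3$ is also aligned with the paper's.

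However, there is a genuine gap in your treatment of $E_3$. You claim that, averaging over $\ve G$, \emph{each} wrong sum codeword in $\mathcal K_N$ is pairwise asymptotically independent of $(T_1^n,T_2^n)$ and hence of $Y^n$, and you want to apply the same $2^{-n(I(W;Y)-\epsilon_n)}$ packing estimate uniformly across all $|\mathcal K_N|$ wrong candidates. That independence claim is false for candidates $\tilde W^n(\ve m',\ve n')$ in which one of the messages coincides with the transmitted one. Concretely, if $\ve m'=\ve m_1$ and $\ve n'\neq \ve n_1$, then $\tilde W^n(\ve m_1,\ve n')=T_1^n+U_2(\ve n')$ contains the \emph{actual} transmitted codeword $T_1^n$ as an additive component, so it is certainly not independent of $(T_1^n,T_2^n)$ or of $Y^n$, and the appendix lemmas on parity-sum distributions do not bail you out here — they describe marginals over random $\ve Q,\ve H$, not the joint law of a wrong candidate with the transmitted pair. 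The paper resolves exactly this issue by partitioning the wrong candidates into $\mathcal B_1$ (both messages fresh), $\mathcal B_2$ (same $\ve m_1$, new $\ve n'$), and $\mathcal B_3$ (same $\ve n_1$, new $\ve m'$), applying your packing argument only to $\mathcal B_1$, and handling $\mathcal B_2,\mathcal B_3$ with the separate computation $\pp{\tilde W^n(\ve m_1,\ve n')=\ve w_0\mid Y^n=\ve y,\mathcal M}=q^{-n}$. It is precisely this latter argument that yields the binding constraints $R_1,R_2<I(W;Y)-D(q)$; the $\mathcal B_1$ bound alone gives the weaker disjunction $R_1+R_2<I(W;Y)$ or $\max\{R_1,R_2\}<I(W;Y)-D(q)$, which would falsely enlarge the region if taken as sufficient. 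Relatedly, you omit the exclusion set $\mathcal L$ of linearly dependent message pairs from the decoder's search region; without it, the conditional-independence computation in Appendix~\ref{sec:independent} (which requires $\ve n'\neq \ve m_1$, etc.) does not go through, so $\mathcal L$ must be removed and shown to have vanishing probability under the true sum.
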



\begin{proof}
We provide the details of the proof by starting with the coding scheme:
\begin{itemize}
\item \textbf{Codebook generation.} Let $k_j=\lfloor nR_j/\log q\rfloor, j=1,2$ and represent messages from user $j$ using all $k_j$-length vectors in $\mathcal U^{k_j}$.  Assume $k_1\geq k_2$, for messages $\ve m$ from user 1 and $\tilde{\ve n}$ from user 2 we generate nested linear codes as
\begin{subequations}
\begin{align}
\mathcal C_1&:=\left\{\ve t :\ve t=\ve G\ve m\oplus \ve d_1, \mbox{ for all }\ve m\in \mathbb F_q^{k_1}\right\}\\
\mathcal C_2&:=\left\{\ve v :\ve v=\ve G\ve n=\ve G\begin{bmatrix}
\tilde{\ve n}\\ 
\ve 0
\end{bmatrix} \oplus \ve d_2, \mbox{ for all }\tilde{\ve n}\in \mathbb F_q^{k_2} \right\}
\end{align}
\label{eq:codebook_codingtheorem}
\end{subequations}
for some generator matrix $\ve G$ and two $n$-length vectors $\ve d_1, \ve d_2$. We use $\mathcal K_N$ to denote a normal typical sumset of $\mathcal C_1+\mathcal C_2$, if it exists.
\item \textbf{Encoding.} Fix two arbitrary conditional probability distribution functions $P_{X_1|U}, P_{X_2|U}$ where $U$ takes values in $\mathcal U$ and $X_1, X_2$ takes value in $\mathcal X_1,\mathcal X_2$, respectively.  Given a chosen message $\ve m$, user 1 picks the corresponding codeword $\ve t(\ve m)$ generated above, and transmit $\ve x_{1,i}(\ve t_i)$ at time $i$ where $\ve x_{1,i}$ is generated according to $P_{X_1|U}(\ve x_{1,i}|\ve t_i)$ independently for all $i=1,\ldots, n$. User 2 carries out the same encoding steps.
\item \textbf{Decoding.} Upon receiving the channel output $\ve y$, the decoder declares the sum codeword to be $\hat{\ve w}$ if it can find a unique $\hat{\ve w}$ satisfying the following
\begin{align}
(\hat{\ve w}, \ve y)\in \mathcal A_{[WY]}^{(n)} \mbox{ with } \hat{\ve w}\in\mathcal K_N\backslash\mathcal L
\label{eq:typicalitycheck_sumdecoding}
\end{align}
where the joint distribution $P_{WY}$  is defined as $P_{WY}(w,y):=\sum_{u_1,u_2}P_{U_1U_2Y}(u_1,u_2,y)\cdot \ve 1_{w=u_1+u_2}$ and the set $\mathcal L$ is defined as 
\begin{align}
\mathcal L=\{\ve G\ve m\oplus \ve d_1+\ve G\ve n\oplus \ve d_2, \ve m=c\ve n \mbox{ for some }c\in\mathbb F_q\}
\end{align}
Namely $\mathcal L$ contains the sum codewords resulting from two messages $\ve m, \ve n$ which are linearly dependent.   Otherwise an error is declared for the decoding process.

\end{itemize}

\textbf{Analysis of the probability of error.}  We analyze the average error probability over an ensemble of codes, namely the ensemble where the each entry of the generator matrix $\ve G$ and \textit{dither vectors} $\ve d_1, \ve d_2$  are generated independently and uniformly from $\mathbb F_q$. First notice that we can assume that two linearly independent messages $\ve m_1,\ve n_1$ are chosen and the corresponding channel inputs are used. To see this,  we rewrite the average sum-decoding error probability in (\ref{eq:error_proba}) for some $c\in\mathbb F_q$ as
\begin{align*}
P_e^{(n)}&\leq \sum_{(\ve m, \ve n): \ve m\neq c\ve n}\pp{M=\ve m, N=\ve n}\lambda(\ve m, \ve n)+\pp{M=c\cdot N}\\
&\leq \sum_{(\ve m, \ve n): \ve m\neq c\ve n}\pp{M=\ve m, N=\ve n}\lambda(\ve m, \ve n)+q\cdot 2^{-n\min(R_1,R_2)}
\end{align*}
and the last term vanish for positive rates $R_1,R_2$ and large enough $n$.


%

In the following we use $W^n(\ve m,\ve n)$ to denote $G\ve m\oplus d_1+G\ve n\oplus d_2$ with  randomly chosen $G, d_1, d_2$ and the true sum is $W_1^n:=G\ve m_1\oplus d_1+G\ve n_1\oplus d_2$ where the chosen message $\ve m_1, \ve n_1$ are linearly independent.  When consider the conditional error probability $\lambda(\ve m_1,\ve n_1)$,  there are three kinds of errors:
\begin{align*}
\mathcal E_1&:=\{\text{the codes generated by }G, d_1, d_2 \text{ does not have a normal typical sumset } \mathcal K_N\}\\
\mathcal E_2&:=\{W_1\notin \mathcal K_N\backslash\mathcal L\}\cap \overline{\mathcal E_1}\\
\mathcal E_3&:=\{(\tilde W^n,Y^n)\in\mathcal A_{[WY]}^{(n)}\text{ for some } \tilde W^n\in\mathcal K_N\backslash \mathcal L, \tilde W^n\neq W_1^n  \}\cap \overline{\mathcal E_1}.
\end{align*}
To lighten the notation we define the event $\mathcal M:=\{M=\ve m_1,N=\ve n_1\}$. Using the union bound we can upper bound the conditional sum-decoding error probability as
\begin{align}
\lambda(\ve m_1,\ve n_1)\leq \pp{\mathcal E_1|\mathcal M}+\pp{\mathcal E_2|\mathcal M}+\pp{\mathcal E_3|\mathcal M}
\label{eq:conditional_error}
\end{align}

It holds for  the error event $\mathcal E_1$ that
\begin{align}
\pp{\mathcal E_1|\mathcal M}=\pp{\mathcal E_1}\leq o(1).
\label{eq:error_1}
\end{align}
Indeed, it is easy to see that the sumset of $\mathcal C_1, \mathcal C_2$ generated in (\ref{eq:codebook_codingtheorem}) has the same size as the sumset of $\mathcal C_1\oplus \ve d_1, \mathcal C_2\oplus \ve d_1$. But Theorem \ref{thm:size_sumset} shows that a.a.s., the codes $\mathcal C_1\oplus \ve d_1, \mathcal C_2\oplus \ve d_1$ generated by a randomly chosen $G$ and any $d_1, d_2$ has a normal typical sumset $\mathcal K_N$.  It also holds for the error event $\mathcal E_2$ that 
\begin{align}
\pp{\mathcal E_2|\mathcal M}= \pp{W_1^n(\ve m_1,\ve n_1)\notin \mathcal K_N\backslash \mathcal L|\mathcal M}\leq o(1)
\label{eq:error_2}
\end{align}
because by the definition of the typical sumset,  the true sum codeword $W_1^n=G\ve m_1+G\ve n_1$ should fall into $\mathcal K_N$ a.a.s.. Also by the assumption that $\ve m_1, \ve n_1$ are linearly independent, the true sum $W^n_1$ does not belong to $\mathcal L$.

To investigate the error event $\mathcal E_3$, we further divide the set of $\tilde W^n$ satisfying the condition in  event $\mathcal E_3$ into the following three subclasses:
\begin{align*}
\mathcal B_{1}&:=\{\tilde W^n(\ve m',\ve n')\in\mathcal K_N\backslash\mathcal L: (\tilde W^n(\ve m',\ve n'),Y^n)\in\mathcal A_{[WY]}^{(n)}, \tilde W^n\neq W_1^n, \ve m'\neq \ve m_1, \ve m'\neq \ve n_1, \ve n'\neq \ve n_1, \ve n'\neq \ve m_1 \}\\
\mathcal B_2&:=\{\tilde W^n(\ve m_1,\ve n')\in\mathcal K_N\backslash\mathcal L: (\tilde W^n(\ve m_1,\ve n'),Y^n)\in\mathcal A_{[WY]}^{(n)},\tilde W^n\neq W_1^n, \ve n'\neq \ve n_1 \}\\
\mathcal B_3&:=\{\tilde W^n(\ve m',\ve n_1)\in\mathcal K_N\backslash\mathcal L: (\tilde W^n(\ve m',\ve n_1),Y^n)\in\mathcal A_{[WY]}^{(n)},\tilde W^n\neq W_1^n,  \ve m'\neq \ve m_1 \}
\end{align*}
Based on the decoding rule (\ref{eq:typicalitycheck_sumdecoding}) and the above classification,  we  can express the last term in  (\ref{eq:conditional_error}) as
\begin{align}
\pp{\mathcal E_3|\mathcal M}&\leq \pp{(\tilde W^n,Y^n)\in\mathcal A_{[WY]}^{(n)}\text{ for some } \tilde W^n\in\mathcal K_N\backslash\mathcal L, \tilde W^n\neq W_1^n |\mathcal M}\\
&\leq \sum_{j=1}^3\pp{\bigcup_{\tilde W^n \in\mathcal B_j} (\tilde W^n, Y^n)\in\mathcal A_{[WY]}^{(n)}\mid \mathcal M}
\label{eq:error_3}
\end{align} 
and analyze each term separately. For all $j=1,2,3$,  we can rewrite the term  $\pp{\cup_{\tilde W^n\in\mathcal B_j}(\tilde W^n, Y^n)\in\mathcal A^{(n)}_{[WY]}|\mathcal M}$ in the following way
\begin{align}
&\pp{\bigcup_{\tilde W^n\in\mathcal B_j}(\tilde W^n, Y^n)\in\mathcal A_{[WY]}^{(n)}\mid\mathcal M}\leq \sum_{\tilde W^n\in\mathcal B_j}\pp{\tilde W^n,Y^n)\in \mathcal A_{[WY]}^{{(n)}}\middle| \mathcal M}\\
&= \sum_{\tilde W^n\in\mathcal B_j}  \pp{Y^n\in \mathcal A_{[Y]}^{(n)}|\mathcal M}\pp{\tilde W^n,Y^n)\in \mathcal A_{[W|Y]}^{(n)}(\ve y)|Y^n\in \mathcal A_{[WY]}^{(n)},\mathcal M}\\
&= \sum_{\tilde W^n\in\mathcal B_j} \sum_{\ve y\in \mathcal A_{[Y]}^{(n)}}\pp{Y^n=\ve y|\mathcal M}\sum_{\ve w_0\in \mathcal A_{[W|Y]}^{(n)}(\ve y)}\pp{\tilde W^n=\ve w_0|Y^n=\ve y,\mathcal M}
\label{eq:common_error}
\end{align}
We show in Appendix \ref{sec:independent} that for $\tilde W^n(\ve m',\ve n')\in\mathcal B_1$ we have 
\begin{align}
\pp{\tilde W^n(\ve m',\ve n')=\ve w_0|Y^n=\ve y,\mathcal M}=\pp{\tilde W^n(\ve m',\ve n')=\ve w_0|\mathcal M}
\label{eq:proof_independent}
\end{align}
Namely,  $\tilde W^n(\ve m',\ve n')\in\mathcal B_1$ are (conditionally) independent from $Y^n$. Hence we can continue (\ref{eq:common_error}) as
\begin{align}
\pp{\bigcup_{\tilde W^n\in\mathcal B_1}(\tilde W^n, Y^n)\in\mathcal A_{[WY]}^{(n)}\mid\mathcal M}&\leq\sum_{\tilde W^n(\ve m',\ve n')\in\mathcal B_1} \sum_{\ve y\in \mathcal A_{[Y]}^{(n)}}\pp{Y^n=\ve y|\mathcal M}\sum_{\ve w_0\in \mathcal A_{[W|Y]}^{(n)}(\ve y)}\pp{\tilde W^n(\ve m',\ve n')=\ve w_0|\mathcal M}\\
&=\sum_{\tilde W^n(\ve m',\ve n')\in\mathcal B_1}  \pp{Y^n\in \mathcal A_{[Y]}^{(n)}|\mathcal M}\pp{\tilde W^n(\ve m',\ve n'),Y^n)\in \mathcal A_{[W|Y]}^{(n)}(\ve y)|\mathcal M}\\
&=\sum_{\tilde W^n(\ve m',\ve n')\in\mathcal B_1}  \pp{(\tilde W^n(\ve m',\ve n'),Y^n)\in \mathcal A_{[W,Y]}^{(n)}|\mathcal M}\\
&\leq |\mathcal K_N|2^{-n(I(W;Y)-\epsilon_n)}
\label{eq:bound_B_1}
\end{align}
where the last inequality follows as for independent $\tilde W^n, Y^n$ we have (see e.g. \cite[Ch. 2.5]{Elgamal_Kim_2011})
\begin{align}
\pp{(\tilde W^n(\ve m',\ve n'),Y^n)\in \mathcal A_{[W,Y]}^{(n)}|\mathcal M}\leq 2^{-n(I(W;Y)-\epsilon_n)}
\end{align}
and the fact that $|\mathcal B_1|\leq |\mathcal K_N|$.

In Appendix \ref{sec:independent} we also show that for $\tilde W^n(\ve m_1,\ve n')\in\mathcal B_2$ we have
\begin{align}
\pp{\tilde W^n(\ve m_1,\ve n')=\ve w_0|Y^n=\ve y,\mathcal M}=q^{-n}
\end{align}
and the same for  $\tilde W^n(\ve m',\ve n_1)\in\mathcal B_3$.

To bound   $\pp{\cup_{\tilde W^n\in\mathcal B_2}(\tilde W^n, Y^n)\in\mathcal A|\mathcal M}$, we continue with (\ref{eq:common_error}) as
\begin{align}
&\pp{\bigcup_{\tilde W^n\in\mathcal B_2}(\tilde W^n, Y^n)\in\mathcal A_{[WY]}^{(n)}\mid\mathcal M}\\
&\leq\sum_{\tilde W^n(\ve m_1,\ve n')\in\mathcal B_2} \sum_{\ve y\in \mathcal A_{[Y]}^{(n)}}\pp{Y^n=\ve y|\mathcal M}\sum_{\ve w_0\in \mathcal A_{[W|Y]}^{(n)}(\ve y)}\pp{\tilde W^n(\ve m_1,\ve n')=\ve w_0|Y^n=\ve y,\mathcal M}\\
&=\sum_{\tilde W^n(\ve m_1,\ve n')\in\mathcal B_2}  \sum_{\ve y\in \mathcal A_{[Y]}^{(n)}}\pp{Y^n=\ve y|\mathcal M}\sum_{\ve w_0\in \mathcal A_{[W|Y]}^{(n)}(\ve y)}q^{-n}\\
&\leq \sum_{\tilde W^n(\ve m_1,\ve n')\in\mathcal B_2} \sum_{\ve w_0\in \mathcal A_{[W|Y]}^{(n)}(\ve y)}q^{-n}\\
&\leq 2^{nR_2}\cdot 2^{n(H(W|Y)+\delta_n)}2^{-n\log q}\\
&= 2^{-n(R_2+H(W|Y)-\log q+\delta_n)}
\label{eq:bound_B_2}
\end{align}
where we have used the fact the cardinality  of the conditional typical set $\mathcal A^{(n)}_{[W|Y]}(\ve y)$ is upper bounded by $2^{n(H(W|U)+\delta_n)}$ for some $\delta_n\rightarrow 0$ and the fact that the number of sums of the form $\tilde W^n(\ve m_1,\ve n')$ is upper bounded by $2^{nR_2}$ because $\ve n'$ can only take $2^{nR_2}$ many values.
Using a similar argument we can show that
\begin{align}
\pp{\bigcup_{\tilde W^n(\ve m',\ve n_1)\in\mathcal B_3}(\tilde W^n(\ve m',\ve n_1), Y^n)\in\mathcal A_{[W,Y]}^{(n)}|\mathcal M}\leq 2^{-n(R_1+H(W|Y)-\log q+\delta_n)}
\label{eq:bound_B_3}
\end{align}

Combing (\ref{eq:error_proba}), (\ref{eq:conditional_error}), (\ref{eq:error_1}), (\ref{eq:error_2}), (\ref{eq:error_3}), (\ref{eq:bound_B_1}), (\ref{eq:bound_B_2}) and (\ref{eq:bound_B_3}),  we can finally upper bound the average sum-decoding error probability over the ensemble as
\begin{align}
P_e^{(n)}&\leq  |\mathcal K_N|2^{-n(I(W;Y)-\epsilon_n)}+2^{-n(R_2+H(W|Y)-\log q+\delta_n)}+2^{-n(R_1+H(W|Y)-\log q+\delta_n)}+o(1)
\label{eq:error_final_bound}
\end{align}

To obtain a vanishing error probability, the second and third term in the above expression impose the constraints
\begin{subequations}
\begin{align}
R_1&< \log q-H(W|Y)=H(W)-H(W|Y)-H(W)+\log q=I(W;Y)-D(q)\\
R_2&< \log q-H(W|Y)=H(W)-H(W|Y)-H(W)+\log q=I(W;Y)-D(q)
\end{align}
\label{eq:bound_r1_r2}
\end{subequations}
Using the result of Theorem \ref{thm:size_sumset} on the size of $|\mathcal K_N|$, the following bounds are obtained.
\begin{align}
|\mathcal K_N|2^{-n(I(W;Y)-\epsilon_n)}&\leq\min\left\{2^{n(R_1+R_2)}, 2^{n\left(\max\{R_1,R_2\}+D(q)\right)}\right\}2^{-n(I(W;Y)-\epsilon_n)}\\
&=\min\left\{2^{-n(I(W;Y)-R_1-R_2-\epsilon_n)}, 2^{-n\left(I(W;Y)-\max\{R_1,R_2\}-D(q)-\epsilon_n\right)}\right\}
\end{align}
The above quantity can be made arbitrarily small if we have either
\begin{align}
R_1+R_2<I(W;Y)-\epsilon_n
\label{eq:bound_low}
\end{align}
or
\begin{align}
\max\{R_1,R_2\}<I(Y;W)-D(q)-\epsilon_n
\label{eq:bound_high}
\end{align}
To conclude, in order to make the sum-decoding error probability in (\ref{eq:error_final_bound}) arbitrarily small, we need the individual rate $R_1,R_2$ to satisfy the condition in (\ref{eq:bound_r1_r2}), and the sum rate $R_1+R_2$ to satisfy either $(\ref{eq:bound_low})$ or $(\ref{eq:bound_high})$. The intersection of all these constraints gives the claimed result.
\end{proof}

\appendices

\section{Some properties of $D(q)$}
\label{sec:Dq}

The fact that $D(q)$ is increasing with $q$ can be shown straightforwardly by checking $D(q+1)\geq D(q)$ for all $q\in\mathbb N^+$. The sum $\sum_{i=1}^qi\log i$ can be bounded as
\begin{align}
\int_{1}^q x\log x dx+1\cdot\log 1\leq \sum_{i=1}^qi\log i\leq \int_{1}^q x\log x dx+q\cdot\log q
\end{align}
which evaluates to
\begin{align*}
\frac{q^2}{2}\log q-\log e(q^2/4+1/4)\leq \sum_{i=1}^qi\log i\leq \frac{q^2}{2}\log q-\log e(q^2/4+1/4)+q\log q
\end{align*}
Using the expression in (\ref{eq:H_W}) we have
\begin{align*}
\log q+\log \sqrt{e}-\frac{1+q\log q}{q^2}\leq H(U_1+U_2)\leq \log q+\log\sqrt{e}-\frac{1-q\log q}{q^2}.
\end{align*}
This shows that for $q\rightarrow \infty$ we have $H(U_1+U_2)\rightarrow \log q+\log \sqrt{e} $ hence $D(q)\rightarrow \log \sqrt{e}$.

\section{Conditional expectation and variance of $Z_1$}
\label{app:Z_1}
Here we prove the claim used in the proof of Lemma \ref{lemma:size_p1} on the conditional expectation and variance of $Z_1$.

Recall that in the proof of Lemma \ref{lemma:size_p1} we defined $Z_1(\ve p)$ to be the number of message pairs $(\ve m,\ve n')$ such that ${\sf p}_1(\ve m,\ve n')=\ve p$ for some $\ve p\in \mathcal A$.  Furthermore, we will only consider the pairs $(\ve m,\ve n')$ such that $\sf s(\ve m,\ve n')=\ve s$ for some $\ve s$. In Lemma \ref{lemma:information_sum} we have shown that there are $L$ pairs of such $(\ve m,\ve n')$. We use ${\sf p}_{1}(\ell)$ to denote the parity sum ${\sf p}_1$ of the $\ell$-th pair $(\ve m,\ve n')$, for $\ell=1,\ldots, L$.

For the analysis in this section, we have the following local definitions. For a  given vector $\ve p\in\mathcal W^{(k_1-k_2)}$, define the random variables $Z_{\ell,i}(\ve p), i\in[1:n-k_1]$ to be the indicator function 
\begin{align}
Z_{\ell, i}(\ve p):=\ve 1 \{{\sf p}_{1,i}(\ell)=\ve p_{i}\}
\end{align}
i.e., $Z_{\ell, i}(\ve p)$ equals $1$ when the $i$-th entry of the parity-sum ${\sf p}_1(\ell)$ is equal to the entry $\ve p_{i}$. Furthermore we define
\begin{align}
Z_\ell(\ve p)&:=\prod_{i=1}^{k_1-k_2}Z_{\ell,i}(\ve p)
\end{align}
hence $Z_{\ell}(\ve p)$ is also an indicator function and is equal to $1$ if the $\ell$-th pair sums up to the parity-sum $\ve p$. Then we can define $Z_1(\ve p)$ as
\begin{align*}
Z_1(\ve p)&:=\sum_{\ell=1}^LZ_\ell(\ve p).
\end{align*}
which indeed counts the number of different pairs $(\ve m,\ve n')$ satisfying $\sf s(\ve m,\ve n')=\ve s$ and ${\sf p}_1(\ve m,\ve n)=\ve p$. With this notation the event 
$\{\sf p_1(\ell)=\ve p\}$
is equivalent to the event
$\left\{Z_{\ell}(\ve p)=1\right\}$
and the  following event
\begin{align}
\{\ve p\in\mathcal P_1(\ve s)\}=\{{\sf p}_1(\ell)=\ve p \mbox{ for some }\ell\in[1:L]\}
\end{align}
is equivalent to the event 
$\left\{Z_1(\ve p)\geq 1 \right\}$.  Notice that the dependence on the information-sum $\sf s$ is omitted in above notations.

We calculate the conditional expectation $\E{Z_1(\ve p)|
\sf s(\ve m,\ve n')=\ve s}$ and conditional variance $\var{Z_1(\ve p)|\sf s(\ve m,\ve n')=\ve s}$ for a typical sequence $\ve p\in \mathcal A$. Now for a sequence $\ve p\in\mathcal  A$, by definition we have
\begin{align}
\E{Z_1(\ve p)|\sf s(\ve m,\ve n')=\ve s}&=\sum_{\ell=1}^L\E{\prod_{i=1}^{k_1-k_2}Z_{\ell,i}(\ve p)\middle|\sf s(\ve m,\ve n')=\ve s}\\
&\stackrel{(a)}{=}\sum_{\ell=1}^L\prod_{i=1}^{k_1-k_2}\E{Z_{\ell,i}(\ve p)\middle|\sf s(\ve m,\ve n')=\ve s}\\
&=\sum_{\ell=1}^L\prod_{i=1}^{k_1-k_2}\pp{{\sf p}_{1,i}(\ell)=\ve p_i\middle|{\sf s}(\ve m,\ve n')=\ve s}
\end{align}
where step $(a)$ follows because $Z_{\ell,i}$ are also independent for different $i$. To see this, notice that $i$-th row of of the parity-sum $\sf p_1$ is of the form $\ve m_{2,i}\oplus \ve d_{2,i}+\ve H_i^T\ve n'$.  Since $\ve m_{2,i}$ is independent for each $i$ and each row $\ve H_i$ is chosen independently from the other rows,  $Z_{\ell,i}$ is also independent for different $i$.

Now we use the set $I(\ve p,a)$ to denote all indices of entries of $\ve p$ taking value the $a\in\mathcal W$.  For a given $\ve p\in\mathcal A$, we can rewrite the product term as:
\begin{align}
\prod_{i}^{k_1-k_2}\pp{ {\sf p}_{1,i}(\ell)=\ve p_i|\sf s(\ve m,\ve n')=\ve s }&=\prod_{a=0}^{2q-2}\prod_{i\in I(\ve p, a)}\pp{ {\sf p}_{1,i}(\ell)=a|\sf s(\ve m,\ve n')=\ve s }
\end{align}
Recall that $\sf s(\ve m,\ve n')=\ve m_1\oplus \ve d_1+\ve n'$. Since each entry of $\ve H$ is chosen i.i.d. uniformly at random from $\mathbf F_q$, and $\ve m_1$ are independent from $\ve m_2$, then $\sf s(\ve m,\ve n')$ is also independent from ${\sf p}_{1}(\ve m,\ve n')$. Hence we have
\begin{align*}
\pp{ {\sf p}_{1,i}(\ell)=a|\sf s(\ve m,\ve n')=\ve s }&=\pp{ {\sf p}_{1,i}(\ell)=a}\\
&=P_W(a)
\end{align*}
The last step follows from the fact that ${\sf p}_{1,i}(\ell)$ has distribution $P_W$ (established in Lemma \ref{lemma:check_distr}). We are concerned with the case when $\ve p$ is a typical sequence in $\mathcal A^{(k_1-k_2)}_{[W]}$ hence $|I(\ve p,a)|=(k_1-k_2)(P_W(a)+o(1))$. We can continue as
\begin{align}
E(Z_{\ell}(\ve p)|\sf s(\ve m,\ve n')=\ve s)&=\prod_{i=1}^{k_1-k_2}\pp{ {\sf p}_{1,i}(\ell)=\ve p_i|\sf s(\ve m,\ve n')=\ve s }\\
&= \prod_{a=0}^{2q-2}\pp{{\sf p}_{1,i}(\ell)=a|\sf s(\ve m,\ve n')=\ve s}^{|I(\ve p, a)|}\\
&= \prod_{a=0}^{2q-2}(P_W(a))^{|I(\ve p, a)|}\\
&=\prod_{a=0}^{2q-2}P_W(a)^{(k_1-k_2)(P_W(a)+o(1))}\\
&=2^{-(k_1-k_2)(H(W)+o(1))}
\end{align}
Notice that $\E{Z_{\ell}(\ve p)|\sf s(\ve m,\ve n')=\ve s}$ does not depend on $\ell$ asymptotically. Using Lemma \ref{lemma:information_sum} we have:
\begin{align}
\E{Z_1(\ve p)|\sf s(\ve m,\ve n')=\ve s}&=\sum_{\ell=1}^L\E{Z_\ell(\ve p)|\sf s(\ve m,\ve n')=\ve s}\\
&=L2^{-(k_1-k_2)(H(W)+o(1))}\\
&= 2^{R_1+R_2-R_2H(W)/\log q -(R_1-R_2)H(W)/\log q+o(1)}\\
&=2^{n(R_2-R_1D(q)/\log q+o(1))}
\end{align}

To evaluate the variance, we first observe  that (here we drop $\ve p$ for simplicity)
\begin{align}
Z_1^2&=\left(\sum_{\ell=1}^LZ_{\ell}\right)^2\\
&=\sum_{\ell=1}^L Z_{\ell}^2+\sum_{\ell\neq j}Z_{\ell}Z_{j}\\
&=\sum_{\ell=1}^L  Z_{\ell}+\sum_{\ell\neq j}Z_{\ell}Z_{j} \\
&=Z_1+\sum_{\ell\neq j}Z_{\ell}Z_{j}
\end{align}
as $Z_\ell^2=\prod_i Z_{\ell,i}^2=\prod_i Z_{\ell,i}=Z_{\ell}$ for indicator functions. Furthermore, using the fact that ${\sf p}_1(\ve m,\ve n')$ and $\sf s(\ve m,\ve n')$ are independent, we have
\begin{align}
\E{Z_1^2 | \sf s(\ve m,\ve n')=\ve s}&=\E{Z_1^2}\\
&= \E{ Z_1}+\sum_{\ell\neq j}\E{Z_{\ell}Z_{j} }\\
&\stackrel{(a)}{=}\E{Z_1}+\sum_{\ell\neq j}\E{Z_{\ell}}\E{Z_{j}}\\
&\leq \E{Z_1}+\E{Z_1}^2\\
&=\E{Z_1|\sf s(\ve m,\ve n')=\ve s}+\E{Z_1|\sf s(\ve m,\ve n')=\ve s}^2
\end{align}
where step $(a)$ follows because $Z_\ell, Z_j$ are  independent for $\ell\neq j$. Hence we have
\begin{align}
\var{Z_1| \sf s(\ve m,\ve n')=\ve s}&=\E{(Z_1-\E{Z_1|\sf s(\ve m,\ve n')=\ve s})^2|\sf s(\ve m,\ve n')=\ve s}\\
&=\E{Z_1^2|\sf s(\ve m,\ve n')=\ve s}-\E{Z_1|\sf s(\ve m,\ve n')=\ve s}^2\\
&\leq \E{Z_1|\sf s(\ve m,\ve n')=\ve s}+\E{Z_1|\sf s(\ve m,\ve n')=\ve s}^2-\E{Z_1|\sf s(\ve m,\ve n')=\ve s}^2\\
&= \E{Z_1|\sf s(\ve m,\ve n')=\ve s}
\end{align}

\section{Conditional expectation and variance of $Z_2$}
\label{app:Z_2}
Here we prove the claim used in the proof of Lemma \ref{lemma:size_p2} on the conditional expectation and variance of $Z_2$. The proof is similar to that in Appendix \ref{app:Z_1}.

Recall that in the proof of Lemma \ref{lemma:size_p2} we defined $Z_2(\ve p)$ to be the number of message pairs $(\ve m,\ve n')$ such that ${\sf p}_2(\ve m,\ve n')=\ve p$ for some $\ve p\in \mathcal A^{(n-k_1)}_{[W]}$.  Furthermore, we are only concerned with the pairs $(\ve m,\ve n')$ such that $\sf s(\ve m,\ve n')=\ve s$ and ${\sf p}_1(\ve m,\ve n')=\ve p_1$ for some $\ve s$ and $\ve p_1$. We also showed in (\ref{eq:L_p_approx}) that there are $L'$ pairs of such $(\ve m,\ve n')$. We use ${\sf p}_{2}(\ell)$ to denote the parity sum ${\sf p}_2$ of the $\ell$-th pair $(\ve m,\ve n')$, for $\ell=1,\ldots, L'$.

For the analysis in this section, we have the following local definitions, which are similar to the definitions in Appendix \ref{app:Z_1}.  For a given vector $\ve p\in\mathcal W^{(n-k_1)}$, we  define random variables $Z_{\ell,i}(\ve h), i\in[1:n-k_1]$ to be the indicator function 
\begin{align}
Z_{\ell, i}(\ve p):=\ve 1\{{\sf p}_{2,i}(\ell)=\ve p_i\}
\end{align}
i.e., $Z_{\ell, i}(\ve p)$ equals $1$ when the $i$-th entry of the parity-sum ${\sf  p}_2(\ell)$ is equal to the entry $\ve p_i$. Furthermore we define
\begin{align}
Z_\ell(\ve p)&:=\prod_{i=1}^{n-k_1}Z_{\ell,i}(\ve p)
\end{align}
hence $Z_{\ell}(\ve p)$ is also an indicator function and is equal to $1$ if the $\ell$-th pair sums up to the parity-sum $\ve p$. Then we can define $Z_2(\ve p)$ as
\begin{align}
Z_2(\ve p):=\sum_{\ell=1}^{L'}Z_\ell(\ve p).
\end{align}
which indeed counts the number of different pairs $(\ve m,\ve n')$ satisfying $\sf s(\ve m,\ve n')=\ve s$, ${\sf p}_1(\ve m,\ve n)=\ve p_1$ and ${\sf p}_2(\ve m,\ve n')=\ve p$. With this notation the event 
$\{{\sf p}_2(\ell)=\ve p\}$
is equivalent to the event
$\left\{Z_{\ell}(\ve p)=1\right\}$
and the  following event
\begin{align}
\{\ve p\in\mathcal P_2(\ve s,\ve p_1)\}=\{{\sf p}_2(\ell)=\ve p \mbox{ for some }\ell\in[1:L']\}
\end{align}
is equivalent to the event 
$\left\{Z_2(\ve p)\geq 1 \right\}$.  Notice that the dependence on the sum $\ve s$ and $\ve p_1$ is omitted in above notations.

We calculate the conditional expectation 
$\E{Z_2(\ve p)|
F(\ve a)}$ and conditional variance $\var{Z_2(\ve p)|F(\ve a)}$ for typical sequence $\ve p\in \mathcal F(\ve a)$. Notice we have $\ve p_i\in\{\ve a_i,\ve a_i+q\}$ conditioned on the event $F(\ve a)$ for some $\ve a_i\in[0:q-1]$. Now for a sequence $\ve p\in\mathcal F(\ve a)$, by definition we have
\begin{align}
\E{Z_2(\ve p)|F(\ve a)}&=\sum_{\ell=1}^{L'}\E{\prod_{i=1}^{n-k_1}Z_{\ell,i}(\ve p)\middle|F(\ve a)}\\
&\stackrel{(a)}{=}\sum_{\ell=1}^{L'}\prod_{i=1}^{n-k_1}\E{Z_{\ell,i}(\ve p)\middle|F(\ve a)}\\
&=\sum_{\ell=1}^{L'}\prod_{i=1}^{n-k_1}\pp{ {\sf p}_{2,i}(\ell)=\ve p_i\middle|F(\ve a)}
\end{align}
where step $(a)$ follows since each row  $\ve Q_i$ is picked independently, hence $Z_{\ell,i}$ are also independent for different $i$.

We again use the set $I(\ve p,a)$ to denote all indices of entries of $\ve p$ taking the value $a\in\mathcal W$.  For a given $\ve p$, we can rewrite the product term as:
\begin{align}
\prod_{i}^{n-k_1}\pp{{\sf p}_{2,i}(\ell)=\ve p_i|F(\ve a) }&=\prod_{a=0}^{2q-2}\prod_{i\in I(\ve p, a)}\pp{ {\sf p}_{2,i}(\ell)=a|F(\ve a) }
\end{align}
For any $i\in I(\ve p, a)$ and any $\ell\in[1:L']$, we have
\begin{align*}
\pp{ {\sf p}_{2,i}(\ell)=a|F(\ve a)}&=\frac{\pp{ {\sf p}_{2,i}(\ell)=a,F(\ve a)}}{\pp{F(\ve a)}}\\
&=\frac{\pp{{\sf p}_{2,i}(\ell)=a, {\sf p}_{2,j}(\ell)\in\{\ve a_j,\ve a_j+q\} \mbox{ for all }j\in [1:n-k_1]}}{\pp{{\sf p}_{2,j}(\ell)\in\{\ve a_j,\ve a_j+q\} \mbox{ for all }j\in[1:n-k_1]}}\\
&\stackrel{(a)}{=}\frac{\pp{{\sf p}_{2,i}(\ell)=a, {\sf p}_{2,i}(\ell)\in\{\ve a_i=a,\ve a_i+q=a+q\}}}{\pp{ {\sf p}_{2,i}(\ell)\in\{\ve a_i,\ve a_i+q\}}}\cdot \frac{\pp{ {\sf p}_{2,j}(\ell)\in\{\ve a_j,\ve a_j+q\} \mbox{ for all }j\neq i}}{\pp{ {\sf p}_{2,j}(\ell)\in\{\ve a_j,\ve a_j+q\} \mbox{ for all }j\neq i}}\\
&=\frac{\pp{ {\sf p}_{2,i}(\ell)=a}}{\pp{ {\sf p}_{2,i}(\ell)\in\{a,a+q\}}}\\
&=P_W(a)\cdot q
\end{align*}
where step $(a)$ follows from the fact that $\ve p\in\mathcal F(\ve a)$ and  $Z_{\ell,i}$ are  independent for different $i$. The last step follows from the fact that ${\sf p}_{2,i}(\ell)$ has distribution $P_W$ (established in Lemma \ref{lemma:check_distr}) and it is easy to see that $\pp{{\sf p}_{2,i}(\ell)\in\{a,a+q\}}=1/q$ for all $a\in[0:q-1]$.

We are concern with the case when $\ve p$ is a typical sequence in $\mathcal A^{(n-k_1)}_{[W]}$ hence $|I(\ve p, a)|=(n-k)(P_W(a)+o(1))$. We can continue as
\begin{align}
E(Z_{\ell}(\ve p)|F(\ve a))&=\prod_{i=1}^{n-k_1}\pp{ {\sf p}_{2,i}(\ell)=\ve p_i|F(\ve a) }\\
&= \prod_{a=0}^{2q-2}\pp{{\sf p}_{2,i}(\ell)=a|F(\ve a)}^{|I(\ve p, a)|}\\
&= \prod_{a=0}^{2q-2}(P_W(a)\cdot q)^{|I(\ve p, a)|}\\
&=q^{\sum_{a=0}^{2q-2}|I(\ve p,a)|}\prod_{a=0}^{2q-2}P_W(a)^{(n-k+1)(P_W(a)+o(1))}\\
&=q^{n-k_1}\cdot 2^{(n-k_1)(-H(W)+o(1))}\\
&=2^{(n-k_1)(\log q-H(W)+o(1))}
\end{align}
Notice that $\E{Z_{\ell}(\ve p)|F(\ve a)}$ does not depend on $\ell$ asymptotically. Using $L'$ given in (\ref{eq:L_p_approx})  we have:
\begin{align}
\E{Z_2(\ve p)|F(\ve a) }&=\sum_{\ell=1}^{L'}\E{Z_\ell(\ve p)|F(\ve a)}\\
&=L'2^{(n-k_1)(\log q-H(W)+o(1))}\\
&=2^{n(R_2-R_1D(q)/\log q) }2^{n(R_1D(q)/\log q-D(q)+o(1))}\\
&=2^{n(R_2-D(q)+o(1))}
\end{align}

To evaluate the variance, by the same argument in the proof in Appendix \ref{app:Z_1} we have
\begin{align}
\E{Z_2^2\middle |F(\ve a)}&= \E{ Z_2 \middle |F(\ve a)}+\sum_{\ell\neq j}\E{Z_{\ell}Z_{j}\middle | F(\ve a)}\\
&\stackrel{(a)}{=}\E{Z_2|F(\ve a)}+\sum_{\ell\neq j}\E{Z_{\ell}\middle | F(\ve a)}\E{Z_{j}\middle | F(\ve a)}\\
&\leq \E{Z_2|F(\ve a)}+\E{Z_2|F(\ve a)}^2
\end{align}
where step $(a)$ follows because $Z_\ell, Z_j$ are conditionally independent for $\ell\neq j$, conditioned on the event $F(\ve a)$.
Hence we have
\begin{align}
\E{(Z_2-\E{Z_2|F(\ve a)})^2|F(\ve a)}&=\E{Z_2^2|F(\ve a)}-\E{Z_2|F(\ve a)}^2\\
&<\E{Z_2|F(\ve a)}+\E{Z_2|F(\ve a)}^2-\E{Z_2|F(\ve a)}^2\\
&= \E{Z_2|F(\ve a)}
\end{align}

\section{On the distribution of parity-sums}
\label{appendix:parity_sum}

Given randomly chosen  message pairs $(\ve m, \ve n')$, we analyze the distribution of the parity-sum ${\sf p}_1$ and ${\sf p}_2$ when the matrices $\ve Q, \ve H$ are chosen randomly.

\begin{lemma}[Distribution of parity-sum]
Let $(\ve m,\ve n')$ be two messages which are independently and uniformly chosen at random from $\mathbb F_q^{k_1}$ and $\mathbb F_q^{k_2}$ respectively. As in (\ref{eq:sum_systematic}), define the parity-sum ${\sf p}_1(\ve m,\ve n'):=\ve m_2\oplus \ve d_2+\ve H\ve n'$ and ${\sf p}_2(\ve m,\ve n'):= \ve Q\ve m\oplus\ve d_3+\ve Q\ve n$ with $\ve n:=\begin{bmatrix}
\ve n'\\
\ve H\ve n'
\end{bmatrix}$ for any fixed $\ve d_2,\ve d_3$. We assume that each entry of $\ve Q, \ve H$ are chosen independently and uniformly from $\mathbb F_q$. Then each entry of ${\sf p}_1$ and ${\sf p}_2$ is independent and has the distribution $p_W$ defined in (\ref{eq:P_W}).
\label{lemma:check_distr}
\end{lemma}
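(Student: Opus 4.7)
My plan is to reduce the claim on $\sf p_1$ and $\sf p_2$ to the defining identity of $P_W$: if $V_1, V_2$ are independent uniform on $\mathcal U$, then their integer sum $V_1 + V_2$ is distributed according to $P_W$. Concretely, I would write each entry $\sf p_{1,i}$ and $\sf p_{2,i}$ as an integer sum of two terms taking values in $\mathcal U$, verify that the two terms are independent and each marginally uniform on $\mathcal U$, and then conclude via convolution. Independence across entries will follow by observing that different entries use different, independent rows of the random matrices $\ve H$ and $\ve Q$, and different, independent coordinates of the random message vectors.

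For $\sf p_1$, I would decompose the $i$-th entry as $\sf p_{1,i} = A_i + B_i$ with $A_i = \ve m_{2,i} \oplus \ve d_{2,i}$ and $B_i = \sum_{j=1}^{k_2} \ve H_{i,j}\ve n'_j \bmod q$, the outer $+$ being integer addition. The term $A_i$ is uniform on $\mathcal U$ because XOR with the fixed offset $\ve d_{2,i}$ is a bijection of $\mathbb F_q$ applied to the uniform $\ve m_{2,i}$. For $B_i$, I would condition on any fixed $\ve n' \neq 0$ and pick a coordinate $j^\star$ with $\ve n'_{j^\star} \neq 0$; since $q$ is prime, $\ve n'_{j^\star}$ is a unit, so $\ve H_{i,j^\star}\,\ve n'_{j^\star}$ is uniform on $\mathbb F_q$ and remains so after adding the remaining (independent) terms, giving $B_i \mid \ve n' \neq 0$ uniform on $\mathbb F_q$. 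The variables $A_i$ and $B_i$ depend on disjoint random inputs (the coordinate $\ve m_{2,i}$ versus the $i$-th row of $\ve H$ together with $\ve n'$), so they are independent, and their integer sum follows $P_W$. For the joint law across $i$, conditional on $\ve n'$ the entries $\sf p_{1,i}$ are mutually independent because they use disjoint rows of $\ve H$ and disjoint coordinates $\ve m_{2,i}$; since the conditional marginal equals $P_W$ for every $\ve n' \neq 0$, unconditioning preserves both the independence and the common marginal.

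For $\sf p_2$, I would write $\sf p_{2,i} = C_i + D_i$ with $C_i = (\ve Q\ve m)_i \oplus \ve d_{3,i}$ and $D_i = (\ve Q \ve n)_i$, where $\ve n$ denotes the concatenation of $\ve n'$ with $\ve H \ve n'$. The subtlety is that $C_i$ and $D_i$ both depend on the same random $i$-th row of $\ve Q$, so I cannot simply appeal to disjoint randomness. The key linear-algebraic observation is that conditional on any fixed $\ve m, \ve n \in \mathbb F_q^{k_1}$ that are linearly independent over $\mathbb F_q$, the $\mathbb F_q$-linear map sending the $i$-th row of $\ve Q$ to the pair $((\ve Q\ve m)_i, (\ve Q\ve n)_i) \in \mathbb F_q^2$ is surjective (its matrix has rows $\ve m^T, \ve n^T$, which are linearly independent), so the image of a uniform row is uniform on $\mathbb F_q^2$. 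Hence $C_i$ and $D_i$ are independent and uniform on $\mathcal U$, and their integer sum is $P_W$-distributed. Independence across $i$ follows because different entries of $\sf p_2$ use different, independent rows of $\ve Q$. The main obstacle I anticipate is the pair of exceptional events, $\ve n' = 0$ for $\sf p_1$ and $(\ve m, \ve n)$ linearly dependent for $\sf p_2$: each has probability $q^{-\Theta(n)}$, so the claimed equality to $P_W$ should be interpreted as agreement in total variation up to $q^{-\Theta(n)}$, which is exactly what the downstream typicality applications in Lemma~\ref{lemma:typical_matter} and Lemma~\ref{lemma:size_p2} require.
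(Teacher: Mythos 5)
Your proof is correct and follows the same convolution strategy as the paper (write each entry of $\sf p_1,\sf p_2$ as an integer sum of two $\mathbb F_q$-valued terms, show each is uniform, conclude the sum is $P_W$), but it is noticeably tighter on the one step the paper glosses over. For $\sf p_2$ the paper only asserts that $\ve Q_i^T\ve m$ and $\ve Q_i^T\ve n$ are each marginally uniform, which by itself does \emph{not} imply that their integer sum has law $P_W$; one needs the two terms to be jointly uniform (equivalently, independent) on $\mathbb F_q^2$. Your argument supplies exactly what is missing: conditional on $\ve m,\ve n$ linearly independent, the row $\ve Q_i\mapsto(\ve Q_i^T\ve m,\ve Q_i^T\ve n)$ is a surjective $\mathbb F_q$-linear map $\mathbb F_q^{k_1}\to\mathbb F_q^2$, so the image of a uniform row is uniform on $\mathbb F_q^2$, and then $C_i$ and $D_i$ are genuinely independent uniforms whose sum has law $P_W$. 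You are also more explicit than the paper about the exceptional events ($\ve n'=\ve 0$ for $\sf p_1$; $\ve m,\ve n$ linearly dependent for $\sf p_2$), observing that they carry probability $q^{-\Theta(n)}$ and that the lemma's conclusion should therefore be read as holding in total variation up to that vanishing error — which is indeed all that the downstream typicality arguments use. The paper compresses this to the phrase ``for large $k$''; your version makes the same asymptotic content precise.
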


\begin{proof}
We first consider the parity-sum ${\sf p_1}$. Since $\ve m$ are chosen uniformly from $\mathbb F_q^{k_1}$ and each row $\ve H$ is independently chosen, each entry of ${\sf p_1}$ is also independent.   The $i$-th entry of $\ve H\ve n'$ is of the form
\begin{align*}
\ve H_{i1}\ve n_1'\oplus \ve H_{i2}\ve n'_2\oplus \ldots\oplus \ve H_{ik_2}\ve n'_{k_2}
\end{align*}
which is uniformly distributed in $\mathbb F_q$ for large $k_2$ since each $H_{ij}$ is chosen independently uniformly from $\mathbb F_q$. Furthermore since each entry of $\ve m_2$ is i.i.d. in $\mathbb F_q$, then each entry of ${\sf p}_1$ is i.i.d. according to $p_W$.

For each entry of the parity-sum ${\sf p_2}$, we write out its $i$-th entry explicitly as $\ve Q_i^T\ve m\oplus \ve d_{3,i}+\ve Q_i^T\ve n$ where
\begin{align}
\ve Q_i^T\ve m\oplus \ve d_{3,i}&=\ve Q_{i1}\ve m_{1}\oplus \cdots\oplus \ve Q_{ik_1}\ve m_{k_1}\oplus \ve d_{3,i}\\
\ve Q_i^T\ve n&=\ve Q_{i1}\ve n_{1}\oplus \cdots\oplus \ve Q_{ik_1}\ve n_{k_1}
\end{align}
Since each row $\ve Q_i$ is independently chosen and both $\ve Q_i^T\ve m$ and $\ve Q_i^T\ve n$ has the uniform distribution in $\mathbb F_q$ for large $k_1$, we also conclude that each entry of ${\sf p}_2$ is i.i.d. according to $p_W$.
\end{proof}

\section{Derivations in the proof of Theorem \ref{thm:computation_rate}}
\label{sec:independent}

Here we prove the statement  in the proof of Theorem \ref{thm:computation_rate}. Recall that $M=\ve m_1$, $N=\ve n_1$ are two different chosen messages and $\tilde W^n(\ve m',\ve n'):=G\ve m'\oplus d_1+G\ve n'\oplus d_2\neq G\ve m_1\oplus d_1+G\ve n_1\oplus d_2$ where $G$ and $d_1, d_2$ are the randomly chosen generator matrix and dither vectors.  To lighten the notation, in this section we define $U_1(\ve m):=G\ve m\oplus d_1$ and $U_2(\ve n):=G\ve n\oplus d_2$.

We first prove that for $\tilde W(\ve m',\ve n')\in\mathcal B_1$, we have
\begin{align}
\pp{\tilde W^n(\ve m',\ve n')=\ve w_0|Y^n=\ve y,\mathcal M}=\pp{\tilde W^n(\ve m',\ve n')=\ve w_0\mid \mathcal M}.
\end{align}
which is equivalent to
\begin{align}
\pp{\tilde W^n(\ve m',\ve n')=\ve w_0,Y^n=\ve y| \mathcal M}=\pp{\tilde W^n(\ve m',\ve n')=\ve w_0|\mathcal M}\pp{Y^n=\ve y|\mathcal M}
\label{eq:inde_equivalent}
\end{align}

This is shown straightforwardly as
\begin{align}
&\pp{\tilde W^n(\ve m',\ve n')=\ve w_0,Y^n=\ve y|\mathcal M}\\
&=\sum_{\ve t,\ve v}\pp{\tilde W^n(\ve m',\ve n')=\ve w_0,Y^n=\ve y, U_1(\ve m_1)=\ve t,U_2(\ve n_1)=\ve v|\mathcal M}\\
&=\sum_{\ve t,\ve v}\pp{\tilde W^n(\ve m',\ve n')=\ve w_0, U_1(\ve m_1)=\ve t,U_2(\ve n_1)=\ve v|\mathcal M}\pp{Y^n=\ve y|\tilde W^n(\ve m',\ve n')=\ve w_0, U_1(\ve m_1)=\ve t,U_2(\ve n_1)=\ve v,\mathcal M}\\
&=\sum_{\ve t,\ve v}\pp{W'(\ve m',\ve n')=\ve w_0, U_1(\ve m_1)=\ve t,U_2(\ve n_1)=\ve v|\mathcal M}\pp{Y^n=\ve y|U_1(\ve m_1)=\ve t,U_2(\ve n_1)=\ve v,\mathcal M}\\
&=\sum_{\ve t,\ve v}\sum_{\stackrel{\ve t',\ve v'}{\ve t'+\ve v'=\ve w_0}}\pp{U_1(\ve m')=\ve t',U_2(\ve n')=\ve v',U_1(\ve m_1)=\ve t,U_2(\ve n_1)=\ve v|\mathcal M}\pp{Y^n=\ve y|U_1(\ve m_1)=\ve t,U_2(\ve n_1)=\ve v|\mathcal M}
\label{eq:inde_substitute}
\end{align}
In this case we have
\begin{align}
&\sum_{\stackrel{\ve t',\ve v'}{\ve t'+\ve v'=\ve w_0}}\pp{G\ve m'\oplus d_1=\ve t',G\ve n'\oplus d_2=\ve v',G\ve m_1\oplus d_1=\ve t,G\ve n_1\oplus d_2=\ve v|\mathcal M}\\
&\stackrel{(a)}{=}\sum_{\stackrel{\ve t',\ve v'}{\ve t'+\ve v'=\ve w_0}}\pp{G\ve m'\oplus d_1=\ve t',G\ve n'\oplus d_2=\ve v'}\pp{G\ve m_1\oplus d_1=\ve t,G\ve n_1\oplus d_2=\ve v}\\
&=\pp{U_1(\ve m')+U_2(\ve n')=\ve w_0|\mathcal M}\pp{U_1(\ve m_1)=\ve t,U_2(\ve n_1)=\ve v|\mathcal M}
\label{eq:independent}
\end{align}
where $(a)$ holds because for randomly chosen $G, d_1, d_2$ and the assumption that  $\ve m', \ve n'$ are different from $\ve m_1,\ve n_1$ and linearly independent, the random variables $(U_1(\ve m'), U_2(\ve n'))$ are independent  from $(U_1(\ve m_1), U_2(\ve n_1))$. 
Substituting it back to (\ref{eq:inde_substitute}) we have
\begin{align*}
&\pp{W'(\ve m',\ve n')=\ve w_0,Y^n=\ve y|\mathcal M}\\
&=\sum_{\ve t,\ve v}\pp{W'(\ve m',\ve n')=\ve w_0|\mathcal M}\pp{U_1(\ve m_1)=\ve t,U_2(\ve n_1)=\ve v|\mathcal M}\pp{Y^n=\ve y|U_1(\ve m_1)=\ve t,U_2(\ve n_1)=\ve v,\mathcal M}\\
&=\pp{W'(\ve m',\ve n')=\ve w_0|\mathcal M}\pp{Y^n=\ve y|\mathcal M}
\end{align*}
which proves the claim in (\ref{eq:inde_equivalent}).

We prove that for $\tilde W^{n}(\ve m_1, \ve n')\in\mathcal B_2$, we have
\begin{align}
\pp{\tilde W^n(\ve m_1,\ve n')=\ve w_0|Y^n=\ve y,\mathcal M}=q^{-n}
\end{align}
Using the same derivation as above, we arrive at
\begin{align}
&\pp{\tilde W^n(\ve m_1,\ve n')=\ve w_0,Y^n=\ve y|\mathcal M}\\
&=\sum_{\ve t,\ve v}\sum_{\stackrel{\ve v'}{\ve t+\ve v'=\ve w_0}}\pp{U_2(\ve n')=\ve v',U_1(\ve m_1)=\ve t,U_2(\ve n_1)=\ve v|\mathcal M}\pp{Y^n=\ve y|U_1(\ve m_1)=\ve t,U_2(\ve n_1)=\ve v,\mathcal M}
\end{align}
Furthermore we have
\begin{align}
&\pp{G\ve n'\oplus d_2=\ve v',G\ve m_1\oplus d_1=\ve t,G\ve n_1\oplus d_2=\ve v|\mathcal M}\\
&=\pp{G\ve m_1\oplus d_1=\ve t,G\ve n_1\oplus d_2=\ve v}\pp{G\ve n'\oplus d_2=\ve v'|G\ve m_1\oplus d_1=\ve t,G\ve n_1\oplus d_2=\ve v}\\
&=\pp{G\ve m_1\oplus d_1=\ve t,G\ve n_1\oplus d_2=\ve v}\pp{G\ve n'\oplus d_2=\ve v'}
\end{align}
The last equality holds because for $\ve n'$ different from $\ve n_1$ and $\ve m_1$ (we assume $\tilde W^n\notin\mathcal L$ hence $\ve n'$ cannot be equal to $\ve m_1$), $U_2(\ve n')$ is independent from $U_1(\ve m_1), U_2(\ve n_1)$ if $G, d_1, d_2$ are chosen randomly. Using the fact that $\pp{G\ve n'\oplus d_2=\ve v'}=q^{-n}$, we have
\begin{align}
\pp{\tilde W^n(\ve m_1,\ve n')=\ve w_0,Y^n=\ve y|\mathcal M}&=\sum_{\ve t,\ve v}\sum_{\stackrel{\ve v'}{\ve t+\ve v'=\ve w_0}}q^{-n} \pp{Y^n=\ve y, U_1(\ve m_1)=\ve t, U_2(\ve n_1)=\ve v|\mathcal M}\\
&=q^{-n}\pp{Y^n=\ve y|\mathcal M}
\end{align}
Finally we conclude that
\begin{align}
\pp{\tilde W^n(\ve m_1,\ve n')=\ve w_0|Y^n=\ve y,\mathcal M}=q^{-n}
\end{align}

\section*{Acknowledgment}

The authors wish to thank Sung Hoon Lim for many helpful discussions.


\bibliographystyle{IEEEtran}
\bibliography{IEEEabrv,SumsetLinearCodes}

\end{document}